\def \BF #1{\textbf{{#1}}}
\newenvironment{bprooftree}
{\leavevmode\hbox\bgroup}
{\DisplayProof\egroup}
\title{Explicit Computational Paths}
\author{Arthur F. Ramos \and Ruy J. G. B. de Queiroz \and Anjolina G. de Oliveira}
\institute{Centro de Inform\'atica\\
	Universidade Federal de Pernambuco\\
	\email{afr@cin.ufpe.br}\\
	\email{ruy@cin.ufpe.br}\\
	\email{ago@cin.ufpe.br}\\
}
\begin{document}

	\maketitle
	
	\begin{abstract}
		The treatment of equality as a type in type theory gives rise to an interesting type-theoretic structure known as `identity type'. The idea is that, given terms $a,b$ of a type $A$, one may form the type $Id_{A}(a,b)$, whose elements are proofs that $a$ and $b$ are equal elements of type $A$. A term of this type, $p : Id_{A}(a,b)$, makes up for the grounds (or proof) that establishes that $a$ is indeed equal to $b$. Based on that, a proof of equality can be seen as a sequence of substitutions and rewrites, also known as a `computational path'. One interesting fact is that it is possible to rewrite computational paths using a set of reduction rules arising from an analysis of redundancies in paths. These rules were mapped by De Oliveira in 1994 in a term rewrite system known as $LND_{EQ}-TRS$. Here we use computational paths and this term rewrite system to develop the main foundations of homotopy type theory, i.e., we develop the lemmas and theorems connected to the main types of this theory, types such as products, coproducts, identity type, transport and many others. We also show that it is possible to directly construct path spaces through computational paths. To show this, we construct the natural numbers and the fundamental group of the circle, showing results connected to these structures.\\
		\smallskip
		\noindent \textbf{Keywords.} Type theory, computational paths, homotopy type theory, Identity type, fundamental group of the circle, path space of natural numbers,  term rewriting systems.
	\end{abstract}

	\section{Introduction}\label{intro}

There seems to be little doubt that the identity type is one of the most intriguing concepts of  Martin-L\"of's Type Theory. This claim is supported by recent groundbreaking discoveries. In 2005, Vladimir Voevodsky \cite{Vlad1} discovered the Univalent Models, resulting in a new area of research known as homotopy type theory \cite{Steve1}. This theory is based on the fact that a term of some identity type, for example $p: Id_{A}(a,b)$, has a clear homotopical interpretation. The interpretation is that the witness $p$ can be seen as a homotopical path between the points $a$ and $b$ within a topological space $A$. This simple interpretation has made clear the connection between type theory and homotopy theory, generating groundbreaking results, as one can see in \cite{hott,Steve1}. Nevertheless, it is important to emphasize that 
the homotopic paths exist only in the semantic sense. In other words, there is no formal entity in type theory that represents these paths. They are not present in the syntax of type theory.

In this work, we are interested in an entity known as computational path, originally proposed by \cite{Ruy4}. A computational path is an entity that establishes the equality between two terms of the same type. It differs from the homotopical path, since it is not only a semantic interpretation. It is a formal entity of the equality theory. In fact, we proposed in \cite{Ruy1} that it should be considered as the type of the identity type. Moreover, we have further developed this idea in \cite{Art3}, where we proposed a groupoid model and proved that computational paths also refute the uniqueness of identity proofs. Thus, we obtained a result  that is on par with the same one obtained by Hofmann \& Streicher (1995) for the original identity type \cite{hofmann2}.

Our main idea in this work is to develop further our previous results. Specifically, we want to focus on the foundations of homotopy type theory. Our objective is to develop the main building blocks of this theory using computational paths. To do this, we prove quite a few lemmas and theorems of homotopy type theory involving the basic types, such as products, coproducts, transport, etc. We thus proceed to show that computational paths can be directly used to simulate path spaces. We argue that it is one of the main advantages of our approach, since it avoids the use of complicated techniques such as the code-encode-decode one. To illustrate that, we work with the natural numbers and with the fundamental group of the circle, showing how one can construct these structures through computational paths.

This work is structured as thus: in the sections 2, 3 and 4, we review the concept of computational paths and its connection to the identity type in type theory. In section 5, we use computational paths to establish the foundations of homotopy type theory. Since sections 2, 3 and 4 are only brief introductions to the theory of computational paths, we refer to papers \cite{Ruy1} and \cite{Art3} for a thoroughly introduction to this subject. 

\section{Computational Paths} \label{path}

Since computational path is a generic term, it is important to emphasize the fact that we are using the term computational path in the sense defined by \cite{Ruy5}. A computational path is based on the idea that it is possible to formally define when two computational objects $a,b : A$ are equal. These two objects are equal if one can reach $b$ from $a$ applying a sequence of axioms or rules. This sequence of operations forms a path. Since it is between two computational objects, it is said that this path is a computational one. Also, an application of an axiom or a rule transforms (or rewrite) an term in another. For that reason, a computational path is also known as a sequence of rewrites. Nevertheless, before we define formally a computational path, we can take a look at one famous equality theory, the $\lambda\beta\eta-equality$ \cite{lambda}:

\begin{definition}
	The \emph{$\lambda\beta\eta$-equality} is composed by the following axioms:
	
	\begin{enumerate}
		\item[$(\alpha)$] $\lambda x.M = \lambda y.M[y/x]$ \quad if $y \notin FV(M)$;
		\item[$(\beta)$] $(\lambda x.M)N = M[N/x]$;
		\item[$(\rho)$] $M = M$;
		\item[$(\eta)$] $(\lambda x.Mx) = M$ \quad $(x \notin FV(M))$.
	\end{enumerate}
	
	And the following rules of inference:

	\bigskip
	\noindent
	\begin{bprooftree}
		\AxiomC{$M = M'$ }
		\LeftLabel{$(\mu)$ \quad}
		\UnaryInfC{$NM = NM'$}
	\end{bprooftree}
	\begin{bprooftree}
		\AxiomC{$M = N$}
		\AxiomC{$N = P$}
		\LeftLabel{$(\tau)$}
		\BinaryInfC{$M = P$}
	\end{bprooftree}
	
	\bigskip
	\noindent
	\begin{bprooftree}
		\AxiomC{$M = M'$ }
		\LeftLabel{$(\nu)$ \quad}
		\UnaryInfC{$MN = M'N$}
	\end{bprooftree}
	\begin{bprooftree}
		\AxiomC{$M = N$}
		\LeftLabel{$(\sigma)$}
		\UnaryInfC{$N = M$}
	\end{bprooftree}
	
	\bigskip
	\noindent
	\begin{bprooftree}
		\AxiomC{$M = M'$ }
		\LeftLabel{$(\xi)$ \quad}
		\UnaryInfC{$\lambda x.M= \lambda x.M'$}
	\end{bprooftree}
	
	
	
	
\end{definition}





\begin{definition}( \cite{lambda})
	$P$ is $\beta$-equal or $\beta$-convertible to $Q$  (notation $P=_\beta Q$)
	iff $Q$ is obtained from $P$ by a finite (perhaps empty)  series of $\beta$-contractions
	and reversed $\beta$-contractions  and changes of bound variables.  That is,
	$P=_\beta Q$ iff \textbf{there exist} $P_0, \ldots, P_n$ ($n\geq 0$)  such that
	$P_0\equiv P$,  $P_n\equiv Q$,
	$(\forall i\leq n-1) (P_i\triangleright_{1\beta}P_{i+1}  \mbox{ or }P_{i+1}\triangleright_{1\beta}P_i  \mbox{ or } P_i\equiv_\alpha P_{i+1}).$
\end{definition}
\noindent (NB: equality with an \textbf{existential} force, which will show in the proof rules for the identity type.)

The same happens with $\lambda\beta\eta$-equality:\\
\begin{definition}($\lambda\beta\eta$-equality \cite{lambda})
	The equality-relation determined by the theory $\lambda\beta\eta$ is called
	$=_{\beta\eta}$; that is, we define
	$$M=_{\beta\eta}N\quad\Leftrightarrow\quad\lambda\beta\eta\vdash M=N.$$
\end{definition}

\begin{example}
	Take the term $M\equiv(\lambda x.(\lambda y.yx)(\lambda w.zw))v$. Then, it is $\beta\eta$-equal to $N\equiv zv$ because of the sequence:\\
	$(\lambda x.(\lambda y.yx)(\lambda w.zw))v, \quad  (\lambda x.(\lambda y.yx)z)v, \quad   (\lambda y.yv)z , \quad zv$\\
	which starts from $M$ and ends with $N$, and each member of the sequence is obtained via 1-step $\beta$- or $\eta$-contraction of a previous term in the sequence. To take this sequence into a {\em path\/}, one has to apply transitivity twice, as we do in the example below.
\end{example}

\begin{example}\label{examplepath}
	The term $M\equiv(\lambda x.(\lambda y.yx)(\lambda w.zw))v$ is $\beta\eta$-equal to $N\equiv zv$ because of the sequence:\\
	$(\lambda x.(\lambda y.yx)(\lambda w.zw))v, \quad  (\lambda x.(\lambda y.yx)z)v, \quad   (\lambda y.yv)z , \quad zv$\\
	Now, taking this sequence into a path leads us to the following:\\
	The first is equal to the second based on the grounds:\\
	$\eta((\lambda x.(\lambda y.yx)(\lambda w.zw))v,(\lambda x.(\lambda y.yx)z)v)$\\
	The second is equal to the third based on the grounds:\\
	$\beta((\lambda x.(\lambda y.yx)z)v,(\lambda y.yv)z)$\\
	Now, the first is equal to the third based on the grounds:\\
	$\tau(\eta((\lambda x.(\lambda y.yx)(\lambda w.zw))v,(\lambda x.(\lambda y.yx)z)v),\beta((\lambda x.(\lambda y.yx)z)v,(\lambda y.yv)z))$\\
	Now, the third is equal to the fourth one based on the grounds:\\
	$\beta((\lambda y.yv)z,zv)$\\
	Thus, the first one is equal to the fourth one based on the grounds:\\
	$\tau(\tau(\eta((\lambda x.(\lambda y.yx)(\lambda w.zw))v,(\lambda x.(\lambda y.yx)z)v),\beta((\lambda x.(\lambda y.yx)z)v,(\lambda y.yv)z)),\beta((\lambda y.yv)z,zv)))$.
\end{example}


The aforementioned theory establishes the equality between two $\lambda$-terms. Since we are working with computational objects as terms of a type, we need to translate the $\lambda\beta\eta$-equality to a suitable equality theory based on Martin L\"of's type theory. We obtain:

\begin{definition}
	The equality theory of Martin L\"of's type theory has the following basic proof rules for the $\Pi$-type:
	
	\bigskip
	
	\noindent
	\begin{bprooftree}
		\hskip -0.3pt
		\alwaysNoLine
		\AxiomC{$N : A$}
		\AxiomC{$[x : A]$}
		\UnaryInfC{$M : B$}
		\alwaysSingleLine
		\LeftLabel{$(\beta$) \quad}
		\BinaryInfC{$(\lambda x.M)N = M[N/x] : B[N/x]$}
	\end{bprooftree}
	\begin{bprooftree}
		\hskip 11pt
		\alwaysNoLine
		\AxiomC{$[x : A]$}
		\UnaryInfC{$M = M' : B$}
		\alwaysSingleLine
		\LeftLabel{$(\xi)$ \quad}
		\UnaryInfC{$\lambda x.M = \lambda x.M' : \Pi_{(x : A)}B$}
	\end{bprooftree}
	
	\bigskip
	
	\noindent
	\begin{bprooftree}
		\hskip -0.5pt
		\AxiomC{$M : A$}
		\LeftLabel{$(\rho)$ \quad}
		\UnaryInfC{$M = M : A$}
	\end{bprooftree}
	\begin{bprooftree}
		\hskip 100pt
		\AxiomC{$M = M' : A$}
		\AxiomC{$N : \Pi_{(x : A)}B$}
		\LeftLabel{$(\mu)$ \quad}
		\BinaryInfC{$NM = NM' : B[M/x]$}
	\end{bprooftree}
	
	\bigskip
	
	\noindent
	\begin{bprooftree}
		\hskip -0.5pt
		\AxiomC{$M = N : A$}
		\LeftLabel{$(\sigma) \quad$}
		\UnaryInfC{$N = M : A$}
	\end{bprooftree}
	\begin{bprooftree}
		\hskip 105pt
		\AxiomC{$N : A$}
		\AxiomC{$M = M' : \Pi_{(x : A)}B$}
		\LeftLabel{$(\nu)$ \quad}
		\BinaryInfC{$MN = M'N : B[N/x]$}
	\end{bprooftree}
	
	\bigskip
	
	\noindent
	\begin{bprooftree}
		\hskip -0.5pt
		\AxiomC{$M = N : A$}
		\AxiomC{$N = P : A$}
		\LeftLabel{$(\tau)$ \quad}
		\BinaryInfC{$M = P : A$}
	\end{bprooftree}
	
	\bigskip
	
	\noindent
	\begin{bprooftree}
		\hskip -0.5pt
		\AxiomC{$M: \Pi_{(x : A)}B$}
		\LeftLabel{$(\eta)$ \quad}
		\RightLabel {$(x \notin FV(M))$}
		\UnaryInfC{$(\lambda x.Mx) = M: \Pi_{(x : A)}B$}
	\end{bprooftree}
	
	\bigskip
	
\end{definition}

We are finally able to formally define computational paths:

\begin{definition}
	Let $a$ and $b$ be elements of a type $A$. Then, a \emph{computational path} $s$ from $a$ to $b$ is a composition of rewrites (each rewrite is an application of the inference rules of the equality theory of type theory or is a change of bound variables). We denote that by $a =_{s} b$.
\end{definition}

As we have seen in example \ref{examplepath}, composition of rewrites are applications of the rule $\tau$. Since change of bound variables is possible, each term is considered up to $\alpha$-equivalence.

\section{Identity Type}

In this section, we have two main objectives. The first one is to propose a formalization to the identity type using computational paths. The second objective is to show how can one use our approach to construct types representing reflexivity, transitivity and symmetry. In the case of the transitive type, we also compare our approach with the traditional one, i.e.,  Martin-L\"of's Intensional type. With this comparison, we hope to show  the clear advantage of our approach, in terms of simplicity. Since our approach is based on computational paths, we will sometimes refer to our formulation as the \emph{path-based} approach and the traditional formulation as the \emph{pathless} approach. By this we mean that, even though the Homotopy Type Theory approach to the identity type brings about the notion of paths in the semantics, there is little in the way of handling paths as terms in the language of type theory.

Before the deductions that build the path-based identity type, we would like to make clear that we will use the following construction of the traditional approach \cite{harper1}:

\bigskip
\begin{bprooftree}
	\AxiomC{$A$ type}
	\AxiomC{$a : A$}
	\AxiomC{$b : A$}
	\RightLabel{$Id- F$ \quad}
	\TrinaryInfC{$Id_{A}(a,b)$ type}
\end{bprooftree}
\begin{bprooftree}
	\AxiomC{$a : A$}
	\RightLabel{$Id - I$ \quad}
	\UnaryInfC{$r(a) : Id_{A}(a,a)$}
\end{bprooftree}
\bigskip
\begin{center}
	\begin{bprooftree}
		\alwaysNoLine
		\AxiomC{$a:A$}
		\AxiomC{$b:A$}
		\AxiomC{$c:Id_{A}(a,b)$}
		\AxiomC{$[x:A]$}
		\UnaryInfC{$q(x):C(x,x,r(x))$}
		\AxiomC{$[x:A,y:A,z:Id_{A}(x,y)]$}
		\UnaryInfC{$C(x,y,z)$ type}
		\RightLabel{$Id - E$ \quad}
		\alwaysSingleLine
		\QuinaryInfC{$J(p,q):C(a,b,c)$}
	\end{bprooftree}
\end{center}
\bigskip

\subsection{Path-based construction}

The best way to define any formal entity of type theory is by a set of natural deductions rules. Thus, we define our path-based approach as the following set of rules:

\begin{itemize}
	
	\item Formation and Introduction rules:
	
	\bigskip
	\begin{center}
		\begin{bprooftree}
			\AxiomC{$A$ type}
			\AxiomC{$a : A$}
			\AxiomC{$b : A$}
			\RightLabel{$Id - F$}
			\TrinaryInfC{$Id_{A}(a,b)$ type}
		\end{bprooftree}
		\begin{bprooftree}
			\AxiomC{$a =_{s} b : A$}
			\RightLabel{$Id - I$}
			\UnaryInfC{$s(a,b) : Id_{A}(a,b)$}
		\end{bprooftree}
	\end{center}
	\bigskip
	
	\item Elimination rule:
	
	\bigskip
	\begin{center}
		\begin{bprooftree}
			\alwaysNoLine
			\AxiomC{$m : Id_{A}(a,b)$ }
			\AxiomC{$[a =_{g} b : A]$}
			\UnaryInfC{$h(g) : C$}
			\alwaysSingleLine
			\RightLabel{$Id - E$}
			\BinaryInfC{$REWR(m, \acute{g}.h(g)) : C$}
		\end{bprooftree}
	\end{center}
	
	\bigskip
	
	\item Reduction rules:
	
	\bigskip
	\begin{center}
		\begin{bprooftree}
			\AxiomC{$a =_{m} b : A$}
			\RightLabel{$Id - I$}
			\UnaryInfC{$m(a,b) : Id_{A}(a,b)$}
			\alwaysNoLine
			\AxiomC{$[a =_{g} b : A]$}
			\UnaryInfC{$h(g) : C$}
			\alwaysSingleLine
			\RightLabel{$Id - E$ \quad $\rhd_\beta$}
			\BinaryInfC{$REWR(m, \acute{g}.h(g)) : C$}
		\end{bprooftree}
		\begin{bprooftree}
			\AxiomC{[$a =_{m} b : A$]}
			\alwaysNoLine
			\UnaryInfC{$h(m/g):C$}
		\end{bprooftree}
	\end{center}
	\bigskip

	\bigskip
	
	\begin{center}
		\begin{bprooftree}
			\AxiomC{$e : Id_{A}(a,b)$}
			\AxiomC{$[a =_{t} b : A]$}
			\RightLabel{$Id - I$}
			\UnaryInfC{$t(a, b) : Id_{A}(a, b)$}
			\RightLabel{$Id - E$ \quad  $\rhd_{\eta}$ \quad $e : Id_{A}(a,b)$}
			\BinaryInfC{$REWR(e, \acute{t}.t(a,b)) : Id_{A}(a,b)$}
		\end{bprooftree}
	\end{center}
	
	\bigskip
	
\end{itemize}

In these rules, $\acute{g}$ (and $\acute{t}$) to indicate that they are abstractions over the variable $g$ (or $t$), for which the main rules of conversion of $\lambda$-abstraction hold. For that reason, we proposed two reduction rules that handle these conversions, the $\beta$ and $\eta$ reduction rules.

Our introduction and elimination rules reassures the concept of equality as an \BF{existential force}. In the introduction rule, we encapsulate the idea that an witness of a identity type $Id_{A}(a,b)$ only exists if there exist a computational path establishing the equality of $a$ and $b$. Also, the elimination rule is similar to the elimination rule of the existential quantifier. If we have an witness for $Id_{A}(a,b)$, and if from a computational path between $a$ and $b$ we can construct a term of type $C$, then we can eliminate the identity type, obtaining a term of type $C$.

\section{A Term Rewriting System for Paths}

As we have just shown, a computational path establishes when two terms of the same type are equal. From the theory of computational paths, an interesting case arises. Suppose we have a path $s$ that establishes that $a =_{s} b : A$ and a path $t$ that establishes that $a =_{t} b : A$. Consider that $s$ and $t$ are formed by distinct compositions of rewrites. Is it possible to conclude that there are cases that $s$ and $t$ should be considered equivalent? The answer is \emph{yes}. Consider the following example:

\begin{example}
	\noindent \normalfont Consider the path  $a =_{t} b : A$. By the symmetric property, we obtain $b =_{\sigma(t)} a : A$. What if we apply the property again on the path $\sigma(t)$? We would obtain a path  $a =_{\sigma(\sigma(t))} b : A$. Since we applied symmetry twice in succession, we obtained a path that is equivalent to the initial path $t$. For that reason, we conclude the act of applying symmetry twice in succession is a redundancy. We say that the path $\sigma(\sigma(t))$ can be reduced to the path $t$.
\end{example}

As one could see in the aforementioned example, different paths should be considered equal if one is just a redundant form of the other. The example that we have just seen is just a straightforward and simple case. Since the equality theory has a total of 7 axioms, the possibility of combinations that could generate redundancies are high. Fortunately, all possible redundancies were thoroughly mapped by \cite{Anjo1}. In this work, a system that establishes all redundancies and creates rules that solve them was proposed. This system, known as $LND_{EQ}-TRS$, maps a total of 39 rules that solve redundancies. These 39 rules can be checked in \textbf{appendix B}. For each rule, there is a proof tree that constructs it. All proof trees can be checked in \cite{Ruy1}. In the case of example \textbf{3}, we have the following \cite{Ruy1}):

\bigskip
\begin{prooftree}
	\AxiomC{$x =_{t} y : A$}
	\UnaryInfC{$y =_{\sigma(t)} x : A$}
	\RightLabel{\quad $\rhd_{ss}$ \quad $x =_{t} y : A$}
	\UnaryInfC{$x =_{\sigma(\sigma(t))} y : A$}
\end{prooftree}

\bigskip

It is important to notice that we assign a label to every rule. In the previous case, we assigned the label $ss$.

\begin{definition}
	\normalfont An $rw$-rule is any of the rules defined in $LND_{EQ}-TRS$.
\end{definition}

\begin{definition}
	Let $s$ and $t$ be computational paths. We say that $s \rhd_{1rw} t$ (read as: $s$ $rw$-contracts to $t$) iff we can obtain $t$ from $s$ by an application of only one $rw$-rule. If $s$ can be reduced to $t$ by finite number of $rw$-contractions, then we say that $s \rhd_{rw} t$ (read as $s$ $rw$-reduces to $t$).
	
\end{definition}

\begin{definition}
	\normalfont  Let $s$ and $t$ be computational paths. We say that $s =_{rw} t$ (read as: $s$ is $rw$-equal to $t$) iff $t$ can be obtained from $s$ by a finite (perhaps empty) series of $rw$-contractions and reversed $rw$-contractions. In other words, $s =_{rw} t$ iff there exists a sequence $R_{0},....,R_{n}$, with $n \geq 0$, such that
	
	\centering $(\forall i \leq n - 1) (R_{i}\rhd_{1rw} R_{i+1}$ or $R_{i+1} \rhd_{1rw} R_{i})$
	
	\centering  $R_{0} \equiv s$, \quad $R_{n} \equiv t$
\end{definition}

\begin{proposition}\label{proposition3.7}  is transitive, symmetric and reflexive.
\end{proposition}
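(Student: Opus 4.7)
The plan is to unfold the definition of $=_{rw}$ and verify each of the three properties directly from the existence of a suitable sequence $R_0, \ldots, R_n$ of paths connecting the two endpoints via single-step $rw$-contractions (in either direction). Since $=_{rw}$ is constructed explicitly as the reflexive-symmetric-transitive closure of $\rhd_{1rw}$, the three properties should fall out essentially by bookkeeping on the indexing sequence, with no appeal to the internal structure of individual $rw$-rules.

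For reflexivity, given a path $s$, I would take the degenerate sequence of length $n=0$, namely $R_0 \equiv s$. The quantifier $(\forall i \leq n-1)$ becomes vacuous (as $n - 1 = -1$), so the sequence trivially witnesses $s =_{rw} s$. For symmetry, assuming $s =_{rw} t$ with witnessing sequence $R_0 \equiv s, R_1, \ldots, R_n \equiv t$, I would reverse the order to obtain $R_n, R_{n-1}, \ldots, R_0$, and observe that for each adjacent pair $(R_{i+1}, R_i)$ in the reversed sequence the required disjunction ``$R_{i+1} \rhd_{1rw} R_i$ or $R_i \rhd_{1rw} R_{i+1}$'' is just the same disjunction from the original sequence with the two disjuncts swapped. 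Hence the reversed sequence witnesses $t =_{rw} s$.

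For transitivity, suppose $s =_{rw} t$ via $R_0 \equiv s, \ldots, R_n \equiv t$ and $t =_{rw} u$ via $S_0 \equiv t, \ldots, S_m \equiv u$. I would concatenate them into $R_0, \ldots, R_n, S_1, \ldots, S_m$, of total length $n+m$. The gluing step at the junction is sound because $R_n \equiv t \equiv S_0$, so the pair $(R_n, S_1)$ inherits the required one-step relationship from $(S_0, S_1)$ in the second sequence. Inside each of the two segments, the one-step condition is inherited from the original witnesses.

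No step here looks like a real obstacle; the only mild subtlety is being careful about the edge case $n = 0$ in reflexivity (making sure the empty-body-of-quantifier interpretation is legitimate) and about the identification $R_n \equiv S_0$ at the concatenation point in transitivity, both of which are handled by the explicit ``perhaps empty'' clause built into Definition of $=_{rw}$. The result therefore amounts to a short verification rather than a structural argument about $LND_{EQ}\text{-}TRS$ itself.
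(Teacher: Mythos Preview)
Your proposal is correct and essentially follows the same approach as the paper, only more explicitly. The paper's proof is a one-liner: it simply observes that $=_{rw}$ is, by definition, the transitive, reflexive and symmetric closure of $\rhd_{1rw}$, so the three properties hold immediately; you have unpacked that observation into the concrete sequence-level bookkeeping, which is exactly what underlies the paper's appeal to the closure definition.
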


\begin{proof}
	Comes directly from the fact that $rw$-equality is the transitive, reflexive and symmetric closure of $rw$.
\end{proof}

We'd like to mention that  $LND_{EQ}-TRS$ is terminating and confluent. The proof of this affirmation can be found in \cite{Anjo1,Ruy2,Ruy3,RuyAnjolinaLivro}.

Thus, we conclude our review of computational paths as terms of the identity type and the associated rewrite system. If necessary, please check \cite{Ruy1} and \cite{Art3} for a thorough development of this theory.

\section{Homotopy Type Theory}	
In the previous sections, we have said that one of the most interesting concepts of type theory is the identity type. We have also said that the reason for that is the fact one can see the identity type as a homotopical path between two points of a space, giving rise to a homotopical interpretation of type theory. The connection between those two theories created a whole new area of research known as homotopy type theory. In this work, we introduced computational paths as the syntactic counterpart of those homotopical paths, since they only exist in a semantical sense. Nevertheless, we have not talked yet how one can use computational paths in homotopy type theory. Thus, in this section, we develop the main objective of this work. 

We want to show that some of the foundational definitions, propositions and theorems of homotopy type theory still hold in our path-based approach. In other words, we use our approach to construct the building blocks of more complex results. 

One important fact to notice is that every proof that does not involve the identity type is valid in the path-based approach. This is obvious, since the only difference between the traditional approach and ours is the formulation of the identity type. If a proof uses it, we need to reformulate this proof using our path-based approach, instead of using the induction principle of the traditional one. Thus, every part of a proof that is not directly or indirectly related to identity type is still valid in our approach.

In a path-based proof, we are going to use the formulation proposed in the previous sections. We also are going to use the reduction rules of $LND_{EQ}-TRS$.  In the process of developing the theory of this section, we noticed that $LND_{EQ}-TRS$, as proposed in the previous section is still incomplete. We state this based on the fact that we found new reduction rules that are not part of the original $LND_{EQ}-TRS$. That way, we added these new rules to the system, expanding it. 

\subsection{Groupoid Laws}

In our previous work \cite{Art3}, we have seen that computational paths form a groupoid structure. Let's check again those rules using our $REWR$ constructor directly:

\begin{lemma}
	The type  $\Pi_{(a : A)}Id_{A}(a,a)$ is inhabited. 
\end{lemma}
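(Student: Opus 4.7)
The plan is to directly exhibit an inhabitant by combining the reflexivity rule $(\rho)$ of the equality theory with the path-based $Id$-introduction rule, and then close under $\Pi$-abstraction. This is essentially a one-line construction in our formalism, so the bulk of the task is just presenting the derivation tree cleanly rather than overcoming any real difficulty.

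First I would assume a generic $a : A$ (i.e.\ open a $\Pi$-introduction context with the hypothesis $a : A$). From this hypothesis, the $(\rho)$ rule of the equality theory of Martin-L\"of's type theory directly yields the computational path $a =_{\rho} a : A$. Applying $Id$-$I$ to this path gives $\rho(a,a) : Id_A(a,a)$. Finally, discharging the hypothesis by $\Pi$-introduction (i.e.\ $\lambda$-abstraction over $a$) produces the term
\[
\lambda a.\,\rho(a,a) : \Pi_{(a : A)} Id_A(a,a),
\]
which inhabits the required type.

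I would present this as a single natural deduction tree, with $a : A$ at the top, the application of $(\rho)$ producing $a =_{\rho} a : A$, then $Id$-$I$ producing $\rho(a,a) : Id_A(a,a)$, and finally $\Pi$-$I$ discharging the assumption $[a : A]$. The main point worth emphasising is conceptual rather than technical: in the path-based formulation reflexivity is not a primitive constructor $r(a)$ postulated by the identity type, but instead \emph{arises} from the existing reflexivity axiom of the underlying equality theory, transported into an $Id$-term by $Id$-$I$. No obstacle is expected; the only thing to be careful about is keeping the notation for the path $\rho$ and for the resulting $Id$-term $\rho(a,a)$ consistent with the introduction rule $s(a,b) : Id_A(a,b)$ stated earlier.
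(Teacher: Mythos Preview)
Your proposal is correct and matches the paper's proof essentially step for step: assume $a:A$, derive $a =_{\rho} a : A$ via the reflexivity rule, apply $Id$-$I$ to obtain $\rho(a,a) : Id_A(a,a)$, and abstract to get $\lambda a.\,\rho(a,a) : \Pi_{(a:A)} Id_A(a,a)$. The paper presents exactly this derivation tree, so there is nothing to add.
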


\begin{proof}
	We construct an witness for the desired type:
	
	\bigskip
	
	\begin{center}
		\begin{bprooftree}
			\AxiomC{$[a : A]$}
			\UnaryInfC{$a =_{\rho} a : A$}
			\RightLabel{$Id - I_{1}$}
			\UnaryInfC{$\rho(a,a) : Id_{A}(a,a)$}
			\RightLabel{$\Pi-I$}
			\UnaryInfC{$\lambda a.\rho(a,a) : \Pi_{(a : A)}Id_{A}(a,a)$}
		\end{bprooftree}
	\end{center}
	
	\bigskip
\end{proof}

\begin{lemma}
	The type  $\Pi_{(a : A)}\Pi_{(b : A)}(Id_{A}(a,b) \rightarrow Id_{A}(b,a))$ is inhabited.
\end{lemma}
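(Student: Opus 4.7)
The plan is to exhibit an explicit inhabitant by combining the symmetry axiom $\sigma$ of the equality theory (available at the level of computational paths) with the elimination rule $Id\textnormal{-}E$ for the path-based identity type. The overall shape of the term is $\lambda a.\lambda b.\lambda p.\,REWR(p,\acute{g}.\sigma(g)(b,a))$, and the proof is a straightforward derivation tree that builds this term.

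First, I assume $a:A$, $b:A$, and $p:Id_{A}(a,b)$ as hypotheses. The key step is to invoke $Id\textnormal{-}E$ on $p$: this introduces a discharged hypothesis $[a=_{g} b:A]$ under which I need to produce a term of type $Id_{A}(b,a)$. From $a=_{g} b:A$ I apply the symmetry rule $\sigma$ of the equality theory of Martin-L\"of type theory (as given in Section 2) to obtain $b=_{\sigma(g)} a:A$. Then $Id\textnormal{-}I$ yields $\sigma(g)(b,a):Id_{A}(b,a)$, which is exactly the premise the elimination rule wants; discharging $g$ gives $REWR(p,\acute{g}.\sigma(g)(b,a)):Id_{A}(b,a)$.

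To finish, I apply $\Pi\textnormal{-}I$ three times in succession, discharging $p$, then $b$, then $a$, producing the desired inhabitant
\[
\lambda a.\lambda b.\lambda p.\,REWR(p,\acute{g}.\sigma(g)(b,a)) \;:\; \Pi_{(a:A)}\Pi_{(b:A)}\bigl(Id_{A}(a,b)\rightarrow Id_{A}(b,a)\bigr).
\]

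I do not expect any serious obstacle here: the construction mirrors the previous lemma, the only new ingredient being the use of the $\sigma$ axiom inside the $\acute{g}$-abstraction. The subtle point, if any, is just to make sure that $\sigma$ is applied at the path level (not at the level of the $Id$-term itself, which would beg the question), and that the elimination rule is the path-based one from Section 3 rather than the traditional $J$-eliminator. Given that, the derivation is essentially a three-line proof tree followed by the three $\Pi$-introductions.
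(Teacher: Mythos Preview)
Your proposal is correct and is essentially identical to the paper's own proof: both assume $a:A$, $b:A$, $p:Id_A(a,b)$, use $Id\textnormal{-}E$ to introduce a path hypothesis, apply $\sigma$ followed by $Id\textnormal{-}I$, and then abstract with three $\Pi\textnormal{-}I$ steps. The only differences are cosmetic (the paper names the bound path variable $t$ rather than $g$ and writes the identity term as $p(a,b)$).
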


\begin{proof}
	Similar to the previous lemma, we construct an witness:
	
	\bigskip
	
	\begin{center}
		\begin{bprooftree}
			\alwaysNoLine
			\AxiomC{$[a:A] \quad [b:A]$}
			\UnaryInfC{$[p(a,b) : Id_{A}(a,b)]$}
			\alwaysSingleLine
			\AxiomC{[$a =_{t} b : A$]}
			\UnaryInfC{$b =_{\sigma(t)} a : A$}
			\RightLabel{$Id - I$}
			\UnaryInfC{$(\sigma(t))(b,a) : Id_{A}(b,a)$}
			\RightLabel{$Id - E$}
			\BinaryInfC{$REWR(p(a,b),\acute{t}.(\sigma(t))(b,a)) : Id_{A}(b,a)$}
			\RightLabel{$\Pi-I$}
			\UnaryInfC{$\lambda p.REWR(p(a,b), \acute{t}.(\sigma(t))(b,a)) : Id_{A} (a,b) \rightarrow Id_{A}(b,a)$}
			\RightLabel{$\Pi-I$}
			\UnaryInfC{$\lambda b. \lambda p.REWR(p(a,b),\acute{t}.(\sigma(t))(b,a)) :  \Pi_{(b : A)}(Id_{A} (a,b) \rightarrow Id_{A}(b,a))$}
			\RightLabel{$\Pi-I$}
			\UnaryInfC{$\lambda a.\lambda b. \lambda p.REWR(p(a,b), \acute{t}.(\sigma(t))(b,a)) :  \Pi_{(a : A)}\Pi_{(b : A)}(Id_{A} (a,b) \rightarrow Id_{A}(b,a))$}
		\end{bprooftree}
		\bigskip
		
	\end{center}
\end{proof}

\begin{lemma}
	The type  $\Pi_{(a : A)}\Pi_{(b : A)}\Pi_{(c : A)} (Id_{A}(a,b) \rightarrow Id_{A}(b,c) \rightarrow Id_{A}(a,c))$ is inhabited.
\end{lemma}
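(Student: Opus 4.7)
The plan is to mimic the structure of the symmetry proof (the previous lemma) but with a double elimination, since we now have two hypotheses of identity type to eliminate. First I would discharge assumptions $a : A$, $b : A$, $c : A$, together with two hypothetical witnesses $p(a,b) : Id_A(a,b)$ and $q(b,c) : Id_A(b,c)$. Then, under the further hypothetical computational paths $a =_{t_1} b : A$ and $b =_{t_2} c : A$, I would apply the transitivity rule $\tau$ of the equality theory of type theory to obtain $a =_{\tau(t_1,t_2)} c : A$, and then introduce this via $Id - I$ to get $(\tau(t_1,t_2))(a,c) : Id_A(a,c)$.

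Next I would apply $Id - E$ (the $REWR$ constructor) twice, nested. The inner elimination discharges the hypothesis $b =_{t_2} c : A$ using the witness $q(b,c)$, giving
\[
REWR(q(b,c),\acute{t_2}.(\tau(t_1,t_2))(a,c)) : Id_A(a,c).
\]
The outer elimination then discharges $a =_{t_1} b : A$ using $p(a,b)$, producing
\[
REWR\bigl(p(a,b),\acute{t_1}.REWR(q(b,c),\acute{t_2}.(\tau(t_1,t_2))(a,c))\bigr) : Id_A(a,c).
\]
Finally, I would $\lambda$-abstract away the hypotheses in order — first $q$, then $p$, then $c$, then $b$, then $a$ — applying $\Pi - I$ five times to build a term inhabiting the desired $\Pi$-type.

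The only real subtlety, and what I would flag as the main obstacle, is the bookkeeping of the two nested $REWR$'s: one has to make sure that in the body of the inner elimination the conclusion type is still $Id_A(a,c)$ (which does not depend on $t_1$ or $t_2$), so that both eliminations are well-formed under our $Id - E$ rule. Because $Id_A(a,c)$ does not mention the discharged path variables, this is immediate and no transport-like machinery is required; the proof is then essentially a matter of drawing the derivation tree cleanly, in direct analogy with the symmetry lemma above.
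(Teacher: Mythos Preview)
Your proposal is correct and is essentially the same construction as the paper's: the paper also applies $\tau$ to the two hypothetical paths, introduces the result via $Id-I$, performs two nested $REWR$ eliminations (inner one against the witness of $Id_A(b,c)$, outer one against the witness of $Id_A(a,b)$), and then $\lambda$-abstracts five times in the same order. The only differences are notational (the paper uses $w,s,t,u$ where you use $p,q,t_1,t_2$).
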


\begin{proof}
	We construct the following witness:
	
	\begin{center}
		\begin{figure}
			\begin{sideways}
				\begin{bprooftree}
					\alwaysNoLine
					\AxiomC{$[a:A] \quad [b:A]$}
					\UnaryInfC{$[w(a,b) : Id_{A}(a,b)]$}
					\alwaysNoLine
					\AxiomC{$[c:A]$}
					\UnaryInfC{$[s(b,c) : Id_{A}(b,c)]$}
					\alwaysSingleLine
					\AxiomC{$[a =_{t} b:A]$}
					\AxiomC{$[b =_{u} c:A]$}
					\BinaryInfC{$a =_{\tau(t,u)} c:A$}
					\RightLabel{$Id - I$}
					\UnaryInfC{$(\tau (t,u))(a,c) : Id_{A}(a,c)$}
					\RightLabel{$Id - E$}
					\BinaryInfC{$REWR(s(b,c),\acute{u}(\tau (t,u))(a,c)) : Id_{A}(a,c)$}
					\RightLabel{$Id - E$}
					\BinaryInfC{$REWR(w(a,b),\acute{t}REWR(s(b,c),\acute{u}(\tau (t,u))(a,c))) : Id_{A}(a,c)$}
					\RightLabel{$\Pi-I$}
					\UnaryInfC{$\lambda s.REWR(w(a,b),\acute{t}REWR(s(b,c),\acute{u}(\tau (t,u))(a,c))) : Id_{A}(b,c) \rightarrow Id_{A}(a,c)$}
					\RightLabel{$\Pi-I$}
					\UnaryInfC{$\lambda w.\lambda s.REWR(w(a,b),\acute{t}REWR(s(b,c),\acute{u}(\tau (t,u))(a,c))) : Id_{A}(a,b) \rightarrow Id_{A}(b,c) \rightarrow Id_{A}(a,c)$}
					\RightLabel{$\Pi-I$}
					\UnaryInfC{$\lambda c.\lambda w.\lambda s.REWR(w(a,b),\acute{t}REWR(s(b,c),\acute{u}(\tau (t,u))(a,c))) :  \Pi_{(c : A)}(Id_{A}(a,b) \rightarrow Id_{A}(b,c) \rightarrow Id_{A}(a,c))$}
					\RightLabel{$\Pi-I$}
					\UnaryInfC{$\lambda b. \lambda c.\lambda w.\lambda s.REWR(w(a,b),\acute{t}REWR(s(b,c),\acute{u}(\tau (t,u))(a,c))) :  \Pi_{(b : A)}\Pi_{(c : A)}(Id_{A}(a,b) \rightarrow Id_{A}(b,c) \rightarrow Id_{A}(a,c))$}
					\RightLabel{$\Pi-I$}
					\UnaryInfC{$\lambda a. \lambda b. \lambda c.\lambda w.\lambda s.REWR(w(a,b),\acute{t}REWR(s(b,c),\acute{u}(\tau (t,u))(a,c))) :   \Pi_{(a : A)}\Pi_{(b : A)}\Pi_{(c : A)}(Id_{A}(a,b) \rightarrow Id_{A}(b,c) \rightarrow Id_{A}(a,c))$}
				\end{bprooftree}
			\end{sideways}
		\end{figure}
	\end{center}
	
	\bigskip
	
\end{proof}

\newpage

Lemmas \textbf{1}, \textbf{2} and \textbf{3} correspond respectively to the reflexivity, symmetry and transitivity of the identity type. From now on, the reflexivity will be represented by $\rho$, symmetry by $\sigma$ and transitivity by $\tau$.

\begin{lemma}
	For any type $A$, $x,y,z,w : A$ and $p : Id_{A}(x,y)$ and $q :  Id_{A}(y,z)$ and $r : Id_{A}(z,w)$, the following types are inhabited:
	
	\begin{enumerate}
		\item $\Pi_{(x,y : A)}\Pi_{(p : Id_{A}(x,y))} Id_{Id_{A}(x,y)}(p,\rho_{y} \circ p)$ and $\Pi_{(x,y : A)}\Pi_{(p : Id_{A}(x,y))}Id_{Id_{A}(x,y)}(p,p \circ \rho_{x})$.
		
		\item $\Pi_{(x,y : A)}\Pi_{(p : Id_{A}(x,y))}Id_{Id_{A}(x,y)}(\sigma(p) \circ p, \rho_{x})$ and $\Pi_{(x,y : A)}\Pi_{(p : Id_{A}(x,y))}Id_{Id_{A}(x,y)}(p \circ \sigma(p), \rho_{y})$
		
		\item $\Pi_{(x,y : A)}\Pi_{(p : Id_{A}(x,y))}Id_{Id_{A}(x,y)}(\sigma(\sigma(p)), p)$
		
		\item $\Pi_{(x,y,z,w : A)}\Pi_{(p : Id_{A}(x,y))}\Pi_{(q : Id_{A}(y,z))}\Pi_{(r : Id_{A}(z,w))}Id_{Id_{A}(x,w)}(r \circ (q \circ p), (r \circ q) \circ p)$ 
		
	\end{enumerate}
	
\end{lemma}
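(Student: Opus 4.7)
The plan for all four items is to follow the same template as Lemmas 1--3: each reduces to a single rewrite rule from $LND_{EQ}-TRS$, lifted to an inhabitant of the outer identity type by the $Id-I$ rule and then abstracted with successive $\Pi-I$ applications. Before anything else, I would unfold the shorthand $\circ$ in each of the four statements according to the construction of Lemma 3: the composite $q \circ p$ is obtained by eliminating both witnesses to expose their underlying paths and then re-introducing via $\tau$. Thus $\rho_y \circ p$ becomes the identity-type witness built from $\tau(\rho(y), t)$, where $t$ is the path underlying $p$, and similarly for the other composites in the statement. Once this unfolding is made explicit, each claim asks precisely for a computational path between two paths that are already related by a rule of $LND_{EQ}-TRS$, which suffices to inhabit the outer identity type via $Id-I$.

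For item (1), I would invoke the two unit-law rules $\tau(\rho, t) \rhd t$ and $\tau(t, \rho) \rhd t$ of the rewrite system. For item (2), the inverse-law rules $\tau(\sigma(t), t) \rhd \rho$ and $\tau(t, \sigma(t)) \rhd \rho$. For item (3), the $ss$-rule $\sigma(\sigma(t)) \rhd t$ is displayed explicitly in the excerpt and applies directly. In each of these three cases, the derivation tree has the same shape: an $Id-E$ on $p$ opens a hypothesis $x =_t y$; from $t$, the two sides of the desired identity are constructed; the appropriate $rw$-rule supplies the meta-level path between them; finally $Id-I$ packages this as a witness of the outer identity type, and three $\Pi$-abstractions over $x$, $y$, $p$ complete the $\Pi$-type.

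Item (4) will be the bulkiest derivation because three identity hypotheses $p, q, r$ must be eliminated in turn. The proof tree will nest three $REWR$ constructors to expose the underlying paths $t, u, v$, then invoke the associativity rule $\tau(t, \tau(u,v)) \rhd \tau(\tau(t,u), v)$ of $LND_{EQ}-TRS$, re-introduce via $Id-I$, and finish with seven $\Pi-I$ applications to abstract $x, y, z, w$ and $p, q, r$. Because of the breadth of the tree, the diagram will almost certainly need to be typeset sideways, as in the proof of Lemma 3.

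The main obstacle throughout is bookkeeping rather than conceptual difficulty. In particular, one must consistently unfold $\circ$ to the explicit $\tau$-construction of Lemma 3 before any $rw$-rule can be matched to the goal, and in the associativity case one must track carefully which nested elimination exposes which underlying path so that $\tau(\tau(t,u),v)$ appears in exactly the position where the associativity rule can fire. Once these syntactic alignments are in place, every sub-lemma collapses to a single application of a rule of $LND_{EQ}-TRS$, and the whole lemma becomes a catalogue of the basic $rw$-rules repackaged as identity-type inhabitants.
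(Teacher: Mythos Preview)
Your plan is essentially correct and matches the paper's approach: each item is discharged by a single $rw$-rule from $LND_{EQ}-TRS$ (items 1--4 use rules $trr/tlr$, $tr/tsr$, $ss$, and $tt$ respectively), lifted via $Id-I$ and then $\Pi$-abstracted.

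There is one methodological difference worth noting. You propose eliminating each identity hypothesis with $Id-E$ (REWR) to expose an underlying path $t$ before applying the $rw$-rule; the paper instead applies the $rw$-rule directly to $p$ itself, treating the variable $p:Id_A(x,y)$ as already naming a computational path. So, for instance, the paper simply writes $\tau(p,\rho_y) =_{trr} p$ and immediately introduces $(trr)(\tau(p,\rho_y),p):Id_{Id_A(x,y)}(p,\rho_y\circ p)$, followed by $\lambda$-abstractions---no REWR at all. This keeps every derivation to three or four lines, and in particular item (4) is no larger than the others (no sideways figure, no nested eliminations). Your version with explicit REWR is not wrong, but it is heavier than needed in the path-based setting. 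Two small corrections: under the paper's convention $q\circ p$ unfolds to $\tau(p,q)$, so $\rho_y\circ p$ is $\tau(p,\rho_y)$ rather than $\tau(\rho_y,t)$; and the $tt$-rule is oriented $\tau(\tau(t,r),s)\rhd \tau(t,\tau(r,s))$, though this is immaterial since $rw$-equality is symmetric.
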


\begin{proof}
	The proof of each statement follows from the same idea. We just need to look for suitable reduction rules already present in the original $LND_{EQ}-TRS$.
	
	\begin{enumerate}
		\item The first thing to notice is that a composition in our path-based approach corresponds to a transitive operation, i.e.,  $(p \circ \rho_{x})$ can be written as $\tau(\rho_{x}, p)$  Follows from rules number \textbf{5} and \textbf{6}. These are as follows:
		
		\bigskip
		
		\begin{prooftree}
			\AxiomC{$x =_{r} y : A$}
			\AxiomC{$y =_{\rho} y : A$}
			\RightLabel{\quad $\rhd_{trr}$ \quad $x =_{r} y : A$}
			\BinaryInfC{$x =_{\tau(r,\rho)} y : A$}
		\end{prooftree}
		
		\bigskip
		
		\begin{prooftree}
			\AxiomC{$x =_{\rho} x : A$}
			\AxiomC{$x =_{r} y : A$}
			\RightLabel{\quad $\rhd_{tlr}$ \quad $x =_{r} y : A$}
			\BinaryInfC{$x =_{\tau(\rho,r)} y : A$}
		\end{prooftree}
		
		\bigskip
		
		Thus, we have:
		
		\bigskip
		
		\begin{prooftree}
			\AxiomC{$\tau(p,\rho_{y}) =_{trr} p : Id_{A}(x,y)$}
			\UnaryInfC{$(trr)(\tau(p,\rho_{y}), p) : Id_{Id_{A}(x,y)}(p, \rho_{y} \circ p)$}
			\UnaryInfC{$\lambda x.\lambda y.\lambda p.(trr)(\tau(p,\rho_{y}), p) : \Pi_{(x,y : A)}\Pi_{(p : Id_{A}(x,y))} Id_{Id_{A}(x,y)}(p,\rho_{y} \circ p)$}
			
		\end{prooftree}
		
		\bigskip
		
		\begin{prooftree}
			\AxiomC{$\tau(\rho_{x},p) =_{tlr} p : Id_{A}(x,y)$}
			\UnaryInfC{$(tlr)(\tau(\rho_{x},p), p) : Id_{Id_{A}(x,y)}(p,p \circ \rho_{x})$}
			\UnaryInfC{$\lambda x.\lambda y.\lambda p.(tlr)(\tau(\rho_{x}, p), p) : \Pi_{(x,y : A)}\Pi_{(p : Id_{A}(x,y))} Id_{Id_{A}(x,y)}(p,p \circ \rho_{x})$}
			
		\end{prooftree}	
		
		\bigskip
		
		\item  We use rules \textbf{3} and \textbf{4}:
		
		\begin{prooftree}
			\AxiomC{$x =_{r} y : A$}
			\AxiomC{$y =_{\sigma(r)} x : A$}
			\RightLabel{\quad  $\rhd_{tr}$ \quad $x =_{\rho} x : A$}
			\BinaryInfC{$x =_{\tau(r,\sigma(r))} x : A$}
		\end{prooftree}
		
		\bigskip
		
		\begin{prooftree}
			\AxiomC{$y =_{\sigma(r)} x : A$}
			\AxiomC{$x =_{r} y : A$}
			\RightLabel{\quad $\rhd_{tsr}$ \quad $y =_{\rho} y : A$}
			\BinaryInfC{$y =_{\tau(\sigma(r),r)} y : A$}
		\end{prooftree}
		
		\bigskip
		
		Thus:
		
		\bigskip
		
		\begin{prooftree}
			\AxiomC{$\tau(p,\sigma(p)) =_{tr} \rho_{x} : Id_{A}(x,y)$}
			\UnaryInfC{$(tr)(\tau(p,\sigma(p)),\rho_{x}) : Id_{Id_{A}(x,y)}(\sigma(p) \circ p, \rho_{x})$}
			\UnaryInfC{$\lambda x.\lambda y.\lambda p.(tr)(\tau(p,\sigma(p),\rho_{x}) : \Pi_{(x,y : A)}\Pi_{(p : Id_{A}(x,y))} Id_{Id_{A}(x,y)}(\sigma(p) \circ p, \rho_{x})$}
		\end{prooftree}
		
		\bigskip
		
		\begin{prooftree}
			\AxiomC{$\tau(\sigma(p),p) =_{tsr} \rho_{y} : Id_{A}(x,y)$}
			\UnaryInfC{$(tsr)(\tau(\sigma(p),p),\rho_{y}) : Id_{Id_{A}(x,y)}(p \circ \sigma(p),\rho_{y})$}
			\UnaryInfC{$\lambda x.\lambda y.\lambda p.(tsr)(\tau(p,\sigma(p),\rho_{y}) : \Pi_{(x,y : A)}\Pi_{(p : Id_{A}(x,y))} Id_{Id_{A}(x,y)}(p \circ \sigma(p), \rho_{y})$}
		\end{prooftree}
		
		\bigskip
		
		\item  We use rule \textbf{2}:
		
		\bigskip
		
		\begin{prooftree}
			\AxiomC{$x =_{r} y : A$}
			\UnaryInfC{$y =_{\sigma(r)} x : A$}
			\RightLabel{\quad $\rhd_{ss}$ \quad $x =_{r} y : A$}
			\UnaryInfC{$x =_{\sigma(\sigma(r))} y : A$}
		\end{prooftree}
		
		\bigskip
		
		Thus:
		
		\bigskip
		
		\begin{prooftree}
			\AxiomC{$\sigma(\sigma(p)) =_{ss} p :  Id_{A}(x,y)$}
			\UnaryInfC{$(ss)(\sigma(\sigma(p),p))  : Id_{Id_{A}(x,y)}(\sigma(\sigma(p)),p)$}
			\UnaryInfC{$\lambda x.\lambda y.\lambda p.(ss)(\sigma(\sigma(p),p)) :  \Pi_{(x,y : A)}\Pi_{(p : Id_{A}(x,y))} Id_{Id_{A}(x,y)}(\sigma(\sigma(p)),p)$}
		\end{prooftree}
		
		\bigskip
		
		\item  We use rule \textbf{37}:
		
		\bigskip
		
		\begin{prooftree}
			\hskip - 155pt
			\AxiomC{$x =_{t} y : A$}
			\AxiomC{$y =_{r} w : A$}
			\BinaryInfC{$x =_{\tau(t,r)} w : A$}
			\AxiomC{$w =_{s} z : A$}
			\BinaryInfC{$x =_{\tau(\tau(t,r),s)} z : A$}
		\end{prooftree}
		
		\begin{prooftree}
			\hskip 4cm
			\AxiomC{$x =_{t} y : A$}
			\AxiomC{$y=_{r} w : A$}
			\AxiomC{$w=_{s} z : A$}
			\BinaryInfC{$y =_{\tau(r,s)} z : A$}
			\LeftLabel{$\rhd_{tt}$}
			\BinaryInfC{$x =_{\tau(t,\tau(r,s))} z : A$}
		\end{prooftree}
		
		\bigskip
		
		Thus:
		
		\bigskip
		
		\begin{small}
			\begin{prooftree}	
				\AxiomC{$\tau(\tau(p,q),r) =_{tt} \tau(p,\tau(q,r)) :  Id_{A}(x,w)$}
				\UnaryInfC{$(tt)(\tau(\tau(p,q),r) =_{tt} \tau(p,\tau(q,r)))  : Id_{Id_{A}(x,w)}(r \circ (q \circ p), (r \circ q) \circ p)$}
				\UnaryInfC{$\lambda x.\lambda y.\lambda z.\lambda w.\lambda p.\lambda q. \lambda r. (ss)(\sigma(\sigma(p),p)) : \Pi_{(p : Id_{A}(x,y))}\Pi_{(q : Id_{A}(y,z))}\Pi_{(r : Id_{A}(z,w))}Id_{Id_{A}(x,w)}(r \circ (q \circ p), (r \circ q) \circ p)$}
			\end{prooftree}	
		\end{small}	
	\end{enumerate}
\end{proof}

\bigskip

With the previous lemma, we showed that our path-based approach yields the groupoid structure of a type up to propositional equality.

\subsection{Functoriality}

We want to show that functions preserve equality\cite{hott}.

\begin{lemma}
	The type $\Pi_{(x,y : A)}\Pi_{(f: A \rightarrow B)} (Id_{A}(x,y) \rightarrow Id_{B}(f(x),f(y)))$ is inhabited.
\end{lemma}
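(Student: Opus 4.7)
The plan is to follow exactly the template established by Lemmas 2 and 3, but using the inference rule $\mu$ (from the equality theory of type theory) in place of $\sigma$ or $\tau$. The rule $\mu$ says that, given $M = M' : A$ and $N : \Pi_{(x:A)}B$, we may conclude $NM = NM' : B[M/x]$; specializing $N$ to a function $f : A \to B$, this is exactly the congruence law we need: from a computational path $x =_{t} y : A$ we immediately obtain $f(x) =_{\mu(t)} f(y) : B$.

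Concretely, I would open the assumptions $[x : A]$, $[y : A]$, $[f : A \to B]$, and $[p : Id_{A}(x,y)]$, and then begin an application of $Id$-$E$ on $p$. Inside the elimination branch I would discharge the hypothesis $[x =_{t} y : A]$ as follows: apply $\mu$ (with the function $f$) to get $f(x) =_{\mu(t)} f(y) : B$, then apply $Id$-$I$ to obtain the witness $(\mu(t))(f(x), f(y)) : Id_{B}(f(x), f(y))$. Closing off the elimination yields
\[
REWR\bigl(p,\ \acute{t}.(\mu(t))(f(x),f(y))\bigr) : Id_{B}(f(x), f(y)).
\]
Finally, four successive applications of $\Pi$-$I$ — discharging in turn $p$, $f$, $y$, and $x$ — produce a term of the desired type
\[
\lambda x.\lambda y.\lambda f.\lambda p.\, REWR\bigl(p, \acute{t}.(\mu(t))(f(x),f(y))\bigr).
\]

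There is no real obstacle here: the whole construction is a direct transcription, via the path-based $Id$-$E$, of the primitive congruence rule $\mu$ of the underlying equality theory. The only mild care is bookkeeping — making sure that the variable $t$ in the $Id$-$E$ branch is fresh and correctly bound by the abstraction marker $\acute{t}$, and that the order of the $\Pi$-$I$ steps matches the quantifier order in the statement. Compared with Lemmas 2 and 3, this proof is in fact simpler, since it uses a single inference rule rather than an iterated one and does not require any nested $REWR$.
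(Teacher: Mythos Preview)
Your proposal is correct and matches the paper's own proof essentially step for step: the paper also assumes a path $x =_{s} y : A$, applies $\mu$ with $f$ to get $f(x) =_{\mu_{f}(s)} f(y) : B$, introduces $Id$, eliminates via $REWR$, and then abstracts over $p$, $f$, $y$, $x$ in that order. The only differences are purely notational (the paper writes $\mu_{f}(s)$ rather than $\mu(t)$ and uses $\lambda s$ in place of $\acute{t}$ for the bound path variable).
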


\begin{proof}
	It is a straightforward construction:
	
	\bigskip
	
	\begin{prooftree}
		\AxiomC{[$x =_{s} y : A$]}
		\AxiomC{[$f: A \rightarrow B$]}
		\BinaryInfC{$f(x) =_{\mu_{f}(s)} f(y) : B$}
		\UnaryInfC{$\mu_{f}(s)(f(x),f(y)) : Id_{B}(f(x),f(y))$}
		\AxiomC{$[p: Id_{A}(x,y)]$}
		\BinaryInfC{$REWR(p,\lambda s. \mu_{f}(s)(f(x),f(y))): Id_{B}(f(x),f(y))$}
		\UnaryInfC{$\lambda x. \lambda y. \lambda f. \lambda p.REWR(p,\lambda s. \mu_{f}(s)(f(x),f(y))):\Pi_{(x,y : A)}\Pi_{(f: A \rightarrow B)} (Id_{A}(x,y) \rightarrow Id_{B}(f(x),f(y)))$}
		
	\end{prooftree}
	
\end{proof}

\bigskip

\begin{lemma}
	For any functions $f: A \rightarrow B$ and $g: B \rightarrow C$ and paths $p: x =_{A} y$ and $q: y =_{A} z$, we have:
	
	\begin{enumerate}
		\item $\mu_{f}(\tau(p,q)) = \tau(\mu_{f}(p),\mu_{f}(q))$
		\item $\mu_{f}(\sigma(p)) = \sigma(\mu_{f}(p))$
		\item $\mu_{g}(\mu_{f}(p)) = \mu_{g \circ f}(p)$
		\item $\mu_{Id_{A}}(p) = p$
	\end{enumerate}
\end{lemma}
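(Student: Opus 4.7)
The plan is to handle each of the four items in parallel with the proof of Lemma 4: for each equation, locate (or add) the $rw$-rule in $LND_{EQ}-TRS$ that identifies the two sides as $rw$-equal computational paths, then apply $Id$-$I$ to obtain an inhabitant of the appropriate identity type $Id_{Id_{B}(\cdots)}(\text{lhs},\text{rhs})$, and finally close each derivation with the necessary $\Pi$-introductions over $x,y,z,f,g,p,q$. In this way the whole lemma reduces to four routine derivations of the same shape as those already worked out for the groupoid laws.

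For item (1), the rule I would invoke is a distributivity of $\mu_{f}$ over $\tau$, namely $\mu_{f}(\tau(p,q)) \rhd \tau(\mu_{f}(p),\mu_{f}(q))$, capturing that applying $f$ to a composite rewrite is the composite of the pointwise $f$-applications. For item (2), the analogous rule for $\sigma$ is $\mu_{f}(\sigma(p)) \rhd \sigma(\mu_{f}(p))$. For item (3), the rule $\mu_{g}(\mu_{f}(p)) \rhd \mu_{g\circ f}(p)$ says that iterated congruence respects function composition; here one must be careful that the indexing function of $\mu$ changes, so the rule is a schema parametrized by $f$ and $g$ rather than a purely syntactic cancellation. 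Finally, for item (4) the rule $\mu_{\mathrm{id}_{A}}(p) \rhd p$ expresses that the identity function rewrites trivially, in the spirit of an $\eta$-style observation about the $\mu$-rule.

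The main obstacle is conceptual rather than computational: none of the four rules above appears to be among the original 39 rules listed in appendix B (they all concern the interaction of $\mu$ with $\tau$, $\sigma$, function composition and the identity function, rather than cancellations between $\rho/\sigma/\tau$). They must therefore be justified as part of the extension of $LND_{EQ}-TRS$ announced at the beginning of Section 5. The substantive work is thus to (a) state each new rule together with its proof tree, (b) argue that adding these schemas preserves termination and (local) confluence of $LND_{EQ}-TRS$, and (c) in the case of item (3), make precise how the parameter $f$ of $\mu_{f}$ is handled under the meta-level composition $g\circ f$. Once these rules are adopted, each of the four items is discharged by a single application of $Id$-$I$ to the corresponding $rw$-reduction, followed by the usual chain of $\Pi$-introductions, exactly as in Lemma 4.
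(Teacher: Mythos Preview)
Your plan is essentially the paper's approach: each item is discharged by invoking a single $rw$-rule (together with a possible $\sigma$) and reading off the equality. Items (3) and (4) match exactly---the paper introduces new rules $cf$ (rule 41) and $ci$ (rule 42) just as you propose.

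Two corrections, however. First, for item (2) you overlooked that the relevant rule is \emph{already} in the original system: rule 30 is $\sigma(\mu(r)) \rhd_{sm} \mu(\sigma(r))$, so the paper simply takes $\mu_{f}(\sigma(p)) =_{\sigma(sm)} \sigma(\mu_{f}(p))$ without adding anything. Your claim that ``none of the four rules above appears'' is therefore wrong for this item. Second, for item (1) the paper orients the new rule in the \emph{opposite} direction, $\tau(\mu_{f}(p),\mu_{f}(q)) \rhd_{tf} \mu_{f}(\tau(p,q))$, and then uses $\sigma(tf)$; this is harmless for the equality but worth noting if you care about the rewriting orientation. Finally, the paper does not carry out the termination/confluence check you flag in (b), nor does it wrap the results in $Id$-$I$ and $\Pi$-introductions for this lemma---it simply records the $rw$-equalities---so that extra work, while reasonable, goes beyond what is actually done.
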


\begin{proof}
	\begin{enumerate}

		\item For the first time, we need to add a new rule to the original 39 rules of $LND_{EQ}-TRS$. We introduce rule \textbf{40}:
		
		\bigskip
		
		\begin{prooftree}
			\hskip -155pt
			\AxiomC{$x =_{p} y : A$}
			\AxiomC{$[f: A \rightarrow B]$}
			\BinaryInfC{$f(x) =_{\mu_{f}(p)} f(y) : B$}
			\AxiomC{$y =_{q} z : A$}
			\AxiomC{$[f: A \rightarrow B]$}
			\BinaryInfC{$f(y) =_{\mu_{f}(q)} f(z) : B$}	
			\BinaryInfC{$f(x) = \tau(\mu_{f}(p),\mu_{f}(q)) f(z) : B$}	
		\end{prooftree}
		
		\begin{prooftree}
			\hskip 5cm
			\AxiomC{$x =_{p} y : A$}
			\AxiomC{$y =_{q} z : A$}
			\LeftLabel{$\rhd_{tf}$}
			\BinaryInfC{$x =_{\tau(p,q)} z : A$}
			\AxiomC{$f : A \rightarrow B$}
			\BinaryInfC{$f(x) =_{\mu_{f}(\tau(p,q))} f(z) : B$}
		\end{prooftree}
		
		\bigskip
		
		Thus, we have $\mu_{f}(\tau(p,q)) =_{\sigma(tf)} \tau(\mu_{f}(p),\mu_{f}(q))$
		
		\item This one follows from rule \textbf{30}:
		
		\bigskip
		
		\begin{prooftree}
			\hskip -155pt
			\AxiomC{$x =_{p} y : A$}
			\AxiomC{$[f: A \rightarrow B]$}
			\BinaryInfC{$f(x) =_{\mu_{f}(p)} f(y) : B$}
			\UnaryInfC{$f(y) =_{\sigma(\mu_{f}(p))} f(x) : B$}
		\end{prooftree}
		
		\begin{prooftree}
			\hskip 4cm
			\AxiomC{$x =_{p} y : A$}
			\LeftLabel{$\rhd_{sm}$}
			\UnaryInfC{$y =_{\sigma(p)} x : A$}
			\AxiomC{$[f: A \rightarrow B]$}
			\BinaryInfC{$f(y) =_{\mu_{f}(\sigma(p))} f(x) : B$}
		\end{prooftree}
		
		\bigskip
		
		We have $\mu_{f}(\sigma(p)) =_{\sigma(sm)} \sigma(\mu_{f}(p))$
		
		\item We introduce rule \textbf{41}:
		
		\bigskip
		
		\begin{prooftree}
			\hskip -155pt
			\AxiomC{$x =_{p} y : A$}
			\AxiomC{$[f: A \rightarrow B]$}
			\BinaryInfC{$f(x) =_{\mu_{f}(p)} f(y) : B$}
			\AxiomC{$[g: B \rightarrow C]$}
			\BinaryInfC{$g(f(x)) =_{\mu_{g}(\mu_{f}(p))} g(f(y)) : C$}
		\end{prooftree}
		
		\begin{prooftree}
			\hskip 4cm
			\AxiomC{$x =_{p} y : A$}
			\AxiomC{[$x : A$]}
			\AxiomC{$[f: A \rightarrow B]$}
			\BinaryInfC{$f(x) : B$}
			\AxiomC{$[g: B \rightarrow C]$}
			\BinaryInfC{$g(f(x)) : C$}
			\UnaryInfC{$\lambda x. g(f(x)) \equiv (g \circ f) : A \rightarrow C$}
			\LeftLabel{$\rhd_{cf}$}
			\BinaryInfC{$g(f(x)) =_{\mu_{g \circ f}(p)} g(f(y)) : C$}
		\end{prooftree}
		
		\bigskip
		
		Then, $\mu_{g}(\mu_{f}(p)) =_{cf} \mu_{g \circ f}(p)$
		
		\item We introduce rule \textbf{42}:
		
		\bigskip
		
		\begin{prooftree}
			\AxiomC{$x =_{p} y : A$}
			\AxiomC{$[Id_{A} : A \rightarrow A]$}
			\BinaryInfC{$Id_{A}(x) =\mu_{Id_{A}(p)} Id_{A}(y) : A$}
			\RightLabel{\quad $\rhd_{ci}$ \quad $x =_{p} y : A$}	
			\UnaryInfC{$x =_{\mu_{Id_{A}}(p)} y : A$}
		\end{prooftree}
		
		\bigskip
		
		It follows that $\mu_{Id_{A}}(p) =_{ci} p$
	\end{enumerate}
\end{proof}

\subsection{Transport}

As stated in  \cite{Ruy5}, substitution can take place when no quantifier is involved. In this sense, there is a 'quantifier-less' notion of substitution. In type theory, this 'quantifier-less' substitution is given by a operation known as transport \cite{hott}. In our path-based approach, we formulate a new inference rule of 'quantifier-less' substitution \cite{Ruy5}:

\bigskip
\begin{prooftree}
	
	\AxiomC{$x =_{p} y : A$}
	\AxiomC{$f(x) : P(x)$}
	\BinaryInfC{$p(x,y)\circ f(x) : P(y)$}
	
\end{prooftree}

\bigskip

We use this transport operation to solve one essential issue of our path-based approach. We know that given a path $x =_{p} y : A$ and function $f: A \rightarrow B$, the application of axiom $\mu$ yields the path $f(x) =_{\mu_{f}(p)} f(y) : B$. The problem arises when we try to apply the same axiom for a dependent function $f : \Pi_{(x : A)} P(x)$. In that case, we want $f(x) = f(y)$, but we cannot guarantee that the type of $f(x) : P(x)$ is the same as $f(y) : P(y)$. The solution is to apply the transport operation and thus, we can guarantee that the types are the same:

\begin{center}
	
	\bigskip
	\begin{prooftree}
		\AxiomC{$x =_{p} y : A$}
		\AxiomC{$f : \Pi_{(x : A)} P(x)$}
		\BinaryInfC{$p(x,y) \circ f(x) =_{\mu_{f}(p)} f(y) : P(y)$}
	\end{prooftree}
\end{center}

\bigskip

\begin{lemma}
	(Leibniz's Law) The type $\Pi_{(x,y: A)} (Id_{A}(x,y) \rightarrow P(x) \rightarrow P(y))$ is inhabited.
\end{lemma}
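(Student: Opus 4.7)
The plan is to construct an inhabitant directly by combining the elimination rule for the path-based identity type with the transport operation that was just introduced. The overall shape of the derivation mirrors the one used for symmetry in Lemma 2: we eliminate the assumed identity witness $p : Id_A(x,y)$, which opens a discharge of a hypothetical computational path $x =_t y : A$, and under this hypothesis we use transport to move an assumed term $u : P(x)$ across $t$ into $P(y)$.

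Concretely, I would start from the open assumptions $[x : A]$, $[y : A]$, $[p : Id_A(x,y)]$, $[u : P(x)]$, and then open the further discharge $[x =_t y : A]$ coming from the premise of $Id-E$. Applying the transport inference rule to $x =_t y : A$ and $u : P(x)$ yields $t(x,y) \circ u : P(y)$. Feeding this back into the elimination rule gives $REWR(p,\acute{t}.\,t(x,y) \circ u) : P(y)$, with $p$ as the major premise and $\acute{t}.\,t(x,y) \circ u$ as the minor premise in which the path variable $t$ is bound. Finally, successive applications of $\Pi-I$ abstract over $u$, $p$, $y$, and $x$, producing the term
\[
\lambda x.\lambda y.\lambda p.\lambda u.\,REWR\bigl(p,\acute{t}.\,t(x,y)\circ u\bigr) \; : \; \Pi_{(x,y : A)}\bigl(Id_A(x,y) \rightarrow P(x) \rightarrow P(y)\bigr),
\]
as required.

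There is essentially no real obstacle here: all the machinery has already been set up in the preceding subsections. The only point that needs a little care is the scoping inside $Id-E$, namely that the hypothetical path $x =_t y : A$ is discharged only after transport is applied, so that $t$ does not escape its binder; this is guaranteed by writing the minor premise in the explicit form $\acute{t}.\,t(x,y)\circ u$. Once the proof tree is laid out in the same style as Lemma 2, the statement follows immediately, and the resulting term is exactly the path-based analogue of the usual $\mathrm{transport}^P$ of homotopy type theory.
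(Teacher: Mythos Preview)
Your proposal is correct and uses essentially the same idea as the paper's proof: apply the transport rule under a hypothetical computational path and then discharge that hypothesis via $Id\text{--}E$. The only cosmetic difference is the order in which you abstract: the paper first $\lambda$-abstracts over the $P(x)$-inhabitant to obtain a term of type $P(x)\to P(y)$ and only then applies $REWR$ (so that the elimination target $C$ is $P(x)\to P(y)$), whereas you apply $REWR$ first with target $C = P(y)$ and abstract over $u$ afterwards; both orderings yield an inhabitant of the required $\Pi$-type.
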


\begin{proof}
	We construct the following tree:
	
	\bigskip
	
	\begin{prooftree}
		\AxiomC{$[x =_{p} y : A]$}
		\AxiomC{$[f(x) : P(x)]$}
		\BinaryInfC{$p(x,y)\circ f(x) : P(y)$}
		\UnaryInfC{$\lambda f(x). p(x,y) \circ f(x) : P(x) \rightarrow P(y)$}
		\AxiomC{$[z: Id_{A}(x,y)]$}
		\BinaryInfC{$REWR(z,\lambda p.\lambda f(x).p(x,y) \circ f(x)) : P(x) \rightarrow P(y)$}
		\UnaryInfC{$\lambda x. \lambda y.\lambda z.REWR(z,\lambda p.\lambda f(x).p(x,y) \circ f(x)) : \Pi_{(x,y: A)} (Id_{A}(x,y) \rightarrow P(x) \rightarrow P(y))$}
	\end{prooftree}
\end{proof}

\bigskip

The function $\lambda f(x). p(x,y) \circ f(x) : P(x) \rightarrow P(y)$ is usually written as $transport^{p}(p, -)$ and $transport^{p}(p,f(x)) : P(y)$ is usually written as $p_{*}(f(x))$. 

\begin{lemma}
	For any $P(x) \equiv B$, $x =_{p} y : A$ and $b : B$, there is a path $transport^{P}(p,b) = b$.
\end{lemma}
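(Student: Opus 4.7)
The plan is to reduce the problem to the level of computational paths by applying $Id$-$E$ to $p : Id_A(x,y)$, and then to exhibit a path in $B$ from $t(x,y) \circ b$ to $b$ for an arbitrarily hypothesised computational path $x =_t y : A$. The key observation is that when $P(x) \equiv B$ does not actually depend on its index, the transported term $t(x,y) \circ b : B$ sits in the same type as its source $b : B$, so no genuine re-typing has occurred.

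Following the pattern set by the new rules \textbf{40}--\textbf{42} introduced in the previous lemmas, I would augment $LND_{EQ}$-TRS with a new rule, say $\rhd_{tc}$ (transport-constant), whose defining instance is
$$\frac{x =_t y : A \qquad b : B \qquad P(x) \equiv B}{t(x,y) \circ b \;\rhd_{tc}\; b : B}.$$
The rule is motivated by the reading of transport as re-coercing a term into a new type: when the target family is constant there is no new type to coerce into, so transport must act trivially on its second argument.

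With this rule available the derivation is routine. From the hypotheses $[x =_t y : A]$ and $b : B$ the new rule gives $t(x,y) \circ b =_{tc} b : B$; an application of $Id$-$I$ yields $(tc)(t(x,y) \circ b, b) : Id_B(t(x,y) \circ b, b)$; an outermost $Id$-$E$ on $p$ then wraps this inside $REWR(p, \acute{t}.(tc)(t(x,y) \circ b, b))$, inhabiting $Id_B(transport^P(p, b), b)$ as required, with the $\beta$-reduction of $REWR$ ensuring that the displayed term is definitionally $transport^P(p,b)$ once $p$ is of the form $t(x,y)$.

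The main obstacle is not in building the tree but in paying the meta-theoretic price for the new rule: one must re-verify that the extended $LND_{EQ}$-TRS remains terminating and confluent, and check that $\rhd_{tc}$ does not generate fresh critical pairs with the previously added rules, in particular with the transport-compatibility rules \textbf{40}--\textbf{42} that manipulate $\mu_f$ and composition. A secondary subtlety is making precise the side-condition ``$P(x) \equiv B$'' at the level of a rewrite rule, since the original system was formulated without any explicit dependency-tracking on type families.
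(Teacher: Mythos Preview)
Your approach works in principle but is heavier than necessary. You introduce a brand-new rewrite rule $\rhd_{tc}$ and then worry (rightly) about its meta-theoretic cost: termination, confluence, interaction with rules \textbf{40}--\textbf{42}, and how to state the side-condition $P(x)\equiv B$. The paper sidesteps all of this with a one-line observation. In the path-based formulation, transport always acts on a \emph{functional expression} $f(x)$, and the dependent-$\mu$ rule already supplies a path $p(x,y)\circ f(x) =_{\mu_f(p)} f(y) : P(y)$. To handle a bare constant $b:B$, the paper simply sets $f \equiv \lambda x.\,b$; then $f(x)\equiv b$ and $f(y)\equiv b$, and the existing rule yields $transport^{P}(p,b) =_{\mu(p)} b$ with no extension of the rewrite system.

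So the contrast is this: you add a primitive rule to make ``transport over a constant family'' reducible by fiat, whereas the paper recognises that the constant case is already an instance of the general dependent-$\mu$ rule once $b$ is viewed as the value of the constant function $\lambda x.\,b$. Your route is not incorrect, but it buys the result at the price of a system extension that turns out to be unnecessary, and it hides the fact that the witnessing path is simply $\mu_f(p)$ for the constant $f$. The paper's argument also makes the subsequent Lemma (relating $\mu_f(p)$ for non-dependent $f$ to $transportconst$) fall out more naturally, since both sides are literally built from the same $\mu$-path.
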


\begin{proof}
	The first to notice is the fact that in our formulation of transport, we always need a functional expression $f(x)$, and in this case we have only a constant term $b$. To address this problem, we consider a function $f = \lambda. b$ and then, we transport over $f(x) \equiv b$:
	
	\begin{center}
		$transport^{P}(p,f(x) \equiv b) =_{\mu(p)} (f(y) \equiv b)$. 
	\end{center}
	
	Thus, $transport^{P}(p,b) =_{\mu(p)} b$. We sometimes call this path $transportconst^{B}_{p}(b)$.
	
\end{proof}

\begin{lemma}
	For any $f: A \rightarrow B$ and $x =_{p} y : A$, we have 
	
	\begin{center}
		$\mu(p)(p_{*}(f(x)),f(y)) = \tau(transportconst^{B}_{p}, \mu_{f}(p))(p_{*}(f(x)),f(y))$
	\end{center}
\end{lemma}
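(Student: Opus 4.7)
The plan is to produce a computational path between the two sides and then apply $Id$-$I$ to obtain an inhabitant of the required identity type. Since $P(x) \equiv B$ is a constant family in this context, both paths live in $Id_B(p_{*}(f(x)), f(y))$, so the statement is well typed.

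First I would produce the left-hand side directly from the transport-aware $\mu$ rule introduced just before Lemma~9: applied to $f : A \rightarrow B$ regarded as an element of $\Pi_{(x : A)} B$ and to $x =_{p} y$, it yields $p(x,y) \circ f(x) =_{\mu_{f}(p)} f(y) : B$, i.e., $\mu_{f}(p)(p_{*}(f(x)), f(y))$ after $Id$-$I$. Next I would build the right-hand side in two steps: by the previous lemma applied to the constant function $\lambda\_.\,f(x)$, there is a path $transportconst^{B}_{p}(f(x)) : p_{*}(f(x)) = f(x)$; and by the ordinary (non-dependent) $\mu$ axiom applied to $f$ and $p$ there is a path $\mu_{f}(p) : f(x) = f(y)$ in $B$. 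Composing them by $\tau$ gives precisely $\tau(transportconst^{B}_{p}, \mu_{f}(p)) : p_{*}(f(x)) = f(y)$.

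The heart of the proof is then to show that these two computational paths are $rw$-equal. The intended reduction is the one that expresses how the transport-aware $\mu$ specializes when the fibration is constant: because $P \equiv B$, the ``transport part'' of the dependent $\mu_{f}(p)$ collapses to $transportconst^{B}_{p}$ and the ``action on the fibre'' collapses to the ordinary $\mu_{f}(p)$. This is exactly the content of the desired $rw$-identity $\mu_{f}(p) =_{rw} \tau(transportconst^{B}_{p}, \mu_{f}(p))$. Given this, one application of $Id$-$I$ to the $rw$-witness produces the required term of $Id_{B}(p_{*}(f(x)), f(y))$.

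The main obstacle is that this specialization is not literally one of the 39 rules of $LND_{EQ}-TRS$ nor among the new rules $tf$, $cf$, $ci$ added in the previous subsection; so, in the same spirit as rules \textbf{40}--\textbf{42}, a new $rw$-rule has to be introduced that records the behaviour of the transport-aware $\mu$ on a constant family and names the reduction $\mu_{f}(p) \rhd \tau(transportconst^{B}_{p}, \mu_{f}(p))$. The remaining work is routine: check that the new rule is well typed under the transport inference rule, verify that both sides indeed have endpoints $p_{*}(f(x))$ and $f(y)$, and then read off the proof tree ending in $Id$-$I$ exactly as in the proofs of Lemma~4 and Lemma~6.
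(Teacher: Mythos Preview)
Your proposal is correct and follows essentially the same route as the paper: you build the left-hand side from the dependent (transport-aware) $\mu$ rule, build the right-hand side by composing $transportconst^{B}_{p}$ from the preceding lemma with the ordinary $\mu_{f}(p)$, and then justify their equality in the same ``two derivation trees from the same premises to the same conclusion'' style used throughout for $LND_{EQ}$-$TRS$ rules. The only cosmetic difference is that you propose naming and adding the resulting reduction as a new numbered $rw$-rule, whereas the paper simply exhibits the two trees without adding this particular case to the list of rules.
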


\begin{proof}
	The first thing to notice is that in this case, $transportconst^{B}_{p}$ is the path $\mu(p)(p*(f(x),f(x))$ by lemma \textbf{8}. As we did to the rules of $LND_{EQ}-TRS$, we establishes this equality by getting to the same conclusion from the same premises by two different trees:

	In the first tree, we consider $f(x) \equiv b : B$ and transport over $b : B$:
	
	\bigskip
	\begin{prooftree}
		\AxiomC{$x =_{p} y : A$}	 	
		\AxiomC{$f(x) \equiv b : B$}
		\BinaryInfC{$p(x,y) \circ (f(x) \equiv b) : B$}
		\UnaryInfC{$p_{*}(f(x)) =_{\mu_{f}(p)} b \equiv f(x)$}
		\AxiomC{$x =_{p} y : A$}
		\AxiomC{$f: A \rightarrow B$}		
		\BinaryInfC{$f(x) =_{\mu_{f}(p)} f(y) : B$}
		\BinaryInfC{$p_{*}(f(x)) =_{\tau(\mu_{f}(p),\mu_{f}(p))} f(y) : B$}
	\end{prooftree}
	\bigskip
	
	In the second one, we consider $f(x)$ as an usual functional expression and thus, we transport the usual way:
	
	\bigskip
	\begin{prooftree}
		\AxiomC{$x =_{p} y : A$}
		\AxiomC{$f(x) : B$}
		\BinaryInfC{$p(x,y) \circ f(x) : B$}
		\UnaryInfC{$p_{*}(f(x)) =_{\mu_{f}(p)} f(y) : B$}
	\end{prooftree}
\end{proof}

\bigskip

\begin{lemma}
	For any $x =_{p} y : A$ and $q: y =_{A} z : A$, $f(x) : P(x)$, we have
	
	\begin{center}
		$q_{*}(p_{*}(f(x))) = (p\circ q)_{*}(f(x))$
	\end{center}
\end{lemma}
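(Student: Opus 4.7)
The plan is to unfold both sides of the claimed equation via the transport rule and reduce the problem to exhibiting a single $rw$-equality. By the transport rule, $p_{*}(f(x)) = p(x,y)\circ f(x) : P(y)$, and transporting again along $q$ gives $q_{*}(p_{*}(f(x))) = q(y,z)\circ(p(x,y)\circ f(x)) : P(z)$. On the other side, reading $p\circ q$ as the transitive composite $\tau(p,q) : x=_{A}z$ (which is forced by the typing, since $(p\circ q)_{*}(f(x))$ has to land in $P(z)$), the right-hand side unfolds to $\tau(p,q)(x,z)\circ f(x) : P(z)$. So the task reduces to producing an inhabitant of $Id_{P(z)}(q\circ(p\circ f(x)),\,\tau(p,q)\circ f(x))$.

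Following the same methodology used in Lemmas 6 and 9, I would extend $LND_{EQ}-TRS$ with a new rule, say rule \textbf{43} with label $tp$, expressing the functoriality of transport with respect to path composition. Its two trees share the premises $x=_{p}y:A$, $y=_{q}z:A$ and $f(x):P(x)$: one tree transports $f(x)$ first along $p$ and then along $q$, yielding $q(y,z)\circ(p(x,y)\circ f(x))$; the other tree first composes $p$ and $q$ via $\tau$ and transports $f(x)$ along $\tau(p,q)$ in a single step, yielding $\tau(p,q)(x,z)\circ f(x)$. The rule declares these two terms $rw$-equal. Applying $Id-I$ to this equality immediately gives $(tp)(q\circ(p\circ f(x)),\tau(p,q)\circ f(x)) : Id_{P(z)}(q_{*}(p_{*}(f(x))),(p\circ q)_{*}(f(x)))$, and $\lambda$-abstracting over $x,y,z,p,q$ and $f(x)$ lifts this to the $\Pi$-formulation if desired.

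The principal obstacle is justifying rule $tp$ as a natural addition to $LND_{EQ}-TRS$ rather than an ad hoc postulate. It is the direct analogue for transport of rule $cf$ (rule \textbf{41}) added in Lemma 6 for $\mu$, and it sits on top of the associativity of $\tau$ already captured by rule $tt$, so its motivation is the same: two routes for constructing a term of $P(z)$ from $f(x)$ and the paths $p,q$ must produce the same element. Because transport moves a term between distinct fibres of the family $P$, the rule has to be written carefully so that the confluence and termination results cited for the original $LND_{EQ}-TRS$ continue to hold once the rule is added. A secondary, purely notational remark is that $\circ$ was used in Lemma 4 in function-composition order ($f\circ g=\tau(g,f)$), whereas in this lemma the types only work if $\circ$ is read in concatenation order ($p\circ q=\tau(p,q)$); this is the reading I adopt throughout.
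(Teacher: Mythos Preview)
Your proposal would work, but it takes a heavier route than the paper. You propose extending $LND_{EQ}\text{-}TRS$ with a new rule $tp$ to directly equate $q_{*}(p_{*}(f(x)))$ and $(p\circ q)_{*}(f(x))$. The paper instead exploits the dependent $\mu$ rule already introduced in the Transport subsection, namely $p(x,y)\circ f(x) =_{\mu(p)} f(y)$ for $f:\Pi_{(x:A)}P(x)$. Using it, the paper simply evaluates both sides to the same term: the left-hand side gives $q_{*}(p_{*}(f(x))) =_{\mu(p)} q_{*}(f(y)) =_{\mu(q)} f(z)$, and the right-hand side gives $(p\circ q)_{*}(f(x)) =_{\mu(p\circ q)} f(z)$. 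Since both reduce to $f(z)$, the desired equality follows by $\tau$ and $\sigma$ without any extension of the rewrite system.

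Your approach buys a direct one-step rewrite between the two transport expressions, but at the cost of postulating a rule you yourself flag as hard to justify; the paper's approach shows the lemma is already a consequence of the existing machinery, which is both simpler and avoids the confluence and termination worries you raise. The key observation you are missing is that transport in this framework is not an opaque operation: it comes equipped with the computational path $p_{*}(f(x)) =_{\mu(p)} f(y)$, so iterated transports can always be chased down to the value $f(z)$ in the terminal fibre. (As a minor bookkeeping point, the number 43 is already assigned in the paper to the homotopy rule $hp$.)
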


\begin{proof}
	We develop both sides of the equation and wind up with the same result:
	
	\begin{center}
		$q_{*}(p_{*}f(x)) =_{\mu(p)} q_{*}(f(y)) =_{\mu(q)} f(z)$
		
		$(p\circ q)_{*}(f(x)) =_{\mu(p \circ q)} f(z)$
	\end{center}
\end{proof}

\begin{lemma}
	For any $f: A \rightarrow B$, $x =_{p} y : A$ and $u: P(f(x))$, we have:
	
	\begin{center}
		$transport^{P \circ f}(p,u) = transport^{P}(\mu_{f}(p),u)$
	\end{center}
\end{lemma}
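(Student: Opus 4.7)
The plan is to derive both sides as terms of type $P(f(y))$ via two distinct but parallel proof trees, and then identify them through a new reduction rule added to $LND_{EQ}-TRS$, in exactly the same spirit as rules 40, 41, and 42 introduced earlier in this section.

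First I would build the left-hand side: starting from $x =_{p} y : A$ and $u : P(f(x)) \equiv (P \circ f)(x)$, a single application of the transport rule with type family $P \circ f$ yields $p(x,y) \circ u : (P \circ f)(y) \equiv P(f(y))$, which is exactly $transport^{P \circ f}(p, u)$. For the right-hand side, I would first apply the $\mu$ rule to $x =_{p} y : A$ together with $f : A \rightarrow B$, obtaining $f(x) =_{\mu_{f}(p)} f(y) : B$; then apply transport with type family $P$ over $B$ to the same $u$, yielding $\mu_{f}(p)(f(x),f(y)) \circ u : P(f(y))$, which is $transport^{P}(\mu_{f}(p), u)$.

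Since both derivations produce terms of the same type $P(f(y))$ from the same premises, differing only in whether the dependency on $p$ is absorbed directly into the composite family $P \circ f$ or routed through $f$ via $\mu_{f}$ before transport along $P$, I would propose a new reduction rule (call it rule 43, with label $tp$) whose left and right subtrees are precisely the two derivations just described and whose shared conclusion is a common term of type $P(f(y))$. Applying this rule gives $transport^{P \circ f}(p,u) =_{tp} transport^{P}(\mu_{f}(p), u)$, and then a single use of $Id$-$I$ produces the desired inhabitant of $Id_{P(f(y))}(transport^{P \circ f}(p,u),\ transport^{P}(\mu_{f}(p),u))$. Quantifying over $x,y,p,u,f,P$ with successive $\Pi$-$I$ applications, as in lemmas 6--10, then gives the full dependent product form if needed.

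The main obstacle is not the construction itself, which is routine given the template set by the preceding lemmas, but the justification of yet another ad hoc addition to $LND_{EQ}-TRS$. One must check that rule $tp$ cooperates coherently with the pre-existing rules governing $\mu$ and $\circ$, in particular rule 41 ($cf$), which already prescribes how $\mu$ behaves under function composition, and that the enlarged system retains the confluence and termination properties inherited from the original 39 rules. This interaction with the rewrite system as a whole, rather than the equality itself, is the genuine conceptual subtlety.
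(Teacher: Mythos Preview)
Your approach differs from the paper's in an important way: the paper does \emph{not} introduce a new reduction rule for this lemma. Instead, it constructs the equality explicitly from the existing transport machinery. The key move you are missing is that in this paper's formulation, transport always acts on a \emph{functional expression}; so the paper writes $u$ as $g(f(x)) \equiv (g\circ f)(x)$ for some $g$. Then $transport^{P\circ f}(p,u)$ computes, via the transport rule, along the path $\mu(p)$ to $(g\circ f)(y) \equiv g(f(y))$, while $transport^{P}(\mu_{f}(p),u)$ likewise computes along $\mu(p)$ to $g(f(y))$. Composing one of these with the inverse of the other gives the desired path
\[
transport^{P\circ f}(p,u)\ =_{\tau(\mu(p),\,\sigma(\mu(p)))}\ transport^{P}(\mu_{f}(p),u),
\]
with no new rule required.

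Your proposal of adding a rule $tp$ would technically produce a witness, and it is indeed in the spirit of rules 40--42; but it is unnecessary here, and the paper in fact reserves rule 43 for the homotopy naturality rule $hp$ introduced later. The paper's construction is more economical and shows that the existing transport calculus already suffices. Your worry about confluence and termination of the enlarged system is legitimate in general, but it simply does not arise for this lemma, because nothing is added. The substantive gap in your plan is therefore not a logical error but a missed simplification: you did not exploit the two readings of $u$ as $(g\circ f)(x)$ versus $g(f(x))$, which is exactly what lets both sides normalise to the same term $g(f(y))$ and be joined by $\tau$ and $\sigma$ alone.
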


\begin{proof}
	This lemma hinges on the fact that there is two possible interpretations of $u$ that stems from the fact that $(g \circ f)(x) \equiv g(f(x))$. Thus, we can see $u$ as functional expression $g$ on $f(x)$ or an expression $g \circ f$ on $x$:
	
	\begin{center}
		\begin{figure}
			\begin{sideways}
				\begin{bprooftree}

					\AxiomC{$x =_{p} y :A$}
					\AxiomC{$ u \equiv (g \circ f)(x) : (P \circ f)(x)$}
					\BinaryInfC{$p(x,y) \circ (g \circ f)(x) : (P \circ f)(y)$}
					\UnaryInfC{$p(x,y) \circ (g \circ f)(x) =_{\mu(p)} (g \circ f)(y) : (P \circ f)(y)$}
					\UnaryInfC{$p(x,y) \circ (g \circ f)(x) =_{\mu(p)} g(f(y)) : P(f(y))$}
					
					\AxiomC{$x =_{p} y : A$}
					\UnaryInfC{$f(x) =_{u_{f}(p)} f(y) : B$}
					\AxiomC{$u \equiv g(f(x)) : P(f(x))$}
					\BinaryInfC{$\mu_{f}(p)(f(x),f(y)) \circ g(f(x)): P(f(y))$}
					\UnaryInfC{$\mu_{f}(p)(f(x),f(y)) \circ g(f(x)) =_{\mu(p)} g(f(y)): P(f(y))$}	
					\UnaryInfC{$ g(f(y)) =_{\sigma(\mu(p))} \mu_{f}(p)(f(x),f(y)) \circ g(f(x)) : P(f(y)) $}
					\BinaryInfC{$p(x,y) \circ (g \circ f)(x) =_{\tau(\mu(p),\sigma(\mu(p)))} \mu_{f}(p)(f(x),f(y)) \circ g(f(x)) : P(f(y))$}
					\UnaryInfC{$transport^{P \circ f}(p,u) =_{\tau(\mu(p),\sigma(\mu(p)))} transport^{P}(\mu_{f}(p),u)$}
					
				\end{bprooftree}
			\end{sideways}
		\end{figure}
	\end{center}
	
	\newpage
\end{proof}

\begin{lemma}
	For any $f: \Pi_{(x: A)}P(x) \rightarrow Q(x)$, $ x =_{p} y : A$ and $u(x) : P(x)$, we have:
	
	\begin{center}
		$transport^{Q}(p,f(u(x))) = f(transport^{P}(p,u(x)))$
	\end{center}
	
	\begin{proof}
		We proceed the usual way, constructing a derivation tree that establishes the equality:
		
		\bigskip
		
		\begin{prooftree}
			\AxiomC{$ x =_{p} y : A$}
			\AxiomC{$f(u(x)) : Q(x)$}
			\BinaryInfC{$p(x,y) \circ f(u(x)) : Q(y)$}
			\UnaryInfC{$p(x,y) \circ f(u(x)) =_{\mu(p)} f(u(y)) : Q(y)$}
			\AxiomC{$x =_{p} y : A$}
			\AxiomC{$u(x) : P(x)$}
			\BinaryInfC{$p(x,y) \circ u(x) : P(y)$}
			\UnaryInfC{$p(x,y) \circ u(x) =_{\mu(p)} u(y)  : P(y)$}
			\AxiomC{$f: \Pi_{(x: A)}P(x) \rightarrow Q(x)$}
			\BinaryInfC{$f(p(x,y) \circ u(x)) =_{\mu_{f}(\mu(p))} f(u(y)) : Q(y)$}
			\UnaryInfC{$f(u(y)) =_{\sigma(\mu_{f}(\mu(p)))} f(p(x,y) \circ u(x)) : Q(y) $}
			\BinaryInfC{$p(x,y) \circ f(u(x)) =_{\tau(\mu(p), \sigma(\mu_{f}(\mu(p))))}  f(p(x,y) \circ u(x)) $}
			\UnaryInfC{	$transport^{Q}(p,f(u(x))) =_{\tau(\mu(p), \sigma(\mu_{f}(\mu(p))))} f(transport^{P}(p,u(x)))$}
		\end{prooftree}
	\end{proof}
	\bigskip
	
\end{lemma}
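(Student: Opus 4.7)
The plan is to reduce both sides of the claimed equality to the common intermediate term $f(u(y)) : Q(y)$ by exploiting the transport-on-application rule together with the $\mu_f$ functoriality axiom, and then glue the two resulting paths by $\sigma$ and $\tau$. Concretely, I would first unfold the left-hand side: since $f(u(x))$ inhabits $Q(x)$, the rule for transport of a dependent expression (the variant of $\mu$ stated right after the transport operation is introduced) furnishes a path
\[ p(x,y) \circ f(u(x)) =_{\mu(p)} f(u(y)) : Q(y), \]
so $transport^{Q}(p, f(u(x)))$ is directly connected to $f(u(y))$ by $\mu(p)$.

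Next I would unfold the right-hand side in two sub-steps. Applying the same transport rule to $u(x) : P(x)$ yields $p(x,y) \circ u(x) =_{\mu(p)} u(y) : P(y)$, and then applying axiom $\mu_{f}$ to this path (treating $f$ as consuming an argument that lives in the $P$-family) produces a derived path $f(p(x,y) \circ u(x)) =_{\mu_{f}(\mu(p))} f(u(y)) : Q(y)$. Reversing this by $\sigma$ and pasting it after the first path by $\tau$ gives
\[ p(x,y) \circ f(u(x)) =_{\tau(\mu(p),\, \sigma(\mu_{f}(\mu(p))))} f(p(x,y) \circ u(x)) : Q(y), \]
which under the notational conventions fixed just after Lemma 7 is precisely $transport^{Q}(p, f(u(x))) = f(transport^{P}(p, u(x)))$.

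The step most likely to cause friction is the application of $\mu_{f}$ when $f$ is genuinely dependent, since the original $\mu_{f}$ rule was stated for non-dependent $f$: one must invoke the revised $\mu$-rule from the transport subsection that silently absorbs the necessary transport so that $f(p(x,y) \circ u(x))$ and $f(u(y))$ are compared in the same fibre $Q(y)$. Once this typing bookkeeping is in place, the remainder of the derivation is a purely mechanical assembly of $LND_{EQ}$-$TRS$ primitives and can be displayed as a single proof tree of the same shape as the one used for Lemma 11.
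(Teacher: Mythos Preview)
Your proposal is correct and follows exactly the paper's own argument: both sides are connected to the common term $f(u(y))$ via $\mu(p)$ and $\mu_{f}(\mu(p))$ respectively, then glued by $\sigma$ and $\tau$ to obtain the witness $\tau(\mu(p),\sigma(\mu_{f}(\mu(p))))$. Your remark about the typing friction when applying $\mu_{f}$ to a dependent $f$ is apt, though the paper simply proceeds without comment at that step.
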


\subsection{Homotopies}
In Homotopy Type Theory, a homotopy is defined as follows \cite{hott}:

\begin{definition}
	For any $f,g : \Pi_{(x: A)}P(x)$, a homotopy from $f$ to $g$ is a dependent function of type:
	
	\begin{center}
		$(f \sim g) \equiv \Pi_{(x : A)}(f(x) = g(x))$
	\end{center} 
\end{definition}

In our path-based approach, we have a homotopy $f,g : \Pi_{(x: A)}P(x)$ if for every $x: A$ we have a computational path between $f(x) = g(x)$. Thus, if we have a homotopy $H_{f,g} : f \sim g$, we derive the following rule:

\bigskip

\begin{center}
	\begin{prooftree}
		\AxiomC{$H_{f,g} : f \sim g$}
		\AxiomC{$f,g : \Pi_{(x: A)}P(x)$}
		\AxiomC{$x : A$}
		\TrinaryInfC{$f(x) =_{H_{f,g}(x)} g(x) : P(x)$}
	\end{prooftree}
\end{center}

\bigskip

And:

\bigskip

\begin{center}
	\begin{prooftree}
		\AxiomC{$f,g : \Pi_{(x: A)}P(x)$}
		\AxiomC{$x : A$}
		\alwaysNoLine
		\AxiomC{$[f,g: \Pi_{(x: A)}P(x), x : A]$}
		\UnaryInfC{$f(x) =_{p} g(x)$}
		\alwaysSingleLine
		\TrinaryInfC{$H^{p}_{f,g} : f \sim g$}
	\end{prooftree}
\end{center}

\bigskip

\begin{lemma}
	For any $f,g,h : A \rightarrow B$, the following types are inhabited:
	
	\begin{enumerate}
		\item $f \sim f$
		\item $(f \sim g) \rightarrow (g \sim f)$
		\item $(f \sim g) \rightarrow (g \sim h) \rightarrow (f \sim h)$
	\end{enumerate}
	
\end{lemma}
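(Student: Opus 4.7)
The plan is to show each of the three claims by constructing a suitable homotopy directly from the path operations $\rho$, $\sigma$, and $\tau$, applied pointwise at each $x : A$, and then invoking the homotopy introduction rule displayed just before the lemma to package the pointwise equality into an inhabitant of $f \sim g$ (and analogously for the other cases).

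For claim (1), I would start from an arbitrary $x : A$ and apply the reflexivity axiom $\rho$ to obtain $f(x) =_{\rho} f(x) : B$; this is essentially Lemma~1 applied pointwise. Then, using the second homotopy rule with $p \equiv \rho$, I abstract on $x$ to obtain the term $H^{\rho}_{f,f} : f \sim f$.

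For claim (2), I assume $H_{f,g} : f \sim g$. For each $x : A$ the first homotopy rule yields $f(x) =_{H_{f,g}(x)} g(x) : P(x)$. Applying symmetry $\sigma$ gives $g(x) =_{\sigma(H_{f,g}(x))} f(x) : P(x)$, which mirrors the construction in Lemma~2. Abstracting on $x$ via the introduction rule with $p \equiv \sigma(H_{f,g}(x))$ yields $H^{\sigma(H_{f,g}(x))}_{g,f} : g \sim f$. Discharging the hypothesis $H_{f,g}$ with $\lambda$-abstraction produces a term of type $(f \sim g) \rightarrow (g \sim f)$.

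For claim (3), I assume $H_{f,g} : f \sim g$ and $H_{g,h} : g \sim h$. For each $x : A$, the first homotopy rule gives $f(x) =_{H_{f,g}(x)} g(x)$ and $g(x) =_{H_{g,h}(x)} h(x)$; applying $\tau$ yields $f(x) =_{\tau(H_{f,g}(x),\, H_{g,h}(x))} h(x)$, exactly mimicking the transitivity construction of Lemma~3. The introduction rule then packages this as $H^{\tau(\ldots)}_{f,h} : f \sim h$, and two $\lambda$-abstractions discharge the hypotheses to give a term inhabiting $(f \sim g) \rightarrow (g \sim h) \rightarrow (f \sim h)$.

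No step here is genuinely difficult: the whole lemma is a pointwise lift of the groupoid laws already established in Lemmas~1--3 to the space of dependent functions, using the fact that $\rho$, $\sigma$, and $\tau$ applied pointwise at each $x : A$ produce the required homotopies. The only point requiring a bit of care is correctly using the second homotopy rule to form each $H$-term with the right witness path in the superscript, so that the derivation tree discharges the $x : A$ hypothesis cleanly.
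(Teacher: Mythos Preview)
Your proposal is correct and follows essentially the same approach as the paper: for each clause you produce a pointwise equality using $\rho$, $\sigma$, or $\tau$ and then package it with the homotopy introduction rule, discharging hypotheses by $\lambda$-abstraction. The only cosmetic difference is in clause~(1): you apply $\rho$ directly to $f(x)$ to obtain $f(x) =_{\rho} f(x)$, whereas the paper first forms $x =_{\rho} x$ and then applies $\mu_f$ to get $f(x) =_{\mu_f(\rho)} f(x)$, yielding the witness $H^{\mu_f(\rho)}_{f,f}$ rather than $H^{\rho}_{f,f}$; both are valid and the difference is immaterial.
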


\begin{proof}
	\begin{enumerate}
		\item We construct the following term:
		
		\bigskip
		
		\begin{prooftree}
			\AxiomC{$f :A\rightarrow B$}
			\AxiomC{$ x : A$}
			\AxiomC{$[x : A]$}
			\UnaryInfC{$x =_{\rho} x$}
			\AxiomC{$[f: A \rightarrow B]$}
			\BinaryInfC{$f(x) =_{\mu_{f}(\rho)} f(x) : B$}
			\TrinaryInfC{$H^{\mu_{f}(\rho)}_{f,f} : f \sim f$}
		\end{prooftree}
		
		\bigskip
		
		\item We construct:
		
		\bigskip
		
		\begin{prooftree}
			\AxiomC{$f,g :A\rightarrow B$}
			\AxiomC{$ x : A$}	
			\AxiomC{$[H_{f,g} : f \sim g]$}
			\AxiomC{$[f,g : A \rightarrow B]$}
			\AxiomC{$[x : A]$}
			\TrinaryInfC{$f(x) =_{H_{f,g}(x)} g(x) : B$}
			\UnaryInfC{$g(x) =_{\sigma(H_{f,g}(x))} f(x) : B$}
			\TrinaryInfC{$H^{\sigma(H_{f,g}(x))}_{g,f} : g \sim f$}
			\UnaryInfC{$\lambda H_{f,g}. H^{\sigma(H_{f,g}(x))}_{g,f} : (f \sim g) \rightarrow (g \sim f)$}
		\end{prooftree}
		
		\bigskip
		
		\item We construct:
		
		\bigskip
		
		\begin{small}
			\begin{prooftree}
				\AxiomC{$f,h :A\rightarrow B$}
				\AxiomC{$x : A$}	
				\AxiomC{$[H_{f,g} : f \sim g]$}
				\AxiomC{$[f,g : A \rightarrow B]$}
				\AxiomC{$[x : A]$}	
				\TrinaryInfC{$f(x) =_{H_{f,g}(x)} g(x) : B$}	
				\AxiomC{$[H_{g,h} : g \sim h]$}
				\AxiomC{$[g,h : A \rightarrow B]$}
				\AxiomC{$[x : A]$}	
				\TrinaryInfC{$g(x) =_{H_{g,h}(x)} h(x) : B$}	
				\BinaryInfC{$f(x) =_{\tau(H_{f,g}(x),H_{g,z}(x))} h(x) :B$}
				\TrinaryInfC{$H^{\tau(H_{f,g}(x),H_{g,z}(x))}_{f,h} : f \sim h$}
				\UnaryInfC{$\lambda H_{f,g}. \lambda H_{g,h}.H^{\tau(H_{f,g}(x),H_{g,z}(x))}_{f,h} :(f \sim g) \rightarrow (g \sim h) \rightarrow (f \sim h) $}
			\end{prooftree}
		\end{small}		
	\end{enumerate}
\end{proof}

\bigskip

\begin{lemma}
	For any $H_{f,g} : f \sim g$ and functions $f,g : A \rightarrow B$ and a path $x =_{p} y : A$ we have:
	
	\begin{center}
		$\tau(H_{f,g}(x), \mu_{g}(p)) = \tau(\mu_{f}(p),H_{f,g}(y))$
	\end{center}
\end{lemma}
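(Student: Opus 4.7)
The plan is to exhibit $\tau(H_{f,g}(x),\mu_{g}(p))$ and $\tau(\mu_{f}(p),H_{f,g}(y))$ as $rw$-equal computational paths from $f(x)$ to $g(y)$ in $B$, following the style used throughout Lemma 5. First I would construct each side with a short derivation tree: the left side glues $H_{f,g}(x):f(x)=g(x)$ with $\mu_{g}(p):g(x)=g(y)$ via $\tau$, while the right side glues $\mu_{f}(p):f(x)=f(y)$ with $H_{f,g}(y):f(y)=g(y)$ via $\tau$. Both witness the same pair of endpoints, as expected from a naturality square.

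To relate them, I would first examine the reflexivity case. When $p\equiv\rho_{x}$ (so $y\equiv x$), the images $\mu_{f}(p)$ and $\mu_{g}(p)$ reduce to $\rho_{f(x)}$ and $\rho_{g(x)}$ respectively; the left side then collapses to $H_{f,g}(x)$ via rule $trr$, and the right side collapses to the same $H_{f,g}(x)$ via rule $tlr$ (both used already in Lemma 4(1)). This is the path-based analogue of the reflexivity case in the usual HoTT proof of naturality by path induction.

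To promote the reflexivity case to an arbitrary $p$, I expect to introduce a new $rw$-rule to $LND_{EQ}$-$TRS$, continuing the pattern by which rules \textbf{40}--\textbf{42} were added earlier in this section. Writing $nat$ for the rule, its schema would be
\[
\tau(H_{f,g}(x),\mu_{g}(p)) \;\rhd_{nat}\; \tau(\mu_{f}(p),H_{f,g}(y)),
\]
and $Id$-$I$ applied to the label $nat$ would then yield a witness of
\[
Id_{Id_{B}(f(x),g(y))}\bigl(\tau(H_{f,g}(x),\mu_{g}(p)),\,\tau(\mu_{f}(p),H_{f,g}(y))\bigr).
\]

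The main obstacle is justifying the new rule. Unlike rules \textbf{40}--\textbf{42}, whose premises and conclusions all live within the pure syntax of $LND_{EQ}$-$TRS$, this rule couples two $\tau$-composites through a homotopy $H_{f,g}$ whose label comes from outside the rewrite system proper. Verifying that the extended system remains terminating and confluent is the delicate part; the reflexivity analysis above gives evidence for admissibility, but a full check would have to proceed in the spirit of the termination and confluence arguments of \cite{Anjo1,Ruy2,Ruy3,RuyAnjolinaLivro}. An alternative worth trying, if introducing $nat$ proves awkward, is to recast both sides through $REWR$ on $p$, obtaining $REWR(p,\acute{t}.\tau(H_{f,g}(x),\mu_{g}(t)))$ and $REWR(p,\acute{t}.\tau(\mu_{f}(t),H_{f,g}(y)))$, and to argue their equality via the $\beta$-reduct at $\rho$ together with the same naturality schema packaged inside the elimination.
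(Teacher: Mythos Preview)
Your proposal is correct and takes essentially the same approach as the paper: the paper simply introduces a new $rw$-rule (rule~\textbf{43}, labelled $hp$) whose schema is exactly your $nat$, displays the two derivation trees for the left and right composites, and concludes $\tau(H_{f,g}(x),\mu_{g}(p)) =_{hp} \tau(\mu_{f}(p),H_{f,g}(y))$. Your reflexivity analysis, your worries about termination and confluence, and your $REWR$-based alternative are all extra---the paper does not justify the new rule beyond exhibiting the two trees with identical premises and conclusion, in the same spirit as its additions of rules \textbf{40}--\textbf{42}.
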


\begin{proof}
	
	To establish this equality, we need to add a new rule to our $LND_{EQ}-TRS$. We introduce rule \textbf{43}:
	
	\bigskip
	
	\begin{prooftree}
		\AxiomC{$H_{f,g} : f \sim g$}
		\AxiomC{$ x : A$}
		\AxiomC{$ f,g : A \rightarrow B$}
		\TrinaryInfC{$f(x) =_{H_{f,g}(x)} g(x) : B$}
		\AxiomC{$x =_{p} y : A$}
		\UnaryInfC{$g(x) =_{\mu_{g}(p)} g(y) : B$}
		\BinaryInfC{$f(x) =_{\tau(H_{f,g}(x), \mu_{g}(p))} g(y) : B$}
	\end{prooftree}
	
	\begin{center}
		$\rhd_{hp}$
	\end{center}
	
	\begin{prooftree}
		\AxiomC{$x =_{p} y : A$}
		\UnaryInfC{$f(x) =_{\mu_{f}(p)} f(y) : B$}
		\AxiomC{$H_{f,g} : f \sim g$}
		\AxiomC{$x : A$}
		\AxiomC{$f,g : A \rightarrow B$}
		\TrinaryInfC{$f(y) =_{H_{f,g}(y)} g(y) : B$}
		\BinaryInfC{$f(x) =_{\tau(\mu_{f}(p), H_{f,g}(y)} g(y) : B$}
	\end{prooftree}
	
	And thus:
	
	\begin{center}
		$\tau(H_{f,g}(x), \mu_{g}(p)) =_{hp} \tau(\mu_{f}(p),H_{f,g}(y))$
	\end{center}
\end{proof}

After this section, we start to study specific lemmas and theorems involving basic types of type theory. Nevertheless, several of those theorems are statements about the notion of \textbf{equivalence} (notation: $\simeq$). Before we define equivalence, we need the following definition \cite{hott}:

\begin{definition}
	A \textbf{quasi-inverse} of a function $f : A \rightarrow B$ is a triple $(g,\alpha, \beta)$ such that $g$ is a function $g: B \rightarrow A$ and $\alpha$ and $\beta$ are homotopies such that $\alpha : f \circ g \sim Id_{B}$ and $\beta: g \circ f \sim Id_{A}$
\end{definition}

A quasi-inverse of $f$ is usually written as $qinv(f)$.

\begin{definition}
	A function $f : A \rightarrow B$ is an equivalence if there is a quasi-inverse $qinv(f) : B \rightarrow A$.
\end{definition}

\subsection{Cartesian Product}

We start proving some important lemmas and theorems for the Cartesian product type. As we did in previous subsections, we proceed using our path-based approach. Before we prove our first theorem, it is important to remember that given a term $x : A \times B$, we can extract two projections, $FST(x) : A$ and $SND(x) : B$. Thus, given a path $x =_{p} y : A \times B$, we extract paths $FST(x) = SND(y) : A$ and $SND(x) = SND(y)  : B$.

\begin{theorem}
	The function $(x =_{p} y : A \times B) \rightarrow (FST(x) = FST(y) : A) \times (SND(x) = SND(y): B)$ is an equivalence for any $x$ and $y$.
\end{theorem}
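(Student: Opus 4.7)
The plan is to exhibit an explicit quasi-inverse $g$ to the projection map
$$f \equiv \lambda p.\, REWR(p,\acute{s}.(\mu_{FST}(s)(FST(x),FST(y)),\mu_{SND}(s)(SND(x),SND(y))))$$
and then produce the two homotopies $f \circ g \sim Id$ and $g \circ f \sim Id$ using computational paths and rules of $LND_{EQ}-TRS$. The forward function $f$ is just the pair of the two functorial actions $\mu_{FST}$ and $\mu_{SND}$ established in Lemma~5, so its construction is routine; the work is in defining $g$ and checking the two homotopies.

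To construct $g$, I first use surjective pairing for the product type: for every $z : A \times B$ we have $z = (FST(z),SND(z))$, giving reflexivity-style paths $\eta_x : x =_{A\times B} (FST(x),SND(x))$ and $\eta_y : y =_{A\times B} (FST(y),SND(y))$ (these come from the $\eta$-like conversion rule for the product, and reduce to $\rho$ when $x$ and $y$ are already in pair form). Given a pair of paths $q : FST(x) =_{A} FST(y)$ and $r : SND(x) =_{B} SND(y)$, I form the composite path
$$x =_{\eta_x} (FST(x),SND(x)) =_{(q,r)} (FST(y),SND(y)) =_{\sigma(\eta_y)} y$$
where the middle step applies $\mu$ of the pair constructor componentwise. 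Then
$$g \equiv \lambda (q,r).\, \tau(\tau(\eta_x,(q,r)),\sigma(\eta_y))(x,y) : Id_{A\times B}(x,y).$$

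For the homotopy $f \circ g \sim Id$, I start from an arbitrary pair $(q,r)$, apply $g$, and then apply $f$. The first component is $\mu_{FST}$ of the composite path above. Using the rule for $\mu$ on transitivity (rule~40, i.e.\ $tf$) it suffices to compute $\mu_{FST}$ on each of the three pieces: the outer pieces reduce to $\rho$ because $FST$ applied to $\eta$-expansions is reflexive, and the middle piece reduces to $q$ since $FST$ of a pair constructor extracts the first component. Combining with the contractions $trr$ and $tlr$ of Lemma~4, the result is propositionally equal to $q$; the second component is symmetric. So the pair reduces to $(q,r)$ up to $rw$-equality, giving the required homotopy via Lemma~13.

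For $g \circ f \sim Id$, I begin from a path $p : x =_{A\times B} y$. Applying $f$ extracts $(\mu_{FST}(p),\mu_{SND}(p))$, and then $g$ reassembles the composite
$\tau(\tau(\eta_x,(\mu_{FST}(p),\mu_{SND}(p))),\sigma(\eta_y))$. The outer $\eta$'s cancel against the identity components of the pair, and the surviving middle term must be shown $rw$-equal to $p$ itself. This is the main obstacle: it is exactly a uniqueness/surjective-pairing principle at the level of paths, saying that a path between pairs is determined by its projections. In the style of the rest of the section, I expect to need a new $LND_{EQ}$-rule —call it $tp$— stating
$$(\mu_{FST}(p),\mu_{SND}(p)) \;\rhd_{tp}\; p$$
whenever $p$ is a path in a product type. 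With this rule in hand, the remainder of the argument is a straightforward application of $trr$, $tlr$, $tr$ and $tsr$ to cancel the $\eta$ components, yielding the required homotopy and completing the quasi-inverse data $(g,\alpha,\beta)$, hence the equivalence.
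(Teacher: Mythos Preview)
Your proposal is correct in outline, and the overall strategy---define the inverse by pairing, then check both round-trips via $rw$-rules---matches the paper. But two points deserve comment.

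First, and most importantly, the rule you call $tp$ is \emph{not} new: it is exactly rule~21 of the original $LND_{EQ}\text{-}TRS$, written there as $\xi(\mu_1(r),\mu_2(r)) \rhd_{mx} r$ (and used in the paper's proof under the name $mx$, with the pair-constructor on paths written $\epsilon_\land$). So the ``main obstacle'' you flag is already handled by the existing system; you should cite rule~21 rather than introduce anything.

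Second, your construction of $g$ is more elaborate than the paper's. You insert explicit $\eta$-paths $\eta_x,\eta_y$ and compose $\tau(\tau(\eta_x,(q,r)),\sigma(\eta_y))$, which then forces you, in the $f\circ g$ direction, to distribute $\mu_{FST}$ across a triple composite via rule~40 ($tf$) and cancel the outer pieces with $trr/tlr$. The paper instead treats $\eta$ for products as a \emph{reduction} $\langle FST(x),SND(x)\rangle \rhd_\eta x$: it forms $\langle FST(x),SND(x)\rangle =_{\epsilon_\land(s,t)} \langle FST(y),SND(y)\rangle$ and immediately $\eta$-reduces both sides to get $x =_{\epsilon_\land(s,t)} y$, with no surrounding transitivities. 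The $f\circ g$ homotopy is then a one-step appeal to rules~14 and~15 ($mx2l$, $mx2r$), which give $\mu_1(\epsilon_\land(s,t)) \rhd s$ and $\mu_2(\epsilon_\land(s,t)) \rhd t$ directly. Your route works, but the paper's is shorter and avoids rule~40 entirely for this theorem.
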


\begin{proof}
	To show the equivalence, we need to show the following
	
	\begin{enumerate}
		\item From $x =_{p} y : A \times B$ we want to obtain $(FST(x) = FST(y) : A) \times (SND(x) = SND(y) : B)$ and from that, we want to go back to $x =_{p} y : A \times B$. 
		
		\item We want to do the inverse process. From  $(FST(x) = FST(y) : A) \times (SND(x) = SND(y) : B)$ we want to obtain $x =_{p} y : A \times B$ and then go back to $(FST(x) = FST(y) : A) \times (SND(x) = SND(y) : B)$.
	\end{enumerate}
	
	To show the first part, we need rule \textbf{21}:
	
	\bigskip
	
	\begin{prooftree}
		\AxiomC{$x =_{p} y : A \times B$}
		\UnaryInfC{$FST(x) =_{mu_{1}(p)} FST(y) : A$}
		\AxiomC{$ x =_{p} y : A \times B$}
		\UnaryInfC{$SND(x) =_{\mu_{2}(p)} SND(y) : B$}
		\BinaryInfC{$\langle FST(x), SND(x) \rangle =_{\epsilon(\mu_{1}(p),\mu_{2}(p))}  \langle FST(y), SND(y) \rangle : A \times B$}
	\end{prooftree}
	\begin{center}
		$\rhd_{mx}$ $x =_{p} y : A \times B$.
	\end{center}
	
	\bigskip
	
	Thus, applying rule $mx$ we showed the first part of our proof. For the second part, we need rules \textbf{14} and \textbf{15}:
	
	\bigskip
	
	\begin{prooftree}
		\AxiomC{$x =_{r} x' : A$}
		\AxiomC{$y =_{s} z : B$}
		\BinaryInfC{$\langle x,y \rangle =_{\epsilon_{\land}(r,s)} \langle x',z \rangle : A \times B$}
		\UnaryInfC{$FST(\langle x,y \rangle) =_{\mu_{1}(\epsilon_{\land}(r,s))} FST(\langle x',z \rangle) : A$}
	\end{prooftree}
	
	\begin{center}
		$\rhd_{mx2l}$ $x =_{r} x' : A$.
	\end{center}
	
	\bigskip
	
	And:
	
	\bigskip
	
	\begin{prooftree}
		\AxiomC{$x =_{r} y : A$}
		\AxiomC{$z =_{s} w : B$}
		\BinaryInfC{$\langle x,z \rangle =_{\epsilon_{\land}(r,s)} \langle y,w \rangle : A \times B$}
		\UnaryInfC{$FST(\langle x,z \rangle) =_{\mu_{2}(\epsilon_{\land}(r,s))} FST(\langle y,w \rangle) : B$}
	\end{prooftree}
	
	\begin{center}
		$\rhd_{mx2r}$ $z =_{s} w : B$.
	\end{center}
	
	\bigskip
	
	We also use the $\eta$-reduction for the Cartesian product:
	
	\begin{center}
		$\langle FST(x), SND(x) \rangle : A \times B \rhd_{\eta} x : A \times B$
	\end{center}
	
	We construct the following derivation tree:
	
	\bigskip
	
	\begin{prooftree}
		
		\AxiomC{$\langle FST(x) =_{s} FST(y), SND(x) =_{t} SND(y) \rangle$}
		\UnaryInfC{$FST(x) =_{s} FST(y) : A $}
		\AxiomC{$\langle FST(x) =_{s} FST(y), SND(x) =_{t} SND(y) \rangle$}
		\UnaryInfC{$SND(x) =_{t} SND(y) : B$}
		\BinaryInfC{$\langle FST(x), SND(x) \rangle =_{\epsilon_{\land}(s,t)} \langle FST(y), SND(y) \rangle : A \times B$}
		\RightLabel{$\rhd_{\eta}$}
		\UnaryInfC{$x =_{\epsilon(s,t)} y : A \times B$}
		
	\end{prooftree}
	
	\bigskip
	
	From $x =_{\epsilon(s,t)} y : A \times B$, we have:
	
	\bigskip
	
	\begin{prooftree}
		\AxiomC{$x =_{\epsilon(s,t)} y : A \times B$}
		\UnaryInfC{$FST(x) =_{\mu_{1}(\epsilon_{\land}(s,t)} FST(y) : A$}
		\AxiomC{$x =_{\epsilon(s,t)} y : A \times B$}
		\UnaryInfC{$SND(x) =_{\mu_{2}(\epsilon_{\land}(s,t)} SND(y) : B$}
		\RightLabel{$\land-I$}
		\BinaryInfC{$\langle FST(x) =_{\mu_{1}(\epsilon_{\land}(s,t)} FST(y), SND(x)=_{\mu_{2}(\epsilon_{\land}(s,t)} SND(y) \rangle$}
		\RightLabel{$\rhd_{mx2l,mx2r}$}
		\UnaryInfC{$\langle FST(x) =_{s} FST(y), SND(x) =_{t} SND(y) \rangle$}
		
	\end{prooftree}
	
	\bigskip
	
	Thus, we showed part 2 and concluded the proof of this theorem.
\end{proof}

\begin{theorem}
	For any type families $\Pi_{(z : Z)}A, \Pi_{(z: Z)}B$ and a type family defined by $(A \times B)(z) \equiv A(z) \times B(z)$, a path $  z =_{p} w : Z$ and $f(z) : A(z) \times B(z)$, we have:
	
	\begin{center}
		$transport^{A \times B}(p,f(z)) = \langle transport^{A}(p,FST(f(z))), transport^{B}(p,SND(f(z))) \rangle : A(w) \times B(w)$
	\end{center}
	
\end{theorem}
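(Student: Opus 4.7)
The plan is to follow the template of Lemma 12, connecting both sides of the purported equality to the common term $f(w) : A(w) \times B(w)$ via $\mu$-labelled paths, and then closing up by composing one such path with the symmetric of the other via $\tau$.

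For the left-hand side, I treat $f$ as a dependent function $f : \Pi_{(z : Z)}(A \times B)(z)$. The dependent-function transport rule immediately yields
$$transport^{A \times B}(p, f(z)) \equiv p(z,w) \circ f(z) =_{\mu_{f}(p)} f(w) : A(w) \times B(w).$$

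For the right-hand side, I decompose $f$ via its projections $FST \circ f : \Pi_{(z : Z)} A(z)$ and $SND \circ f : \Pi_{(z : Z)} B(z)$ and transport each factor separately, obtaining
$$p(z,w) \circ FST(f(z)) =_{\mu_{FST \circ f}(p)} FST(f(w)) : A(w),$$
$$p(z,w) \circ SND(f(z)) =_{\mu_{SND \circ f}(p)} SND(f(w)) : B(w).$$
Pairing these two paths by $\epsilon_{\land}$ (in the spirit of rules $14$ and $15$ used in Theorem 1) and applying the $\eta$-rule for the Cartesian product to identify $\langle FST(f(w)), SND(f(w)) \rangle$ with $f(w)$, I obtain
$$\langle transport^{A}(p, FST(f(z))), transport^{B}(p, SND(f(z))) \rangle =_{\epsilon_{\land}(\mu_{FST \circ f}(p), \mu_{SND \circ f}(p))} f(w).$$
Composing this via $\tau$ with the symmetric of the LHS path above yields a computational path witnessing the claimed equality in $A(w) \times B(w)$, completing the derivation tree.

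The main obstacle is reconciling the two labels that arrive at $f(w)$: $\mu_{f}(p)$ on the left versus $\epsilon_{\land}(\mu_{FST \circ f}(p), \mu_{SND \circ f}(p))$ on the right. The $rw$-equality of these labels is essentially the content of rule $mx$ (rule $21$ of $LND_{EQ}$-TRS), read in the direction that converts a pair of projection-paths back into a single path between pairs; this is exactly the identification exploited in Theorem $1$. If the present dependent-type setting requires a mild adaptation of $mx$, I would introduce an auxiliary reduction rule following the pattern established in Lemma $6$ (rule $40$) and Lemma $13$ (rule $43$), and verify that the extension preserves the termination and confluence of $LND_{EQ}$-TRS. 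Once the rule for transport across projection is in hand, the two halves of the derivation glue together without further effort.
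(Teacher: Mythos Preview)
Your approach is essentially the paper's: transport the whole term $f(z)$ on one side, transport the projections separately and pair them on the other, meet at a common midpoint, and close up with $\tau$ and a $\sigma$. The only cosmetic difference is where the $\eta$-step lands: the paper applies $\eta$ on the left-hand branch, passing from $f(w)$ to $\langle FST(f(w)), SND(f(w))\rangle$ (getting the label $\tau(\mu(p),\eta)$), whereas you apply it on the right-hand branch to collapse the pair back to $f(w)$. Either placement works; the final witness in the paper is $\tau(\tau(\mu(p),\eta),\sigma(\mu(p)))$.

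Your final paragraph about ``reconciling the two labels'' via rule $mx$ (rule 21) is unnecessary, and if you treat it as a genuine obstacle you are misreading what the theorem asks. The statement only requires exhibiting \emph{some} computational path between the two transport expressions; it does not require that the label $\mu_{f}(p)$ be $rw$-equal to $\epsilon_{\land}(\mu_{FST\circ f}(p),\mu_{SND\circ f}(p))$. Once you have paths from each side to the common midpoint and compose them with $\tau$ and $\sigma$, you are done --- the resulting composite is itself the witness. The paper does not invoke $mx$ here at all, nor does it introduce any new rewrite rule for this theorem; the derivation uses only transport, $\mu$, $\eta$, $\sigma$, $\tau$, and pairing. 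So drop that paragraph and your argument is complete.
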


\begin{proof}
	We construct a derivation tree that establishes the equality:
	
	\bigskip
	
	\begin{center}
		\begin{figure}
			\begin{sideways}
				\begin{bprooftree}
					\AxiomC{$z =_{p} w : Z$}
					\AxiomC{$f(z) : A(z) \times B(z)$}
					\BinaryInfC{$p(z,w) \circ f(z) : A(w) \times B(w)$}
					\UnaryInfC{$p(z,w) \circ f(z) =_{\mu(p)} f(w) : A(w) \times B(w)$}
					\UnaryInfC{$p(z,w) \circ f(z) =_{\tau(\mu(p),\eta)} \langle FST(f(w)), SND(f(w)) \rangle : A(w) \times B(w)$}
					\AxiomC{$z =_{p} w : Z$}
					\AxiomC{$FST(f(z)) : A(z)$}
					\BinaryInfC{$p(z,w) \circ FST(f(z)) : A(w)$}
					\AxiomC{$z =_{p} w : Z$}
					\AxiomC{$SND(f(z)) : B(z)$}
					\BinaryInfC{$p(z,w) \circ SND(f(z)) : B(w)$}	
					\BinaryInfC{$\langle p(z,w) \circ FST(f(z)), p(z,w) \circ SND(f(z)) \rangle : A(w) \times B(w)$}
					\UnaryInfC{$\langle p(z,w) \circ FST(f(z)), p(z,w) \circ SND(f(z)) \rangle =_{\mu(p)} \langle FST(f(w)), SND(f(w)) \rangle$}
					\UnaryInfC{$\langle FST(f(w)), SND(f(w)) \rangle =_{\sigma(\mu(p))}\langle p(z,w) \circ FST(f(z)), p(z,w) \circ SND(f(z)) \rangle $}
					\BinaryInfC{$p(z,w) \circ f(z) =_{\tau(\tau(\mu(p),\eta),\sigma(\mu(p)))} \langle p(z,w) \circ FST(f(z)), p(z,w) \circ SND(f(z)) \rangle : A(w) \times B(w) $}
					\UnaryInfC{	$transport^{A \times B}(p,f(z)) =_{\tau(\tau(\mu(p),\eta),\sigma(\mu(p)))} \langle transport^{A}(p,FST(f(z))), transport^{B}(p,SND(f(z))) \rangle : A(w) \times B(w)$}
				\end{bprooftree}
			\end{sideways}
		\end{figure}
	\end{center}
	
	\newpage
\end{proof}

\begin{theorem}
	For any $x,y : A \times B$, $FST(x) =_{p} FST(y) : A$, $SND(x) =_{q} SND(y) : B$, functions $g : A \rightarrow A'$, $h: B \rightarrow B'$ and $f: A \times B \rightarrow A' \times B'$ defined by $f(x) \equiv \langle g(FST(x)),h(SND(x) \rangle$, we have:
	
	\begin{center}
		$\mu_{f}(\epsilon_{\land}(p,q)) = \epsilon_{\land} (\mu_{g}(p), \mu_{h}(q))$
	\end{center}
	
\end{theorem}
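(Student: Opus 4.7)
My plan is to follow the same template used throughout this section: construct two derivation trees whose conclusions are paths of type $Id_{A'\times B'}(f(x),f(y))$ starting from the same hypotheses $p,q$ and $g,h$, and then exhibit a rewrite that identifies the two labels. On one side I unfold the definition of $f$ and let $\mu_f$ act on the pair-path; on the other side I let $\mu_g$ and $\mu_h$ act separately and combine with $\epsilon_\land$.

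First I would write out the left-hand side. Starting from $FST(x)=_p FST(y):A$ and $SND(x)=_q SND(y):B$, apply the pair-formation rule to get $\langle FST(x),SND(x)\rangle =_{\epsilon_\land(p,q)} \langle FST(y),SND(y)\rangle : A\times B$, then use the surjective-pairing $\eta$-reduction (already used in the proof of Theorem 1) to rewrite this into $x =_{\epsilon_\land(p,q)} y : A\times B$. Applying the function $f$ via the $\mu$-rule gives $f(x) =_{\mu_f(\epsilon_\land(p,q))} f(y) : A'\times B'$, which unfolds (by the definition of $f$) to $\langle g(FST(x)),h(SND(x))\rangle = \langle g(FST(y)),h(SND(y))\rangle$.

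Next I would build the right-hand side tree. From the same two premise paths I apply $\mu_g$ and $\mu_h$ separately, obtaining $g(FST(x)) =_{\mu_g(p)} g(FST(y)) : A'$ and $h(SND(x)) =_{\mu_h(q)} h(SND(y)) : B'$, and then pair them with $\epsilon_\land$ to obtain exactly $\langle g(FST(x)),h(SND(x))\rangle =_{\epsilon_\land(\mu_g(p),\mu_h(q))} \langle g(FST(y)),h(SND(y))\rangle$. The two trees therefore share identical premises and reach the same type, but carry distinct labels; following the pattern of rules \textbf{40}--\textbf{43} introduced earlier in this section, I would declare a new rewrite rule in $LND_{EQ}\text{-}TRS$ (say, labelled $mp$ for ``$\mu$ through pair'') stating that the two derivations reduce to a common normal form, and read off the conclusion $\mu_f(\epsilon_\land(p,q)) =_{mp} \epsilon_\land(\mu_g(p),\mu_h(q))$.

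The expected obstacle is not any deep computation but rather the bookkeeping surrounding the $\eta$-step and the definitional unfolding of $f$: one has to be careful that $f(x)$ really reduces to $\langle g(FST(x)),h(SND(x))\rangle$ on the nose so that the two trees actually end at the same type, and that the new rewrite rule is oriented compatibly with the confluence of the system. Given that analogous $\mu$-commuting rules ($tf$, $sm$, $cf$, $ci$, $hp$) have already been accepted in the preceding lemmas, postulating this additional rule is on par with those and requires no further justification beyond exhibiting the two proof trees side by side.
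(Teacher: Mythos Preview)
Your proposal is correct and matches the paper's own proof essentially step for step: the paper builds the same two derivation trees (pair then $\mu_f$, versus $\mu_g,\mu_h$ then $\epsilon_\land$), uses the same $\eta$-step for surjective pairing, and then postulates a new rewrite rule (labelled $mxc$, rule \textbf{44}) identifying the two labels. The only difference is the name you chose for the rule.
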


\begin{proof}
	We introduce rule \textbf{44}:
	
	\bigskip
	
	\begin{prooftree}
		\AxiomC{$FST(x) =_{p} FST(y) : A$}
		\AxiomC{$SND(x) =_{q} SND(y) : B$}
		\BinaryInfC{$\langle FST(x), SND(x) \rangle =_{\epsilon_{\land}(p,q)} \langle FST(y), SND(y) \rangle : A \times B$ }
		\RightLabel{$=_{\eta}$}
		\UnaryInfC{$x =_{\epsilon_{\land}(p,q)} y : A \times B$}
		\UnaryInfC{$f(x) =_{\mu_{f}(\epsilon_{\land}(p,q))} f(y) : A' \times B'$}
	\end{prooftree}
	
	\begin{center}
		$\rhd_{mxc}$
	\end{center}
	
	\bigskip
	
	\begin{prooftree}
		\AxiomC{$FST(x) =_{p} FST(y) : A$}
		\UnaryInfC{$g(FST(x)) =_{\mu_{g}(p)} g(FST(y)) : A'$}
		\AxiomC{$SND(x) =_{q} SND(y) : B$}
		\UnaryInfC{$h(SND(x)) =_{\mu_{h}(q)} h(SND(y)) : B'$}
		\BinaryInfC{$\langle g(FST(x), h(SND(x)) \rangle =_{\epsilon_{\land}(\mu_{g}(p),\mu_{h}(q))} \langle g(FST(y)), h(SND(y)) \rangle : A' \times B'$}
		\UnaryInfC{$f(x) =_{\epsilon_{\land}(\mu_{g}(p),\mu_{h}(q))} f(y) : A' \times B' $}
		
	\end{prooftree}
	\bigskip
	
	And thus:
	
	\begin{center}
		$\mu_{f}(\epsilon_{\land}(p,q)) =_{mxc} \epsilon_{\land} (\mu_{g}(p), \mu_{h}(q))$	
	\end{center}
\end{proof}

\subsection{Unit Type}

For the unit type $1$, our objective is to show the following theorem:

\begin{theorem}
	For any $x,y : 1$, there is a path $t$ such that $x =_{t} y$. Moreover, $t = \rho$.
\end{theorem}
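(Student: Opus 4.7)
The plan is to exploit the $\eta$-like rule for the unit type $1$: any inhabitant $x : 1$ is equal to the canonical inhabitant $* : 1$ via a computational path, which I will call $\eta$. Assuming this rule is part of our equality theory (it is the natural counterpart for $1$ of the $\eta$-rules already present for $\Pi$-types and $\times$-types in the excerpt), we obtain, for the given $x, y : 1$,
\[ x =_{\eta} *\,:\,1 \qquad \text{and} \qquad y =_{\eta} *\,:\,1. \]

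First I would apply $\sigma$ to the second path to reverse its direction, producing $* =_{\sigma(\eta)} y : 1$. Then, applying $\tau$ to concatenate, I get
\[ x =_{\tau(\eta,\,\sigma(\eta))} y\,:\,1, \]
so the desired path is $t \equiv \tau(\eta, \sigma(\eta))$. This handles the existence portion of the theorem.

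For the second claim, that $t = \rho$, I would invoke rule \textbf{3} of $LND_{EQ}$-$TRS$ (already used earlier in the excerpt in the proof of Lemma~4, item~2), namely
\[ \tau(r,\sigma(r))\;\rhd_{tr}\;\rho. \]
Specializing $r := \eta$ gives $\tau(\eta,\sigma(\eta)) \rhd_{tr} \rho$, hence $t =_{rw} \rho$, and so $t$ and $\rho$ name the same element of $Id_{1}(x,y)$ up to $rw$-equivalence. Wrapping this rewrite in the $Id$-$I$ constructor, exactly as in the pattern of Lemma~4, produces an inhabitant of $Id_{Id_1(x,y)}(t,\rho)$, which is the precise sense in which $t = \rho$.

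The main obstacle is a matter of bookkeeping rather than mathematical depth: one must be careful that the two instances of $\eta$ used for $x$ and for $y$ can legitimately be written with the \emph{same} label so that rule $tr$ applies directly. If the two $\eta$-paths have to be distinguished (say as $\eta_x$ and $\eta_y$), then an extra step is needed showing $\eta_x =_{rw} \eta_y$, which follows again from the $\eta$-rule for $1$ applied at the level of paths (any two witnesses of $x =_{?} *$ must coincide because the target is $*$, the unique canonical element). Once that identification is made the rest of the argument is mechanical.
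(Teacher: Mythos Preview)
Your proposal is correct and follows essentially the same route as the paper: the paper orients the unit $\eta$-rule as $* =_{\eta} x$, so it forms $t \equiv \tau(\sigma(\eta),\eta)$ and invokes rule~\textbf{4} ($tsr$), whereas you orient it as $x =_{\eta} *$, form $t \equiv \tau(\eta,\sigma(\eta))$, and invoke rule~\textbf{3} ($tr$). The paper does not address the labelling concern you raise---it simply writes both instances with the single label $\eta$ and applies the rewrite rule syntactically---so your extra caution there is, if anything, more careful than the original.
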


\begin{proof}
	To show that there is such $t$, we need to use the induction for the unit type \cite{hott}:
	
	\begin{center}
		$* \rhd_{\eta} x : 1$
	\end{center}
	
	Therefore, given $x,y : 1$, we have:
	
	\bigskip
	
	\begin{prooftree}
		\AxiomC{$x =_{\sigma(\eta)} * : 1$}
		\AxiomC{$* =_{\eta} y : 1$}
		\BinaryInfC{$x =_{\tau(\sigma(\eta), \eta)} y : 1$}
	\end{prooftree}
	
	\bigskip
	
	Moreover, by rule \textbf{4}, we have:
	
	\begin{center}
		$\tau(\sigma(\eta),\eta) =_{tsr} \rho$. 
	\end{center}
	
	Thus, $t \equiv \tau(\sigma(\eta),\eta)$ and $t =_{tsr} \rho$.	
\end{proof}

\subsection{Function Extensionality}

In this subsection, we are interested in the property of function extensionality. In other words, we want to conclude that given any two functions $f,g$, if for any $x$ we have that $f(x) = g(x)$, then $f = g$. That $f = g$ implies $f(x) = g(x)$ by rules of basic type theory is shown in the sequel. Nonetheless, basic type theory is insufficient to derive function extensionality \cite{hott}. Our approach using computational paths also cannot derive full function extensionality. Nevertheless, we end up proving a weakened version which says that if $A$ is non-empty, then the above principle of function extensionality over $A\to B$ holds \cite{Ruy1}:
$$A\to (\Pi f^{A\to B}\Pi g^{A\to B}(\Pi x^A {\tt Id}_{B}(APP(f,x),APP(g,x))\to {\tt Id}_{A\to B}(f,g)))$$

The proof is as follows \cite{Ruy1}:

\begin{center}
	\begin{figure}
		\begin{sideways}
			\begin{bprooftree}
				\AxiomC{$[z : A]$}
				\AxiomC{$[f: A \rightarrow B]$}
				\alwaysNoLine
				\UnaryInfC{$[g : A \rightarrow B]$}
				\UnaryInfC{$[v:\Pi x^AId_B(APP(f,x),APP(g,x))]$}
				\alwaysSingleLine
				\BinaryInfC{$APP(v,z):Id_B(APP(f,z),APP(g,z))$}
				\AxiomC{$[f : A \rightarrow B]$}
				\UnaryInfC{$\lambda zAPP(f,z)=_\eta f:A\to B$}
				\UnaryInfC{$f=_{\sigma(\eta)}\lambda z.APP(f,z):A\to B$}
				\AxiomC{$[APP(f,z)=_t APP(g,z):B]$}
				\UnaryInfC{$\lambda zAPP(f,z)=_{\xi(t)} \lambda z.APP(g,z):A\to B$}
				\BinaryInfC{$f=_{\tau(\sigma(\eta),\xi(t))}\lambda z.APP(g,z):A\to B$}
				\AxiomC{$[g:A \rightarrow B]$}
				\UnaryInfC{$\lambda z.APP(g,z)=_\eta g:A\to B$}
				\BinaryInfC{$f=_{\tau(\tau(\sigma(\eta),\xi(t)),\eta)}g:A\to B$}
				\UnaryInfC{$(\tau(\tau(\sigma(\eta),\xi(t)),\eta))(f,g):Id_{A\to B}(f,g)$}
				\BinaryInfC{$REWR(APP(v,z),\acute{t}.(\tau(\tau(\sigma(\eta),\xi(t)),\eta))(f,g)):Id_{A\to B}(f,g)$}
				\UnaryInfC{$\lambda v.REWR(APP(v,z),\acute{t}.(\tau(\tau(\sigma(\eta),\xi(t)),\eta))(f,g)):\Pi x^A.Id_B(APP(f,x),AP(g,x))\to Id_{A\to B}(f,g)$}
				\UnaryInfC{$\lambda g.\lambda v.REWR(APP(v,z),\acute{t}.(\tau(\tau(\sigma(\eta),\xi(t)),\eta))(f,g)):\Pi g^{A\to B}(\Pi x^A.Id_B(APP(f,x),AP(g,x))\to Id_{A\to B}(f,g))$}
				\UnaryInfC{$\lambda f.\lambda g.\lambda v.REWR(AP(v,z),\acute{t}.(\tau(\tau(\sigma(\eta),\xi(t)),\eta))(f,g)):\Pi f^{A\to B}\Pi g^{A\to B}(\Pi x^A.Id_B(APP(f,x),AP(g,x))\to Id_{A\to B}(f,g))$}
				\UnaryInfC{$\lambda z.\lambda f.\lambda g.\lambda v.REWR(APP(v,z),\acute{t}.(\tau(\tau(\sigma(\eta),\xi(t)),\eta))(f,g)):A\to(\Pi f^{A\to B}\Pi g^{A\to B}(\Pi x^A.Id_B(APP(f,x),AP(g,x))\to Id_{A\to B}(f,g)))$}
			\end{bprooftree}
			
		\end{sideways}
	\end{figure}
\end{center}

\newpage 

Nevertheless, if we want full function extensionality and not just a weak version, we need to add a new rule to type theory. First, we let's prove the following lemma:

\begin{lemma}
	The following function exists:
	
	\begin{center}
		$(f = g) \rightarrow \Pi_{(x: A)}(f(x) = g(x) : B(x))$
	\end{center}
\end{lemma}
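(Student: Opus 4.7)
The plan is to exploit the $\nu$ inference rule of the equality theory of Section~2, which from a premise $N : A$ and a path $M =_p M' : \Pi_{(x:A)}B$ yields the path $MN =_{\nu(p,N)} M'N : B[N/x]$. Since a computational path is by definition a composition of applications of the equality-theory rules, any computational path $f =_t g : \Pi_{(x:A)}B(x)$ can be instantiated pointwise at each $x : A$ to produce a computational path $f(x) =_{\nu(t,x)} g(x) : B(x)$. This is precisely the ingredient that converts an identity between functions into a family of pointwise identities.

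Step by step: I assume a hypothetical $z : Id_{\Pi_{(x:A)}B(x)}(f,g)$ together with $f, g : \Pi_{(x:A)}B(x)$ and an open $x : A$. To apply $Id - E$ on $z$, I introduce the discharged premise $f =_t g : \Pi_{(x:A)}B(x)$ and, using $\nu$, derive $f(x) =_{\nu(t,x)} g(x) : B(x)$. Then $Id - I$ produces $\nu(t,x)(f(x), g(x)) : Id_{B(x)}(f(x), g(x))$. A $\Pi - I$ abstraction over $x$ gives a term of type $\Pi_{(x:A)} Id_{B(x)}(f(x), g(x))$, and closing $Id - E$ yields $REWR(z, \acute{t}. \lambda x. \nu(t,x)(f(x), g(x))) : \Pi_{(x:A)} Id_{B(x)}(f(x), g(x))$. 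Finally, a $\lambda z$-abstraction produces the desired function of type $(f = g) \to \Pi_{(x:A)}(f(x) = g(x))$.

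There is essentially no technical obstacle here, because this is the \emph{easy half} of function extensionality: it follows immediately from the presence of the $\nu$ rule in the equality theory, combined with $Id - E$. The only mild subtlety is making explicit that applying $\nu$ to a path (rather than to a definitional equality) is legitimate, which is immediate from the definition of computational path as a composition of rewrites drawn from the equality-theory inference rules. The nontrivial direction is the converse, and this is precisely what the authors are leading up to postulating as a new primitive rule in order to obtain full function extensionality; the present lemma merely establishes the map $\mathit{happly}$ that, together with its converse, will be shown to be an equivalence.
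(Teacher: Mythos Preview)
Your proposal is correct and uses the same key ingredient as the paper, namely the $\nu$ rule to pass from $f =_{t} g$ to $f(x) =_{\nu(t)} g(x) : B(x)$. The paper's own proof is even terser: it simply writes the derivation
\[
[f =_{s} g],\ [x:A] \;\vdash\; f(x) =_{\nu(s)} g(x) : B(x)
\]
and then abstracts $\lambda s.\lambda x.(\ldots)$ directly, without invoking $Id\text{-}I$ or the $REWR$ eliminator at all; you wrap the same core step inside $Id\text{-}E$/$Id\text{-}I$, which is a perfectly legitimate (and arguably more formally explicit) variant of the same argument.
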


\begin{proof}
	The construction is straightforward:
	
	\bigskip
	\begin{prooftree}
		\AxiomC{$ [f =_{s} g]$}
		\AxiomC{$[x : A]$}
		\BinaryInfC{$f(x) =_{\nu(s)} g(x) : B(x)$}
		\UnaryInfC{$\lambda s. \lambda x.(f(x) =_{\nu(s)} g(x)) : (f = g) \rightarrow \Pi_{(x: A)}(f(x) = g(x) : B(x)) $}
	\end{prooftree}
\end{proof}

\bigskip

Now, to add function extensionality to our system, we need to add the following inference rule:

\begin{center}
	
	\bigskip
	
	\begin{prooftree}
		
		\AxiomC{$\lambda x.(f(x) =_{t} g(x)) : \Pi_{(x: A)}B$}
		\RightLabel{$ext$}
		\UnaryInfC{$f =_{ext(t)} g$}
	\end{prooftree}
	
	\bigskip
	
\end{center}

This rule is only needed if one wants to work with an extensional system. In that case, together with this inference rule, we also need to introduce two important reduction rules related to extensionality:

\begin{center}
	$ext(\nu(s)) =_{extr} s$
\end{center}

\begin{center}
	$\nu(ext(t)) =_{extl} t$
\end{center}

Since these rules are connected only to extensionality, we do not consider them as part of the basic rules of our rewriting system. Nevertheless, we can now prove the following:

\begin{lemma}
	$(f = g) \simeq \Pi_{(x: A)}(f(x) = g(x) : B(x))$
\end{lemma}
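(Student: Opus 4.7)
The plan is to exhibit a quasi-inverse between the two types and then invoke Definitions 17 and 18 to conclude the equivalence. By Lemma 19, we already have a function $\phi : (f = g) \rightarrow \Pi_{(x:A)}(f(x) = g(x))$ defined by $\phi \equiv \lambda s.\lambda x.\nu(s)$. For the inverse direction, I would use the newly introduced $ext$ rule to define $\psi : \Pi_{(x:A)}(f(x) = g(x)) \rightarrow (f = g)$ by $\psi \equiv \lambda h.\,REWR(h,\acute{t}.ext(t))$, so that on a witness of the form $\lambda x.(f(x) =_{t} g(x))$ the application reduces to $ext(t) : f = g$.

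Next, I would verify the two homotopies required by the definition of quasi-inverse. For the homotopy $\alpha : \phi \circ \psi \sim Id$, given a pointwise witness coming from some $\lambda x.(f(x) =_{t} g(x))$, the composition $\phi(\psi(-))$ reduces to $\lambda x.\nu(ext(t))$, and the extensionality reduction rule $\nu(ext(t)) =_{extl} t$ supplies a computational path at each $x$; promoting this via $\xi$ gives the required path of $\lambda$-abstractions, and hence, by the introduction rule for the homotopy type given in Section 5.4, the homotopy $\alpha$. For $\beta : \psi \circ \phi \sim Id_{(f=g)}$, given $s : f = g$ we have $\psi(\phi(s)) = ext(\nu(s))$, and the rule $ext(\nu(s)) =_{extr} s$ directly yields the desired computational path, from which $\beta$ is constructed.

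Finally, assembling the triple $(\psi,\alpha,\beta)$ produces a $qinv(\phi)$ in the sense of Definition 17, so Definition 18 tells us that $\phi$ is an equivalence, i.e., $(f = g) \simeq \Pi_{(x:A)}(f(x) = g(x))$.

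The high-level structure is essentially forced by the two new reduction rules $extl$ and $extr$, so the main obstacle is bookkeeping rather than mathematical content: one has to make sure the pointwise equality $\nu(ext(t)) =_{extl} t$ is correctly promoted to an equality of $\lambda$-abstractions (via $\xi$) before it can serve as a homotopy in the sense of Section 5.4, and one must check that the implicit use of $\eta$-conversion for $\Pi$-types (to identify $h$ with $\lambda x.APP(h,x)$) goes through without disturbing the computational paths just produced. Once those details are in place, the proof is a direct application of the groupoid and homotopy machinery already developed in the preceding subsections.
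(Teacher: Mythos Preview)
Your proposal is correct and follows essentially the same approach as the paper: both directions of the equivalence are given by $\nu$ and $ext$, and the two round-trips are closed using the reduction rules $extr$ and $extl$ respectively. The paper presents this more tersely as two derivation trees (applying $extl$ directly under the $\lambda$-abstraction rather than explicitly invoking $\xi$), whereas you spell out the quasi-inverse triple and the promotion step; but the mathematical content is identical.
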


\begin{proof}
	
	This theorem is the direct application of the aforementioned extensionality rules. We have:
	
	\bigskip
	
	\begin{prooftree}
		\AxiomC{$f =_{s} g : \Pi_{(x : A)} B$}
		\AxiomC{$x : A$}
		\BinaryInfC{$f(x) =_{\nu(s)} g(x) : B(x)$}
		\RightLabel{$\rhd_{extr} \quad f =_{s} g : \Pi_{(x : A)}B$}
		\UnaryInfC{$\lambda x.(f(x) =_{\nu(s)} g(x)) : \Pi_{(x : A)}B$}
		\UnaryInfC{$f =_{ext(\nu(s))} g : \Pi_{(x : A)} B$}
	\end{prooftree}
	
	\bigskip
	
	We also have:
	
	\bigskip
	
	\begin{prooftree}
		\AxiomC{$\lambda x.(f(x) =_{t} g(x)) : \Pi_{(x: A)} B$}
		\UnaryInfC{$f =_{ext(t)} g : \Pi_{(x : A)} B$}
		\AxiomC{$[x : A]$}
		\RightLabel{$\rhd_{extl} \quad \lambda x.(f(x) =_{t} g(x)) : \Pi_{(x: A)} B$}
		\BinaryInfC{$f(x) =_{\nu(ext(t))} g(x) : B(x)$}
		\UnaryInfC{$\lambda x.(f(x) =_{\nu(ext(t))} g(x)) : \Pi_{(x: A)} B$}
	\end{prooftree}
	
	\bigskip

	Those two derivations tree establish the equivalence.
\end{proof}

Before we prove the next theorem, we need to revisit transport. For any function $f: A(x) \rightarrow B(x)$, it is possible to transport along this function $f$, resulting in $p_{*}(f) : A(y) \rightarrow B(y)$. In our approach, one should think of $p_{*}(f)$ as a function that has transport of a term $ a : A(x)$ as input, i.e., $p_{*}(a) : A(y)$. Thus, we define $p_{*}(f)$ point-wise:

\begin{center}
	$p_{*}(f)(p_{*}(a)) \equiv p_{*}(f(a))$
\end{center}

\begin{lemma}
	For any path $x =_{p} y : X$ and functions $f: A(x) \rightarrow B(x)$ and $g: A(y) \rightarrow B(y)$, we have the following equivalence:
	
	\begin{center}
		$(p_{*}(f) = g) \simeq \Pi_{(a : A(x))}(p_{*}(f(a)) = g(p_{*}(a)))$
	\end{center}
\end{lemma}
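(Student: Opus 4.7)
The plan is to derive this equivalence by composing two simpler equivalences, using function extensionality (Lemma 19) together with the fact that transport along a path is itself an equivalence. Since $p_{*}(f)$ and $g$ are both functions $A(y) \rightarrow B(y)$, Lemma 19 immediately gives $(p_{*}(f) = g) \simeq \Pi_{(b : A(y))}(p_{*}(f)(b) = g(b))$. So the only nontrivial step is the equivalence $\Pi_{(b : A(y))}(p_{*}(f)(b) = g(b)) \simeq \Pi_{(a : A(x))}(p_{*}(f(a)) = g(p_{*}(a)))$, which reindexes the product along the transport $p_{*} : A(x) \rightarrow A(y)$ and then applies the pointwise definitional equality $p_{*}(f)(p_{*}(a)) \equiv p_{*}(f(a))$ introduced just above the statement.

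For the forward map I would take $h : \Pi_{(b : A(y))}(p_{*}(f)(b) = g(b))$ and form $\lambda a.\, h(p_{*}(a))$. Its instance at $b \equiv p_{*}(a)$ is a path $p_{*}(f)(p_{*}(a)) = g(p_{*}(a))$, which by the pointwise definition is literally $p_{*}(f(a)) = g(p_{*}(a))$. For the backward map I would use that $\sigma(p)_{*}$ is a quasi-inverse of $p_{*}$: by Lemma 10 together with the groupoid cancellations $\sigma(p) \circ p = \rho$ and $p \circ \sigma(p) = \rho$ (Lemma 4, items 2 and 3), there are computational paths $p_{*}(\sigma(p)_{*}(b)) = b$ and $\sigma(p)_{*}(p_{*}(a)) = a$. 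Given $k : \Pi_{(a : A(x))}(p_{*}(f(a)) = g(p_{*}(a)))$, I evaluate $k$ at $\sigma(p)_{*}(b)$ and then pre- and post-compose (via $\tau$ and $\mu_{g}$) with the path $p_{*}(\sigma(p)_{*}(b)) = b$ to land in $p_{*}(f)(b) = g(b)$; function extensionality then lifts this to $p_{*}(f) = g$.

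The main obstacle will be verifying that these two maps actually form a quasi-inverse pair, i.e. constructing the two homotopies required by Definition 17. Both round trips reduce to appealing to Lemma 10 ($q_{*}(p_{*}(u)) = (p \circ q)_{*}(u)$) combined with the groupoid cancellations above, and to Lemma 12 which lets me pull $g$ through a transport ($transport^{Q}(p, g(u)) = g(transport^{P}(p, u))$) to reconcile the two sides of the equation after substitution. The $\beta$- and $\eta$-reduction rules for the identity type, applied to the $REWR$ terms produced by Lemma 19 in each direction, then collapse the round trips to their identities up to $rw$-equality. The argument is essentially bookkeeping inside $LND_{EQ}-TRS$, and in particular should not require introducing any further reduction rule, since every ingredient has already been established in the earlier subsections.
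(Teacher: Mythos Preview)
Your approach is correct but takes a genuinely different route from the paper. The paper does not factor the equivalence through an intermediate $\Pi_{(b : A(y))}(p_{*}(f)(b) = g(b))$ at all: it simply writes down two short derivation trees, one in each direction, using only the $ext$ rule together with the reductions $ext(\nu(s)) \rhd s$ and $\nu(ext(t)) \rhd t$ and the pointwise definitional equality $p_{*}(f)(p_{*}(a)) \equiv p_{*}(f(a))$. In particular the paper never invokes $\sigma(p)_{*}$, Lemma~10, Lemma~12, or any reindexing argument; the round trips collapse immediately by the two $ext$-reductions.

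What your decomposition buys is rigour about the domain of quantification: standard function extensionality is indexed over $A(y)$, whereas the target $\Pi$-type is indexed over $A(x)$, and you make the passage between the two explicit via the quasi-inverse $\sigma(p)_{*}$. The paper instead applies $ext$ directly to a family indexed by $a : A(x)$ with the functions evaluated at $p_{*}(a)$, which is licensed only because it has just \emph{defined} $p_{*}(f)$ pointwise on inputs of the form $p_{*}(a)$; your version does not rely on that stipulation. The trade-off is that your proof is considerably longer and needs the extra lemmas for the homotopies, while the paper's proof stays entirely within its derivation-tree-plus-rewrite-rule idiom and finishes in two lines.
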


\begin{proof}
	We give two derivations tree, using the rules that we have established in the previous theorem:
	
	\bigskip
	
	\begin{prooftree}
		\AxiomC{$p_{*}(f) =_{p} g$}
		\AxiomC{$[a : A(x)]$}
		\BinaryInfC{$p_{*}(f)(p_{*}(a)) =_{\nu(p)} g(p_{*}(a)) : B(y)$}
		\UnaryInfC{$\lambda a.(p_{*}(f)(p_{*}(a) \equiv f(a)) =_{\nu(p)} g(p_{*}(a))) : \Pi_{(a : A(x))}(p_{*}(f(a)) = g(p_{*}(a))) $}
		\UnaryInfC{$p_{*}(f) =_{ext(\nu(p))} g$}
		\RightLabel{$\rhd_{extl}$}
		\UnaryInfC{$p_{*}(f) =_{p} g$}
	\end{prooftree}
	
	\bigskip
	
	And:
	
	\bigskip
	
	\begin{prooftree}
		\AxiomC{$\lambda a. (p_{*}(f(a)) =_{t} g(p_{*}(a)))$}
		\UnaryInfC{$\lambda a. (p_{*}(f)(p_{*}(a)) =_{t} g(p_{*}(a)))$}
		\UnaryInfC{$p_{*}(f) =_{ext(t)} g$}
		\AxiomC{$[a : A(x)]$}
		\BinaryInfC{$p_{*}(f)(p_{*}(a)) =_{\nu(ext(t))} g(p_{*}(a))$}
		\UnaryInfC{$p_{*}(f(a)) =_{\nu(ext(t))} g(p_{*}(a))$}
		\RightLabel{$\rhd_{extr}$}
		\UnaryInfC{$p_{*}(f(a)) =_{t} g(p_{*}(a))$}
		\UnaryInfC{$\lambda a. (p_{*}(f(a)) =_{t} g(p_{*}(a)))$}
	\end{prooftree}
	
	\bigskip
	
\end{proof}

\subsection{Univalence Axiom}

The first thing to notice is that in our approach the following lemma holds:

\begin{lemma}
	For any types $A$ and $B$, the following function exists:
	
	\begin{center}
		$idtoeqv: (A = B) \rightarrow (A \simeq B)$
	\end{center}
	
\end{lemma}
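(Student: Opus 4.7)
The plan is to construct the function $idtoeqv$ by applying the elimination rule of the identity type to extract a computational path $A =_p B$, then producing both the underlying function and its quasi-inverse via transport along that path. More explicitly, given a witness $e : Id(A,B)$, I would apply $REWR(e, \acute{p}. \ldots)$, so that within the hypothetical branch I have a computational path $A =_p B$ and may define $f \equiv \lambda a.\, p(A,B) \circ a : A \to B$, which is simply $transport$ along $p$ in the identity type family $\lambda X.X$.

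For the quasi-inverse, I would take $g \equiv \lambda b.\, \sigma(p)(B,A) \circ b : B \to A$, using the symmetry established in Lemma 2. The two required homotopies $\alpha : f \circ g \sim Id_B$ and $\beta : g \circ f \sim Id_A$ should come from the groupoid laws of Lemma 4, combined with Lemma 10 (which states that composing transports along $p$ and $q$ corresponds to transporting along the composite path $\tau(p,q)$). Concretely, for any $b : B$ we have
\[ f(g(b)) = p(A,B) \circ (\sigma(p)(B,A) \circ b) = \tau(\sigma(p), p)_{*}(b), \]
and by item (2) of Lemma 4 this equals $\rho_{*}(b) = b$, giving the pointwise homotopy; an entirely symmetric computation yields $\beta$. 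Wrapping these three pieces into a triple $(g,\alpha,\beta)$ gives $qinv(f)$, hence $f : A \simeq B$ by Definition of equivalence.

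Finally, I would discharge the hypothesis by forming the $REWR$ term and then $\lambda$-abstracting over $e$, obtaining the desired inhabitant of $(A = B) \to (A \simeq B)$. The rules involved are: $Id\text{-}E$ to access the path, the transport inference rule of Subsection 5.3 for the two functions, Lemma 4 (groupoid laws) and Lemma 10 (functoriality of transport over composition) for the homotopies, and $\Pi\text{-}I$ for the final abstraction.

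The main obstacle I anticipate is bookkeeping rather than conceptual: carefully formulating the homotopies $\alpha$ and $\beta$ in the $H^{p}_{f,g}$ notation of Subsection 5.4 so that the groupoid laws of Lemma 4 (which live in the identity type between paths) translate cleanly into the pointwise equalities $p_{*}(\sigma(p)_{*}(b)) = b$ and $\sigma(p)_{*}(p_{*}(a)) = a$. In particular, the $\rho$-transport step implicitly uses that transporting along a reflexivity path returns the original term, which in our path-based setting is a $\rho$-reduction rather than a definitional identity and should be made explicit to keep the derivation tidy.
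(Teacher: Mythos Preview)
Your proposal is correct and follows essentially the same approach as the paper: define $idtoeqv$ as transport $p_{*}$, take $(\sigma(p))_{*}$ as the quasi-inverse, and obtain the two homotopies by combining Lemma~10 (composition of transports) with the groupoid reductions $tr$/$tsr$ from Lemma~4 to collapse $\tau(p,\sigma(p))_{*}$ and $\tau(\sigma(p),p)_{*}$ to $\rho_{*}$. Your extra care about the $\rho_{*}(b)=b$ step and the explicit use of $Id\text{-}E$/$REWR$ is sound bookkeeping that the paper leaves implicit.
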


\begin{proof}
	The idea of the proof is similar to the one shown in \cite{hott}. We define $idtoeqv$ to be $p_{*} : A \rightarrow B$. Thus, to end this proof, we just need to show that $p_{*}$ is an equivalence.
	
	For any path $p$, we can form a path $\sigma(p)$ and thus, we have $(\sigma(p))_{*} : B \rightarrow A$. Now, we show that $(\sigma(p)))_{*}$ is a quasi-inverse of $p_{*}$. 
	
	We need to check that:
	
	\begin{enumerate}
		\item $p_{*}((\sigma(p)_{*}(b)) = b$
		\item $(\sigma(p))_{*}(p_{*}(a)) = a$
	\end{enumerate}
	
	Both equations can be shown by an application of lemma \textbf{5.10}:
	
	\begin{enumerate}
		\item $p_{*}((\sigma(p)_{*}(b)) = (\sigma(p) \circ p)_{*}(b) = \tau(p,\sigma(p))_{*}(b) =_{tr} \rho_{*}(b) =_{\mu(p)} b$.
		\item $(\sigma(p))_{*}(p_{*}(a)) = (p \circ \sigma(p))´_{*}(a) = \tau(\sigma(p),p)_{*}(a) =_{tsr} \rho_{*}(a) =_{\mu(p)} a$
	\end{enumerate}
\end{proof}

As we did in the previous section in lemma \textbf{5.15}, we showed that a function exists, but we did not show that it is an equivalence. In fact, basic type theory cannot conclude that $idtoeqv$ is an equivalence\cite{hott}. If we want this equivalence to be a property of our system, we must add a new axiom. This axiom is known as Voevodsky's univalence axiom\cite{hott}:

\begin{axiom}
	For any types $A,B$, $idtoeqv$ is an equivalence, i.e., we have:
	
	\begin{center}
		$(A = B) \simeq (A \simeq B)$
	\end{center}
	
\end{axiom}

\begin{lemma}
	For any $x,y : A$, $u(x) : B(x)$ and path $x =_{p} y : A$, we have:
	
	\begin{center}
		$transport^{B}(p,u(x)) = transport^{X \rightarrow X}(\mu_{B}(p),u(x)) = idtoeqv(\mu_{B}(p))(u(x))$
	\end{center}
\end{lemma}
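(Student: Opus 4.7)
The statement decomposes into two equalities, and each one follows from a result already proved in the excerpt. My plan is to prove them separately and then compose via transitivity ($\tau$).

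For the first equality $transport^{B}(p, u(x)) = transport^{X \to X}(\mu_{B}(p), u(x))$, I would instantiate Lemma 11 (transport through a composition) with $f := B$, viewed as a function assigning to each $x : A$ the type $B(x)$, and with the identity-on-types family $P := \lambda X.X$. Since $(P \circ B)(x) \equiv P(B(x)) \equiv B(x)$, the hypothesis $u(x) : (P \circ B)(x)$ of Lemma 11 is satisfied by the given $u(x) : B(x)$, and the conclusion $transport^{P \circ B}(p, u(x)) = transport^{P}(\mu_{B}(p), u(x))$ is exactly the desired equation after rewriting $P \circ B$ as $B$ and identifying $P$ with the type family $X \to X$. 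The explicit witness is the same computational path $\tau(\mu(p), \sigma(\mu_{f}(\mu(p))))$ produced in the proof of Lemma 11, instantiated with these data.

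For the second equality $transport^{X \to X}(\mu_{B}(p), u(x)) = idtoeqv(\mu_{B}(p))(u(x))$, I would unfold the definition of $idtoeqv$ given in the proof of Lemma 16, where $idtoeqv(q)$ is defined to be $q_{*}$ for any path $q$ between types. Applied to $\mu_{B}(p)$ and evaluated at $u(x)$, this gives $(\mu_{B}(p))_{*}(u(x))$, which is, by notation, exactly $transport^{X \to X}(\mu_{B}(p), u(x))$. So this step is essentially a definitional unfolding, and the corresponding computational path is $\rho$.

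The main thing to watch, rather than a real obstacle, is keeping the typing straight for $\mu_{B}(p)$: one must view $B$ as a function whose codomain is the universe of types, so that $\mu_{B}(p)$ is a path between the types $B(x)$ and $B(y)$, which is precisely the kind of argument that $transport^{X \to X}$ and $idtoeqv$ expect. Once this is clear, no new rule needs to be added to $LND_{EQ}$-$TRS$: the whole lemma is the composition, by $\tau$, of one instance of Lemma 11 and one definitional unfolding from Lemma 16.
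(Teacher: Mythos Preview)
Your proposal is correct. For the second equality both you and the paper argue by definitional unfolding of $idtoeqv$ as $p_{*}$, so that part is identical. For the first equality the paper does not invoke Lemma~11; instead it unfolds each of $transport^{B}(p,u(x))$ and $transport^{X\to X}(\mu_{B}(p),u(x))$ directly and observes that both reduce to $u(y)$ via the path $\mu(p)$, which implicitly yields the witness $\tau(\mu(p),\sigma(\mu(p)))$. Your route---instantiating Lemma~11 with $f:=B$ and $P:=\lambda X.X$---is a clean modular repackaging of exactly that computation, since the proof of Lemma~11 is the same two-tree argument; what you gain is reuse of a prior result, while the paper's version keeps the derivation self-contained.
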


\begin{proof}
	We develop every term of the equation and show that they arrive at the same conclusion:
	
	\bigskip
	
	\begin{prooftree}
		\AxiomC{$x =_{p} y : A$}
		\AxiomC{$u(x) : B(x)$}
		\BinaryInfC{$p(x,y) \circ u(x) : B(y)$}
		\UnaryInfC{$p(x,y) \circ u(x) =_{\mu(p)} u(y) : B(y)$}
	\end{prooftree}
	
	\bigskip
	
	\begin{prooftree}
		\AxiomC{$B(x) =_{\mu_{B}(p)} B(y)$}
		\AxiomC{$u(x) : B(x)$}
		\BinaryInfC{$\mu_{B}(p)(B(x),B(y)) \circ u(x) : B(y)$}
		\UnaryInfC{$\mu_{B}(p)(B(x),B(y)) \circ u(x) =_{\mu(p)} u(y): B(y)$}
	\end{prooftree}
	
	\bigskip
	
	Since $idtoeqv \equiv p_{*}$, we have that $idtoeqv(\mu_{B}(p))(u(x))$ is the same as $p_{*}(\mu_{B}(p))(u(x))$ that is the same as $transport^{X \rightarrow X}(\mu_{B}(p),u(x))$.	
\end{proof}

\subsection{Identity Type}

In this section, we investigate specific lemmas and theorems related to the identity type. We start with the following theorem:

\begin{theorem}
	if $f: A \rightarrow B$ is an equivalence, then for $x, y : A$ we have:
	
	\begin{center}
		$\mu_{f}: (x = y : A) \rightarrow (f(x) = f(y) : B)$
	\end{center}
\end{theorem}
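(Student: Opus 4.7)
The plan is to recognize that the theorem's content is nothing more than the existence of the functorial action of $f$ on paths, which was already established in full generality by Lemma 5 of the excerpt. The hypothesis that $f$ is an equivalence plays no role in constructing the map itself; it will presumably matter for later claims about $\mu_f$ (for example that it is itself an equivalence), but the statement as worded only asks for the map $\mu_f : (x = y : A) \to (f(x) = f(y) : B)$ to exist.

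To make this concrete, I would re-present the derivation from Lemma 5 specialized to the given $f$. Starting from a hypothetical computational path $x =_{s} y : A$ together with the function $f : A \to B$, a single application of the equality axiom $\mu$ produces $f(x) =_{\mu_{f}(s)} f(y) : B$. Promoting this to an identity-type term via $Id$-$I$ yields $\mu_{f}(s)(f(x), f(y)) : Id_{B}(f(x), f(y))$. Discharging the hypothesis $p : Id_{A}(x,y)$ by $Id$-$E$ with abstraction $\acute{s}.\mu_{f}(s)(f(x),f(y))$ gives $REWR(p,\acute{s}.\mu_{f}(s)(f(x),f(y))) : Id_{B}(f(x),f(y))$, and $\lambda$-abstracting over $p$ yields the desired term
$$\mu_{f} \;:=\; \lambda p.\, REWR(p,\acute{s}.\mu_{f}(s)(f(x),f(y))) \;:\; Id_{A}(x,y) \to Id_{B}(f(x), f(y)).$$

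Since this is a direct instantiation of a previously proven lemma via the $\mu$ inference rule of the equality theory and the $Id$-$E$ elimination rule, there is no genuine obstacle. The one point to keep clean in the write-up is notational: the symbol $\mu_{f}(s)$ appearing inside the abstraction is the rewrite produced by the $\mu$ axiom of the equality theory on the path $s$, whereas the $\mu_{f}$ named in the conclusion denotes the constructed function on identity-type terms; overloading the symbol is harmless but should be acknowledged so the derivation tree reads unambiguously.
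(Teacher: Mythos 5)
There is a genuine gap here, and you actually flagged it yourself before stepping around it: you observe that the hypothesis ``$f$ is an equivalence'' plays no role in your argument, and you then settle for constructing the bare map $\mu_{f} : Id_{A}(x,y) \rightarrow Id_{B}(f(x),f(y))$. But that map already exists for \emph{every} function $f$ --- it is exactly Lemma 5 (functoriality) --- so under your reading the theorem is a vacuous restatement of an earlier lemma with a spurious hypothesis attached. The intended content, which the paper makes explicit by identifying the statement with Theorem 2.11.1 of the HoTT book, is that when $f$ is an equivalence the induced map $\mu_{f}$ on path spaces is \emph{itself an equivalence}. That is the part of the claim that actually uses the hypothesis, and it is the part your proposal does not address.

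Concretely, what is missing is the construction of a quasi-inverse for $\mu_{f}$ and the verification of the two homotopies. Given the quasi-inverse data $(f^{-1}, \alpha, \beta)$ for $f$, one sends a path $q : f(x) = f(y)$ to the composite $\tau(\tau(\sigma(\beta(x)), \mu_{f^{-1}}(q)), \beta(y))$ (conjugating $\mu_{f^{-1}}(q)$ by the homotopy $\beta : f^{-1} \circ f \sim Id_{A}$), and then checks that composing this with $\mu_{f}$ in either order is propositionally equal to the identity. Those checks are where the naturality of homotopies (Lemma 14 / rule $hp$) and the cancellation of inverse paths enter; the paper notes that in the computational-paths setting the cancellations are discharged by rewrite rules \textbf{3}, \textbf{4}, \textbf{5} and \textbf{6} ($tr$, $tsr$, $trr$, $tlr$), and otherwise defers to the HoTT book's argument since the rest does not invoke the induction principle of the identity type. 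Your derivation tree for the forward map is fine as far as it goes, but it is the easy half and was already in hand.
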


\begin{proof}
	We will omit the specific details of this proof, since it is equal to the one of \textbf{theorem 2.11.1} presented in \cite{hott}. This is the case because this proof is independent of the usage of the induction principle of the identity type. The only difference is that at some steps we need to cancel inverse paths. In our approach, this is done by straightforward applications of \textbf{rules 3,4,5} and \textbf{6}.
\end{proof}

\begin{lemma}
	For any $a : A$, with $x_{1} =_{p} x_{2}$
	
	\begin{enumerate}
		
		\item $transport^{x \rightarrow (a = x)}(p,q(x_{1})) = \tau(q(x_{1}),p)$, \quad \quad \quad \quad \quad for $q(x_{1}): a = x_{1}$
		
		\item $transport^{x \rightarrow (x = a)}(p,q(x_{1})) = \tau(\sigma(p),q(x_{1})$, \quad \quad \quad \quad for $q(x_{1}): x_{1} = a$
		
		\item $transport^{x \rightarrow (x = x)}(p,q(x_{1})) = \tau(\sigma(p),\tau(q(x_{1}),p))$ \quad \ for $q(x_{1}) : x_{1} = x_{1}$
		
	\end{enumerate}
\end{lemma}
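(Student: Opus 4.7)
My plan is to treat each of the three cases by a short derivation tree in the style of Lemmas 5.10--5.12: in each case, one side is obtained by applying the transport rule directly to $q(x_1)$ along $x_1 =_p x_2$, while the other side is built from the underlying computational path of $q(x_1)$ by composing (with $\tau$ and possibly $\sigma$) with $p$. The two sides are then matched by a reduction rule specialising the general transport law $p(x,y)\circ f(x) =_{\mu(p)} f(y)$ to the identity-type family at hand.

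Concretely, for case 1 the term $q(x_1)$ carries an underlying computational path $a =_{q(x_1)} x_1$, which with $x_1 =_p x_2$ composes by transitivity to give $a =_{\tau(q(x_1),p)} x_2$, hence a term $\tau(q(x_1),p)(a,x_2) : Id_{A}(a,x_2)$. On the other hand, the transport rule produces $p(x_1,x_2) \circ q(x_1) : Id_{A}(a,x_2)$. To identify these I would, following the precedent of rules 40--44 above, introduce a new $rw$-rule declaring $p(x_1,x_2) \circ q(x_1) \rhd \tau(q(x_1),p)$ for $f(x) \equiv q(x) : Id_{A}(a,x)$. Case 2 is entirely symmetric, using $\sigma$: for $q(x_1) : Id_{A}(x_1,a)$ the moved endpoint is the source of $q$, and the corresponding rule gives $\tau(\sigma(p),q(x_1))$. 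Case 3 combines both cases, yielding the nested composition $\tau(\sigma(p),\tau(q(x_1),p))$.

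The main obstacle will be to justify these new $rw$-rules as the correct specialisations of the $\mu$-rule. For a genuine dependent function $f$ the value $f(x_2)$ is already determined, but when $f$ is a bare path term like $q$, the value at $x_2$ has to be \emph{defined}, and the only natural choice is precisely the composition with $p$ (or $\sigma(p)$, or both) dictated by the endpoints of the identity family. Once this is accepted, each of the three derivations is a one- or two-step rewrite; the real care lies in stating the new rules so that they cohere with the existing $LND_{EQ}-TRS$ and preserve its termination and confluence properties cited earlier.
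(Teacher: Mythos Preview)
Your plan is workable but takes a different route from the paper, and the divergence stems from a misreading of the notation $q(x_{1})$. In the paper this is a functional expression in the variable $x$: $q$ depends on $x$, so $q(x_{2})$ is already a well-defined inhabitant of the target identity type, not something that must be invented. The paper's proof exploits this by reducing \emph{both} sides of each equation to the common value $q(x_{2})$. On the transport side this is immediate from the existing law $p(x_{1},x_{2})\circ q(x_{1}) =_{\mu(p)} q(x_{2})$ built into the transport rule; on the composition side the paper records a short unlabelled reduction $\tau(q(x_{1}),p) \rhd q(x_{2})$ (and the analogous trees with $\sigma$ in cases~2 and~3). No new numbered $rw$-rules are introduced for this lemma.

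By contrast, you propose adding new $rw$-rules that directly declare $p(x_{1},x_{2})\circ q(x_{1}) \rhd \tau(q(x_{1}),p)$, precisely because you take $q(x_{2})$ to be undefined and in need of a definition via the composition. This would work, but it duplicates the existing transport semantics and brings the termination and confluence obligations you yourself flag, all of which the paper's argument sidesteps. The paper's route is lighter: it reuses the $\mu(p)$ law already in hand for one side and only needs to observe the matching reduction on the $\tau$-side, so that both sides meet at $q(x_{2})$.
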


\bigskip

\begin{proof}
	\begin{enumerate}
		\item We start establishing the following reduction:
		
		\bigskip
		
		\begin{prooftree}
			\AxiomC{$a =_{q(x_{1})} x_{1}$}
			\AxiomC{$x_{1} =_{p} x_{2}$}
			\RightLabel{$\rhd \quad a =_{q(x_{2})} x_{2}$}
			\BinaryInfC{$ a =_{\tau(q(x_{1},p))} x_{2}$}
		\end{prooftree}
		
		\bigskip
		
		Thus, we just need to show that  $transport^{x \rightarrow (a = x)}(p,q(x_{1}))$ also reduces to $a =_{q(x(2))} x_{2}$:
		
		\bigskip
		
		\begin{prooftree}
			\AxiomC{$x_{1} =_{p} x_{2}$}
			\AxiomC{$q(x_{1}) : a = x_{1}$}
			\RightLabel{$=_{\mu(p)} \quad (a =_{q(x_{2})} x_{2})$}
			\BinaryInfC{$p(x_{1},x_{2}) \circ q(x_{1}) : a = x_{2}$}
		\end{prooftree}
		
		\bigskip
		
		\item We use the same idea:
		
		\bigskip
		
		\begin{prooftree}
			\AxiomC{$x_{2} =_{\sigma(p)} x_{1}$}
			\AxiomC{$x_{1} =_{q(x_{1})} a$}
			\RightLabel{$\rhd \quad x_{2} =_{q(x_{2})} a $}
			\BinaryInfC{$x_{2} =_{\tau(\sigma(p),q(x_{1}))} a$}
		\end{prooftree}
		
		\bigskip
		
		\begin{prooftree}
			\AxiomC{$x_{1} =_{p} x_{2}$}
			\AxiomC{$q(x_{1}) : x_{1} = a$}
			\RightLabel{$=_{\mu(p)} \quad (x_{2} =_{q(x_{2})} a)$}
			\BinaryInfC{$p(x_{1},x_{2}) \circ q(x_{1}) : x_{2} = a$}
		\end{prooftree}
		
		\bigskip
		
		\item Same as the previous cases:
		
		\bigskip
		
		\begin{prooftree}
			\AxiomC{$x_{2} =_{\sigma(p)} x_{1}$}
			\AxiomC{$x_{1} =_{q(x_{1})} x_{1}$}
			\BinaryInfC{$x_{2} =_{\tau(\sigma(p),q(x_{1}))} x_{1}$}
			\AxiomC{$x_{1} =_{p} x_{2}$}
			\RightLabel{$\rhd \quad x_{2} =_{q(x_{2})} x_{2}$}
			\BinaryInfC{$x_{2} =_{\tau(\tau(\sigma(p), q(x_{1})), p)} x_{2}$}
			
		\end{prooftree}
		
		\bigskip
		
		\begin{prooftree}
			\AxiomC{$x_{1} =_{p} x_{2}$}
			\AxiomC{$q(x_{1}) : x_{1} = x_{1}$}
			\RightLabel{$=_{\mu(p)} \quad (x_{2} =_{q(x_{2})} x_{2})$}
			\BinaryInfC{$p(x_{1},x_{2}) \circ q(x_{1}) : x_{2} = x_{2}$}
		\end{prooftree}	
	\end{enumerate}
\end{proof}

\bigskip

\begin{theorem}
	For any $f,g : A \rightarrow B$, with $a =_{p} a' : A$ and $f(a) =_{q(a)} g(a) : B$, we have:
	
	\begin{center}
		$transport^{x \rightarrow (f(x) = g(x)) : B}(p,q) = \tau(\tau(\sigma(\mu{f}(p)),q(a)), \mu_{g}(p)) : f(a') = g(a')$
	\end{center}
\end{theorem}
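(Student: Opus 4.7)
The plan is to mirror the strategy of the immediately preceding lemma, of which the present theorem is the common generalization (its three cases arise by taking $f$ constant, $g$ constant, or $f = g = \mathrm{id}$). The intuition behind the target path $\tau(\tau(\sigma(\mu_f(p)), q(a)), \mu_g(p))$ is clear: starting at $f(a')$, we rewrite backwards along $\mu_f(p)$ to reach $f(a)$, apply $q(a)$ to reach $g(a)$, and finally rewrite forward along $\mu_g(p)$ to arrive at $g(a')$. The proof will show that transporting $q(a)$ along $p$ in the family $x \mapsto (f(x) = g(x))$ computes to exactly this concatenation.

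First I would exhibit a derivation tree establishing the reduction
\begin{center}
$f(a) =_{q(a)} g(a)$, \ $a =_{p} a'$ \ $\rhd$ \ $f(a') =_{q(a')} g(a') : B$,
\end{center}
realized concretely by $\tau(\tau(\sigma(\mu_f(p)), q(a)), \mu_g(p))$. This is constructed from applications of $\mu_f$ and $\mu_g$ to $p$ (giving $f(a) =_{\mu_f(p)} f(a')$ and $g(a) =_{\mu_g(p)} g(a')$), an inversion via $\sigma$, and two applications of $\tau$, exactly in the manner of the three subcases of the previous lemma.

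In parallel, I would compute the left-hand side directly using the transport rule from Section 5.3: given $a =_p a'$ and $q(a) : f(a) = g(a)$, the transport operation yields $p(a,a') \circ q(a) : f(a') = g(a')$, which by the $\mu(p)$-rule for dependent functions reduces to $q(a')$. Since both derivations produce a computational path from $f(a')$ to $g(a')$ that reduces to $q(a')$, they are $rw$-equal, which yields the theorem in the same style as the previous lemma's displayed chains $\cdots =_{\mu(p)} \cdots$.

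The main obstacle I anticipate is purely bookkeeping: keeping straight which $\mu$-rule is invoked at each step — $\mu_f$ applied to $p$ in $A \to B$, $\mu_g$ applied to $p$ in $A \to B$, and the dependent-$\mu$ for the family $x \mapsto (f(x) = g(x))$ — and verifying that the necessary reduction bridging the two sides is already among rules \textbf{1}--\textbf{42} of the expanded $LND_{EQ}$-$TRS$. If it is not, one more rule in the spirit of rules \textbf{40}--\textbf{44}, declaring the transport of an equality path to be the $\sigma$-$\tau$-conjugation above, would close the proof in a single step, just as rule \textbf{43} closed the proof of the homotopy-naturality lemma.
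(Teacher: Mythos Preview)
Your proposal is correct and follows essentially the same approach as the paper: build the composite $\tau(\tau(\sigma(\mu_f(p)), q(a)), \mu_g(p))$ via a derivation tree and reduce it to $q(a')$, then compute the transport side as $p(a,a') \circ q(a)$ and reduce it to $q(a')$ via the dependent $\mu(p)$-rule, concluding the two are $rw$-equal. Regarding your anticipated obstacle, the paper does not introduce a new rewrite rule here; it simply records both reductions with an unadorned $\rhd$ in exactly the style of the preceding lemma, so your hedge about needing a rule in the spirit of \textbf{40}--\textbf{44} is unnecessary.
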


\begin{proof}
	This proof is analogous to the proof of the previous lemma:
	
	\bigskip
	
	\begin{prooftree}
		\AxiomC{$a =_{p} a': A$}
		\UnaryInfC{$f(a) =_{\mu_{f}(p)} f(a')$}
		\UnaryInfC{$f(a') =_{\sigma(\mu_{f}(p))} f(a)$}
		\AxiomC{$f(a) =_{q(a) g(a)}$}
		\BinaryInfC{$f(a) =_{\tau(\sigma(\mu_{f}(p)), q(a))} g(a)$}
		\AxiomC{$a =_{p} a'$}
		\UnaryInfC{$g(a) =_{\mu_{g}(p)} g(a')$}
		\RightLabel{$\rhd \quad f(a') =_{q(a')} g(a')$}
		\BinaryInfC{$f(a') =_{\tau(\tau(\sigma(\mu{f}(p)),q(a)), \mu_{g}(p))} g(a')$}
	\end{prooftree}
	
	\bigskip
	
	And:
	
	\bigskip
	
	\begin{prooftree}
		\AxiomC{$a =_{p} a'$}
		\AxiomC{$q(a) : f(a) = g(a)$}
		\RightLabel{$=_{\mu(p)} \quad (f(a') =_{q(a')} g(a'))$}
		\BinaryInfC{$p(a,a') \circ q(a) : f(a') = g(a')$}
	\end{prooftree}
\end{proof}

\bigskip

\begin{theorem}
	For any $f,g :\Pi_{(x : A)}B(x)$, with $a =_{p} a' : A$ and $f(a) =_{q(a)} g(a) : B(a)$, we have:
	
	\begin{center}
		$transport^{x \rightarrow (f(x) = g(x) : B(x))}(p,q) = \tau(\tau(\sigma(apd_{f}(p)),\mu_{transport^{B}p}(q)), apd_{g}(p))$
	\end{center}
	where $apd_{f}(p) \equiv (p(a,a') \circ f(a) =_{\mu(p)} f(a'))$ and $apd_{g} \equiv  (p(a,a') \circ g(a) =_{\mu(p)} g(a'))$
\end{theorem}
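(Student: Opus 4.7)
The plan is to mirror the proof strategy of Theorem 5.24 (the non-dependent analogue), adapting it to the present setting where the key complication is that $f(a) : B(a)$ and $f(a') : B(a')$ now live in distinct types and must be bridged by $transport^B$. As in the previous theorem, the goal will be to produce two derivation trees that both yield a path from $f(a')$ to $g(a')$ in $B(a')$, and to show these coincide up to the rewrite system.

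First I would construct the right-hand composite. From $a =_p a' : A$ and the dependent $f : \Pi_{(x:A)}B(x)$, the transport-aware $\mu$-rule introduced in Section 5.3 yields $apd_f(p) \equiv (p(a,a') \circ f(a) =_{\mu(p)} f(a'))$, and by $\sigma$ I get a path $f(a') = transport^B(p, f(a))$ in $B(a')$. Next, since $q(a) : f(a) = g(a)$ lives in $B(a)$ and $transport^B(p,-) : B(a) \to B(a')$ is an ordinary function, the standard $\mu$-rule produces $\mu_{transport^B p}(q(a)) : transport^B(p, f(a)) = transport^B(p, g(a))$ entirely inside $B(a')$. Finally $apd_g(p)$ closes the gap to $g(a')$, and composing these three pieces via $\tau$ gives the claimed right-hand side.

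For the left-hand side, I would apply the transport rule directly to $q(a)$, viewed as an inhabitant of $(f(x) = g(x) : B(x))[a/x]$, yielding $p(a,a') \circ q(a) =_{\mu(p)} q(a')$, a path from $f(a')$ to $g(a')$ in $B(a')$. To identify this with the composite above, I would observe that the two trees use precisely the same data $(p, q(a), f, g)$ and target the same type, and that the right-hand composite is nothing more than a heterogeneous-equality version of the non-dependent computation of Theorem 5.24: we lift $q(a)$ to $B(a')$ by $transport^B(p,-)$ and then connect the lifted endpoints to the honest values $f(a')$ and $g(a')$ using $apd_f$ and $apd_g$.

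The main obstacle will be making the two derivation trees match under $LND_{EQ}$-$TRS$. Unlike the non-dependent case, no single rule among \textbf{40}--\textbf{44} directly equates the transported path $p(a,a') \circ q(a)$ with its three-step decomposition through $transport^B(p, f(a))$ and $transport^B(p, g(a))$, because those intermediate terms never appear in the flat setting. I expect I will need to introduce a new reduction rule -- in the style of those added earlier in this section -- encoding precisely the dependent functoriality of transport on equality fibrations, or, alternatively, to combine Lemma \textbf{5.10} (concatenation of transports) and Lemma \textbf{5.12} (transport through a composite fibration) to reduce both sides to a common normal form.
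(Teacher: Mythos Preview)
Your construction of both derivation trees is exactly what the paper does, and your reading of the non-dependent analogue is correct. The divergence is in your final paragraph: you anticipate needing a new reduction rule or an appeal to Lemmas~5.10 and~5.12 to relate the two trees directly, but the paper does neither. Instead it follows precisely the pattern of the preceding theorem: each side is \emph{separately} reduced to the common normal form $f(a') =_{q(a')} g(a')$. You already carried this out for the left-hand side when you wrote $p(a,a') \circ q(a) =_{\mu(p)} q(a')$; the paper simply does the same for the right-hand composite, annotating the full tree for $\tau(\tau(\sigma(\mu(p)),\mu_{transport^{B}p}(q(a))),\mu(p))$ with a ``$\rhd\ f(a') =_{q(a')} g(a')$'' and concluding. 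No rule directly equating the three-step decomposition with the transported path is invoked, because none is needed once both collapse to $q(a')$.

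So your obstacle is illusory: rather than searching for a bridge between the two trees, just observe that the right-hand composite also rewrites to $q(a')$. (The paper is admittedly terse about which $LND_{EQ}$-$TRS$ steps justify that collapse, but it does not add any rule for this theorem.)
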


\begin{proof}
	Similar to previous theorem:
	
	\bigskip
	
	\begin{prooftree}
		\AxiomC{$p(a,a') \circ f(a) =_{\mu(p)} f(a')$}
		\UnaryInfC{$f(a') =_{\sigma(\mu(p))} p(a,a') \circ f(a)$}
		\AxiomC{$f(a) =_{q(a)} g(a)$}
		\UnaryInfC{$p(a,a') \circ f(a) =_{\mu_{trans^{B}p}(q(a))} p(a,a') \circ g(a)$}
		\BinaryInfC{$f(a') =_{\tau(\sigma(\mu(p)),\mu_{trans^{B}p}(q(a)))} p(a,a') \circ g(a)$}
		\AxiomC{$p(a,a') \circ g(a) =_{\mu(p)} g(a')$}
		\BinaryInfC{$f(a') =_{\tau(\tau(\sigma(\mu(p)),\mu_{trans^{B}p}(q(a))), \mu(p)} g(a')$}
	\end{prooftree}
	
	\bigskip
	
	\begin{center}
		$\rhd$ \quad $f(a') =_{q(a')} g(a')$
	\end{center}
	
	And:
	
	\bigskip
	
	\begin{prooftree}
		\AxiomC{$a =_{p} a'$}
		\AxiomC{$q(a) : f(a) = g(a)$}
		\RightLabel{$\rhd_{\mu(p)} \quad f(a') =_{q(a')} g(a')$}
		\BinaryInfC{$p(a,a') \circ q(a) : f(a') = g(a')$}
	\end{prooftree}	
\end{proof}

\bigskip

\begin{theorem}
	For any $a =_{p} a' : A$, $a =_{q} a$ and $a' =_{r} a'$, we have:
	
	\begin{center}
		$(transport^{x \rightarrow (x = x)}(p,q) = r) \simeq (\tau(q,p) = \tau(p,r))$
	\end{center}
	
\end{theorem}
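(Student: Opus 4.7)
The plan is to first apply Lemma 5.20 item 3, which already supplies the equality
$transport^{x \rightarrow (x=x)}(p,q) = \tau(\sigma(p), \tau(q,p))$. Consequently the goal is reduced to producing an equivalence
$$(\tau(\sigma(p), \tau(q,p)) = r) \; \simeq \; (\tau(q,p) = \tau(p,r)).$$
Both sides are path types between paths, and the groupoid laws from Lemma 4 (associativity via rule $tt$, the inverse laws via $tr$ and $tsr$, and the unit laws via $tlr$ and $trr$) already give the $rw$-equalities I will need to move $p$ from one side of the equation to the other.

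For the forward direction, given a witness $h : \tau(\sigma(p),\tau(q,p)) = r$, I would left-compose with $p$ using $\mu_{\tau(p,-)}(h)$ to obtain $\tau(p,\tau(\sigma(p),\tau(q,p))) = \tau(p,r)$. Then I reassociate by $tt$ to rewrite the LHS as $\tau(\tau(p,\sigma(p)), \tau(q,p))$, apply $tr$ to replace $\tau(p,\sigma(p))$ by $\rho$, and finally use $tlr$ to collapse $\tau(\rho,\tau(q,p))$ to $\tau(q,p)$, yielding the desired witness of $\tau(q,p) = \tau(p,r)$. The backward direction is symmetric: from $k : \tau(q,p) = \tau(p,r)$ I form $\mu_{\tau(\sigma(p),-)}(k)$ to get $\tau(\sigma(p),\tau(q,p)) = \tau(\sigma(p),\tau(p,r))$, then reassociate with $tt$, cancel $\tau(\sigma(p),p)$ via $tsr$, and strip the $\rho$ via $tlr$ to arrive at $\tau(\sigma(p),\tau(q,p)) = r$. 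Each step here is a single application of an $rw$-rule wrapped in the identity-type introduction rule, so both maps can be written as explicit $\lambda$-terms using $REWR$.

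To complete the equivalence, I would furnish the two homotopies required by Definition 5.13 (quasi-inverse). For the round trip starting at $h$, the composite rewrite consists of left-composing with $p$, applying $tt$, $tr$, $tlr$, and then the backward map which left-composes with $\sigma(p)$, applies $tt$, $tsr$, $tlr$; by the associativity and cancellation rules these rewrites chain together to an $rw$-equality with the identity on $h$, which I can package as a witness of $f^{-1}(f(h)) \sim h$. The other round trip is verified the same way. The main obstacle will be bookkeeping: one must check that the chain of $rw$-reductions used on the way out and on the way back actually cancels on the nose, and not merely up to some further unaccounted-for reduction. Since $LND_{EQ}$-$TRS$ is confluent and terminating (as recalled at the end of Section 4), any two such chains between the same endpoints are $rw$-equal, which is precisely what is needed to obtain the homotopies.
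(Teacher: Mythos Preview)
Your proposal is correct and follows essentially the same route as the paper: invoke Lemma~5.20(3) to replace the transported path by $\tau(\sigma(p),\tau(q,p))$, then pass between the two equations by left-composing with $\tau(p,-)$ (the paper calls this map $f$) in one direction and with $\tau(\sigma(p),-)$ (the paper's $f^{-1}$) in the other, using $tt$, $tr/tsr$, and $tlr$ to simplify. You are in fact more careful than the paper, which displays only the two derivation trees and does not explicitly verify the round-trip homotopies; your closing remark about confluence and termination of $LND_{EQ}$-$TRS$ is a reasonable gesture toward that verification, though strictly speaking confluence at the level of paths does not by itself guarantee equality of \emph{rewrite chains}, so the homotopies would still need to be exhibited by explicit $rw_2$-reductions if one wanted a fully rigorous argument.
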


\begin{proof}
	We use lemma \textbf{5.20} to prove this theorem, together with \textbf{rules 3,4,5,6} and \textbf{37}. We also consider functions $f(x) \equiv \tau(p,x) : (a' = z) \rightarrow (a = z)$ and $f^{-1}(x) \equiv \tau(\sigma(p),x) : (a = z) \rightarrow (a' = z)$. We proceed the same way as we have done to prove previous equivalences. In other words, we show two derivations trees. They are as follows:
	
	\bigskip
	
	\begin{prooftree}
		\AxiomC{$transport^{x \rightarrow (x = x)}(p,q) = r$}
		\RightLabel{lemma 5.20}
		\UnaryInfC{$\tau(\sigma(p),\tau(q,p)) = r$}
		\RightLabel{$\mu_{f}$}
		\UnaryInfC{$\tau(p,\tau(\sigma(p),\tau(q,p))) = \tau(p,r)$}
		\UnaryInfC{$\tau(\tau(p,\sigma(p)),\tau(q,p)) = \tau(p,r)$}
		\UnaryInfC{$\tau(\rho,\tau(q,p)) = \tau(p,r)$}
		\UnaryInfC{$\tau(q,p) = \tau(q,p)$}
	\end{prooftree}
	
	\bigskip
	
	And:
	
	\bigskip
	
	\begin{prooftree}
		\AxiomC{$\tau(q,p) = \tau(p,r)$}
		\RightLabel{$\mu_{f^{-1}}$}
		\UnaryInfC{$\tau(\sigma(p),\tau(q,p)) = \tau(\sigma(p),\tau(p,r))$}
		\UnaryInfC{$\tau(\sigma(p),\tau(q,p)) = \tau(\tau(\sigma(p),p),r)$}
		\UnaryInfC{$\tau(\sigma(p),\tau(q,p)) = \tau(\tau(\rho,r)$}
		\UnaryInfC{$\tau(\sigma(p),\tau(q,p)) = r$}
		\RightLabel{lemma 5.20}
		\UnaryInfC{$transport^{x \rightarrow (x = x)}(p,q) = r$}	
	\end{prooftree}
\end{proof}

\bigskip

\subsection{Coproduct}

One essential thing to remember is that a product $A + B$ has a left injection $inl : A \rightarrow A + B$ and $inr: B \rightarrow A + B$. As described in \cite{hott}, it is expected that $A + B$ contains copies of $A$ and $B$ disjointly. In our path based approach, we achieve this by constructing every path $inl(a) = inl(b)$  and $inr(a) = inr(b)$ by applications of axiom $\mu$ on paths $a = b$. Thus we show that we get the following equivalences:

\begin{enumerate}
	\item $(inl(a_{1}) = inl(a_{2})) \simeq (a_{1} = a_{2})$
	\item $(inr(b_{1}) = inr(b_{2})) \simeq (b_{1} = b_{2})$
	\item $(inl(a) = inr(b)) \simeq 0$
\end{enumerate}

To prove this, we use the same idea as in \cite{hott}. We characterize the type:

\begin{center}
	$(x \rightarrow (inl(a_{0}) = x)) : \Pi_{(x : A + B)}(inl(a_{0} = x))$
\end{center}

To do this, we define a type $code$:

\begin{center}
	$x : A + B \vdash code(x)$ type
\end{center}

Our main objective is to prove the equivalence $\Pi_{(x: A + B)}((inl(a_{0}) = x) \simeq code(x))$. Using the recursion principle of the coproduct, we can define $code$ by two equations:

\begin{center}
	$code(inl(a)) \equiv (a_{0} = a)$
	
	$code(inr(b)) \equiv 0$
\end{center}

\begin{theorem}
	For any $x : A + B$, we have $inl(a_{0} = x) \simeq code(x)$
\end{theorem}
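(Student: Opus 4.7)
The plan is to apply the encode-decode strategy, adapted to the path-based framework. I would construct two maps $encode_{x} : (inl(a_{0}) = x) \to code(x)$ and $decode_{x} : code(x) \to (inl(a_{0}) = x)$, and verify that both composites reduce to the identity through the rewrite system. This yields a quasi-inverse of $encode$ and hence the desired equivalence in the sense of the definitions in Section 5.3.

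For $encode$, the natural choice is $encode_{x}(p) := transport^{code}(p,\rho_{a_{0}})$; since $\rho_{a_{0}}$ inhabits $code(inl(a_{0})) \equiv (a_{0} = a_{0})$, transporting along any $p : inl(a_{0}) = x$ yields a term of $code(x)$. For $decode$, I would use coproduct recursion on $x$: when $x \equiv inl(a)$ a term $q : (a_{0} = a) \equiv code(inl(a))$ is sent to $\mu_{inl}(q)(inl(a_{0}),inl(a)) : Id_{A+B}(inl(a_{0}),inl(a))$; when $x \equiv inr(b)$ the domain $code(inr(b)) \equiv 0$ is empty, so the $0$-recursor produces the required element of $Id_{A+B}(inl(a_{0}),inr(b))$ vacuously.

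For the composite $encode \circ decode$ at $x \equiv inl(a)$ and $q : a_{0} = a$, I would compute $transport^{code}(\mu_{inl}(q), \rho_{a_{0}})$; the transport-along-$\mu_{f}$ lemma rewrites this as $transport^{code \circ inl}(q, \rho_{a_{0}})$, which is $transport^{y \mapsto (a_{0} = y)}(q, \rho_{a_{0}})$, and the transport lemma for the identity fibration $x \mapsto (a = x)$ then contracts it to $\tau(\rho_{a_{0}}, q)$, which finally reduces to $q$ via the rewrite rule $tlr$. For $decode \circ encode$ applied to $p : inl(a_{0}) = x$, I would eliminate on $p$ with $REWR(p, \acute{t}.\ldots)$, reducing the problem to the case where $p \equiv t(inl(a_{0}),x)$ for a computational path $inl(a_{0}) =_{t} x$; the primitive case $t \equiv \rho$ (with $x \equiv inl(a_{0})$) gives $encode(\rho_{a_{0}}) = \rho_{a_{0}}$ and $decode(\rho_{a_{0}}) = \mu_{inl}(\rho_{a_{0}})$, which contracts to $\rho$ by a rewrite $\mu_{f}(\rho) \rhd \rho$ analogous to the rule $ci$ of Lemma 6(4).

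The main obstacle is the branch $x \equiv inr(b)$ in the second round-trip: from $p : inl(a_{0}) = inr(b)$ one obtains $encode(p) : 0$, and one must still exhibit a computational path certifying that $decode(encode(p))$ equals $p$. I would handle this by noting that the empty type is a universal source, so the $0$-recursor applied to $encode(p) : 0$ produces every conceivable output, including the required identification; operationally the case is vacuous. A subsidiary technical point is to confirm that the contractions $\mu_{inl}(\rho) \rhd \rho$ and the interaction of $transport^{code}$ with $\mu_{inl}$ are available in $LND_{EQ}-TRS$; if they are not already present, I would add them in the same spirit as the rules 40 through 44 introduced elsewhere in Section 5.
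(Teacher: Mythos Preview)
Your overall strategy matches the paper's: same definitions of $encode$ and $decode$, and your computation of $encode\circ decode$ at $x\equiv inl(a)$ is exactly the paper's (Lemma~11 to pass through $\mu_{inl}$, Lemma~20 to rewrite as $\tau(\rho_{a_0},q)$, then rule~6 ($tlr$)). The vacuous $inr$ branch is also handled the same way.

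The one divergence is the $decode\circ encode$ round-trip. You propose to use $REWR$ and then check only ``the primitive case $t\equiv\rho$''. In the path-based system of this paper, $REWR$ does \emph{not} reduce the problem to the reflexive path as in standard Martin-L\"of path induction; its $\beta$-rule substitutes an arbitrary computational path $t$ for the bound variable, so you must still show $decode(encode(t))=t$ for a general $t$, not just $\rho$. Your reduction to the $\rho$ case is therefore unjustified in this framework.

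The paper closes this gap differently: at the start of the coproduct subsection it \emph{stipulates} that every path $inl(a_1)=inl(a_2)$ in the path-based approach is constructed by applying $\mu_{inl}$ to some path $a_1=a_2$. With that stipulation, any $s$ with $inl(a_0)=_s inl(a)$ is already of the form $\mu_{inl}(c)$ for some $c$, so after computing $encode$ and then $decode$ one lands back on a $\mu_{inl}$-path, which matches $s$ by construction. This is what makes the argument go through, and it is a global design assumption rather than something produced by $REWR$. Your rule $\mu_{inl}(\rho)\rhd\rho$ (what the paper later calls $mxp$, rule~45) is indeed available, but by itself it does not suffice; you need the paper's stipulation (or an equivalent structural analysis of arbitrary computational paths into $inl$) to finish the $decode\circ encode$ direction.
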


\begin{proof}
	To show this equivalence, we use the same method as the one shown in \cite{hott}. The main idea is to define  functions
	
	\begin{center}
		
		$encode : \Pi_{(x : A + B)}\Pi_{(p : inl(a_{0}) = x)} code(x)$
		
		$decode: \Pi_{(x : A + B)} \Pi_{(c : code(x))}(inl(a_{0}) = x))$
		
	\end{center}
	
	such that $decode$ acts as a quasi-inverse of $encode$.
	
	We start defining $encode$:
	
	\begin{center}
		$encode(x,s) \equiv transport^{code}(s,\rho_{a_{0}})$
	\end{center}	
	
	We notice that $\rho_{a_{0}} : code(inl(a_{0}))$, since $code(inl(a_{0})) \equiv (a_{0} =_{\rho} a_{0})$
	We also notice that for $encode$, it is only possible for the argument $x$ to be of the form $x \equiv inl(a)$, since the other possibility is $x \equiv inr(a)$, but that case is not possible, because we would have a function to $code(inr(b)) \equiv 0$.
	
	For $decode$, when $x \equiv inl(a)$, we have that $code(x) \equiv a_{0} =_{c} a$ and thus, we define decode as $(inl(a_{0}) =_{\mu(c)} inl(a))$. When $x \equiv inr(a)$, then $code(x) \equiv 0$ and thus, we define $decode$ as having any value, given by the elimination of the type $0$. Now, we can finally prove the equivalence.
	
	Starting with $encode$, we have $x \equiv inl(a)$, $inl(a_{0}) =_{s} x$. Since \\	$encode(x,s) \equiv transport^{code}(s,\rho_{a_{0}})$, we have:
	
	\bigskip
	\begin{prooftree}
		\AxiomC{$inl(a_{0}) =_{s} inl(a)$}
		\AxiomC{$\rho_{a_{0}} : code(inl(a_{0}))$}
		\BinaryInfC{$s(inl(a_{0}),inl(a)) \circ \rho_{a_{0}} : code(inl(a))$}
		\RightLabel{$=_{\mu(s)}$}
		\UnaryInfC{$\rho_{a} : code(inl(a)) \equiv code(x)$}
	\end{prooftree}
	
	\bigskip
	
	Now, we can go back to $inl(a_{0}) = inl(a)$ by an application of $decode$, since:
	
	\begin{center}
		$decode(\rho_{a} : code(x)) \equiv inl(a_{0}) =_{\mu_{inl}} inl(a)$
	\end{center}
	
	And we conclude this part, since in our approach $inl(a_{0}) =_{s} inl(a)$ is constructed by applications of axiom $\mu$.
	
	Now, we start from decode. Let $c : code(x)$. If $x \equiv inl(a)$, then $c : a_{0} = a$ and thus, $decode(c) \equiv inl(a_{0}) =_{\mu(c)} inl(a)$. Now, we apply $encode$. We have:
	\bigskip
	
	\quad \quad \quad \quad \quad \quad \quad \quad \quad \quad $encode(x, \mu_{c}) = transport^{code}(\mu_{c},\rho_{a_{0}})$
	
	\quad \quad \quad \quad \quad \quad \quad \quad \quad \quad \quad \quad \quad \quad \quad \quad  $ = transport^{a \rightarrow (a_{0} = a)}(c,\rho_{a_{0}})$ \quad (\textbf{Lemma 11})
	
	\quad \quad \quad \quad \quad \quad \quad \quad \quad \quad \quad \quad \quad \quad \quad \quad   $
	= \tau(\rho_{a_{0}},c)$ \quad \quad \quad \quad \quad \quad \quad \  \quad(\textbf{Lemma 20})
	
	\quad \quad \quad \quad \quad \quad \quad \quad \quad \quad \quad \quad \quad \quad \quad \quad $= c$  \quad \quad \quad \quad \quad \quad \quad \quad \quad \quad \quad \ (\textbf{Rule 6})
	
	If $x \equiv inr(b)$, we have that $c : 0$ and thus, as stated in \cite{hott}, we can conclude anything we wish.
	
\end{proof}

\subsection{Reflexivity}

In this section, our objective is to conclude an important result related to the reflexive path $\rho$:

\begin{theorem}
	For any type A and a path $x =_{\rho} x : A$, if a path $s$ is obtained by a series (perhaps empty) of applications of axioms and rules of inference of $\lambda\beta\eta$-equality theory for type theory to the path $\rho$, then there is a path $t'$ such that $s =_{t'} \rho$.
\end{theorem}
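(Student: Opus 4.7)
The natural approach is structural induction on the series of applications that yields $s$ from $\rho$ (equivalently, on the derivation tree whose leaves are reflexive paths and whose internal nodes are applications of the inference rules of the equality theory). Throughout, we will use the fact, established in Proposition~\ref{proposition3.7}, that $=_{rw}$ is an equivalence relation, so that we may chain intermediate $rw$-equalities freely.

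The base case is immediate: for the empty series, $s \equiv \rho$ and we take $t' = \rho$. For the inductive step, we assume the statement holds for all sub-derivations and case-split on the last inference rule applied, showing in each case that applying the rule to paths which are $rw$-equal to the appropriate reflexive paths produces a result still $rw$-equal to reflexivity. The cases are: (i) for $\sigma$, one has $\sigma(p) =_{rw} \sigma(\rho)$ by the IH; then $\sigma(\rho) =_{rw} \rho$ follows by chaining rule $tr$ ($\tau(\rho,\sigma(\rho)) \rhd_{tr} \rho$) with rule $tlr$ ($\tau(\rho,\sigma(\rho)) \rhd_{tlr} \sigma(\rho)$); (ii) for $\tau$, by IH both inputs are $rw$-equal to $\rho$, and a single application of $trr$ (or $tlr$) gives $\tau(\rho,\rho) \rhd \rho$; (iii) for $\mu$, $\nu$, and $\xi$, the applications land at the corresponding reflexive path at the new endpoints: $\mu_f(\rho_x) =_{rw} \rho_{f(x)}$, $\nu_f(\rho_x) =_{rw} \rho_{f(x)}$, and $\xi(\rho_M) =_{rw} \rho_{\lambda x.M}$. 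In each of these last three cases, the result follows because $\rho$ is neutral under the congruence rules, and the endpoints of the produced path are syntactically identical.

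The main obstacle I anticipate is rule coverage within $LND_{EQ}\text{-}TRS$. The reductions required for the congruence cases ($\mu_f(\rho) \rhd \rho$, $\nu_f(\rho) \rhd \rho$, $\xi(\rho) \rhd \rho$) are natural but may not be among the original 39 rules given in appendix~B. Following the same methodology used earlier in the paper (where rules 40--44 were added on demand as the theory was developed), I would add these as new $rw$-rules, justified by the observation that each has a trivial proof tree: applying a congruence rule to a reflexive premise yields a derivation whose conclusion has identical endpoints, which is exactly what $\rho$ encodes. One small subtlety is to make sure these new rules interact well with the termination and confluence arguments cited for $LND_{EQ}\text{-}TRS$; since each new rule strictly decreases the size of the path expression, termination is preserved, and confluence can be checked by standard critical-pair analysis against the existing rules involving $\rho$ (in particular $tr$, $tsr$, $trr$, $tlr$, and $ss$).

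Once these rules are in place, the induction closes uniformly, and the witness $t'$ is constructed explicitly as the composition (via transitivity of $=_{rw}$) of the intermediate reductions produced at each step of the induction, yielding $s =_{t'} \rho$ as required.
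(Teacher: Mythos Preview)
Your proposal is correct and follows essentially the same approach as the paper: structural induction on the derivation, with the cases for $\tau$ handled by $trr$, and the congruence cases $\mu$, $\nu$, $\xi$ handled by introducing new $rw$-rules (the paper adds precisely these as rules 45--47, labelled $mxp$, $nxp$, $xxp$). The only minor divergence is your treatment of the $\sigma$ case: you derive $\sigma(\rho)=_{rw}\rho$ indirectly by chaining $tr$ and $tlr$ through $\tau(\rho,\sigma(\rho))$, whereas the paper simply invokes the already-existing rule~1 of the appendix, $\sigma(\rho)\rhd_{sr}\rho$, which gives the result in one step.
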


\begin{proof}
	\begin{itemize}
		\item Base Case:
		
		We can start only with a path $x =_{\rho}$. In that case, it is easy, since we have $\rho =_{\rho} \rho$.

	\end{itemize}
	
	Now, we consider the inductive steps. Starting from a path $s$ and applying $\tau$, $\sigma$, we already have rules yield the desired path:
	
	\begin{itemize}
		\item $s = \sigma(s')$, with $s' =_{t'} \rho$.
		
		In this case, we have $s = \sigma(s') = \sigma(\rho) =_{sr} \rho$. 
		
		\item $s = \tau(s',s'')$, with $s' =_{t'} \rho$ and $s'' =_{t''} \rho$.
		
		We have that $s = \tau(s',s'') = \tau(\rho,\rho) =_{trr} \rho$ 
		
		The cases for applications of $\mu$, $\nu$ and $\xi$ remain to be proved. We introduce three new rules that handle these cases.
		
		\item $s = \mu(s')$, with $s' =_{t'} \rho$.
		
		We introduce rule \textbf{45}:
		
		\bigskip
		
		\begin{prooftree}
			\AxiomC{$x =_{\rho_{x}} x : A$}
			\AxiomC{$[f : A \rightarrow B]$}
			\RightLabel{$\rhd_{mxp}$ \quad $f(x) =_{\rho_{f(x)}} f(x) : B(x)$}
			\BinaryInfC{$f(x) =_{\mu(\rho_{x})} f(x) : B(x)$}
		\end{prooftree}
		
		\bigskip
		
		This rule is also valid for the dependent case:
		
		\bigskip
		
		\begin{prooftree}
			\AxiomC{$x =_{\rho_{x}} x : A$}
			\AxiomC{$[f : \Pi_{(x : A)} B(x)]$}
			\RightLabel{$\rhd_{mxp}$ \quad $f(x) =_{\rho_{f(x)}} f(x) : B(x)$}
			\BinaryInfC{$p(x,x) \circ f(x) =_{\mu(\rho_{x})} f(x) : B(x)$}
		\end{prooftree}	
		
		\bigskip
		
		Thus, we have $s = \mu(s') = \mu(\rho) =_{mxp} \rho$.
		
		\item $s = \nu(s')$, with $s' =_{t'} \rho$.
		
		We introduce rule \textbf{46}:
		
		\bigskip
		
		\begin{prooftree}
			\AxiomC{$f =_{\rho} f : \Pi_{(x:A)}B(x)$}
			\RightLabel{$\rhd_{nxp}$ \quad $f(x) =_{\rho_{f(x)}} f(x)$}
			\UnaryInfC{$f(x) =_{\nu(\rho_{x})} f(x) : B(x)$}		
		\end{prooftree} 
		
		\bigskip
		
		Thus, $s = \nu(s') = \nu(\rho) =_{nxp} \rho$.
		
		\item $s = \xi(s')$, with $s' =_{t'} \rho$.
		
		We introduce rule \textbf{47}:
		
		\bigskip
		
		\begin{prooftree}
			\AxiomC{$b(x) =_{\rho} b(x) : B$}
			\AxiomC{$x : A$}
			\RightLabel{$\rhd_{xxp}$ \quad $\lambda x.b(x) =_{\rho} \lambda x.b(x)$}
			\BinaryInfC{$\lambda x.b(x) =_{\xi(\rho)} \lambda x.b(x) : A \rightarrow B$}	
		\end{prooftree}
		
		\bigskip
		
		Thus, $s = \xi(s') = \xi(\rho) =_{xxp} \rho$.
		
		If we consider function extensionality, this theorem still holds:
		
		\item $s = ext(s')$, with $s' =_{t'} \rho$.
		
		We introduce a new rule to handle this case. Since it is related only to extensionality (i.e., when one admits the inference rule $ext$ to the system), we do not add this to the basic rules of our system.
		
		\bigskip
		
		\begin{prooftree}
			\AxiomC{$\lambda x.(f(x) =_{\rho} f(x)) : \Pi_{(x : A)}B(x)$}
			\RightLabel{$\rhd{exp}$ \quad $f =_{\rho} f$}
			\UnaryInfC{$ f =_{ext(\rho)} f$}
		\end{prooftree}
		
		\bigskip
		
		Thus, $s = ext(s') = ext(\rho) =_{exp} \rho$.
	\end{itemize}
\end{proof}

\subsection{Natural Numbers}

The Natural Numbers is a type defined inductively by an element $0 : \mathbb N$ and a function $succ : \mathbb N \rightarrow \mathbb N$. In our approach, the path space of the naturals is also characterized inductively. We start from the reflexive path $0 =_{\rho} 0$. All subsequent paths are constructed by applications of the inference rules of $\lambda\beta\eta$-equality. We show that this characterization is similar to the one constructed in \cite{hott}. To do this, we use $code$, $encode$ and $decode$. For $\mathbb N$, we define $code$ recursively \cite{hott}:

\begin{center}
	$code(0,0) \equiv 1$\\
	$code(succ(m),0) \equiv 0$\\
	$code(0,succ(m)) \equiv 0$\\
	$code(succ(m),succ(n)) \equiv code (m,n)$
\end{center}

We also define a dependent function $r : \Pi_{(n : \mathbb N)}code(m,n)$, with:

\begin{center}
	$r(0) \equiv *$\\
	$r(succ(n)) \equiv r(n)$
\end{center}

\begin{theorem}
	For any $m,n  : \mathbb N$, if there is a path $m =_{t} n : \mathbb N$, then $t \rhd \rho$.
\end{theorem}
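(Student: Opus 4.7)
The plan is to reduce this to the reflexivity Theorem 5.25 just proved. The reason this should work is that $\mathbb{N}$ is a purely inductive type whose only term constructors are $0$ and $succ$; there are no non-trivial path constructors, as there would be in a higher inductive type such as $S^{1}$. So the only primitive paths in $\mathbb{N}$ are the reflexive paths $\rho_{k} : k = k$, and every other computational path $m =_{t} n$ must be built up from these by applications of $\sigma$, $\tau$, $\mu$, $\nu$, $\xi$ of $\lambda\beta\eta$-equality. Theorem 5.25 then delivers $t \rhd \rho$ for any such $t$.

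Concretely, I would carry out the argument by induction on the structure of $t$. The base case $t \equiv \rho_{k}$ is immediate. For $t \equiv \sigma(s)$, by IH $s \rhd \rho$ and hence $t \rhd \sigma(\rho) \rhd_{sr} \rho$; for $t \equiv \tau(s_{1},s_{2})$, by IH both subpaths reduce to $\rho$ and $t \rhd \tau(\rho,\rho) \rhd_{trr} \rho$; and for $t \equiv \mu_{f}(s)$, $t \equiv \nu(s)$, $t \equiv \xi(s)$ the rules $mxp$, $nxp$, $xxp$ introduced in the proof of Theorem 5.25 finish each case by the same pattern. Mechanically the inductive step is just a book-keeping exercise against the catalogue already built in Theorem 5.25.

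To make the induction honest I would also invoke the $code/encode/decode$ machinery set up just before the theorem, to certify that there are no paths in $\mathbb{N}$ beyond the ones the inference rules generate. Given a candidate $t : m = n$, consider $encode(m,n,t) \equiv transport^{code(m,-)}(t, r(m)) : code(m,n)$. By the recursive definition of $code$, either $m$ and $n$ are the same canonical numeral and $code(m,n) \equiv 1$, in which case the round trip $decode(encode(t))$ recovers $t$ as a composition of $\mu_{succ}$'s applied to $\rho_{0}$, to which Theorem 5.25 applies directly; or $m$ and $n$ are distinct canonical numerals, so that $code(m,n) \equiv 0$, and the elimination rule for the empty type yields $t \rhd \rho$ vacuously.

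The step I expect to be the main obstacle is precisely this last one: the canonical-forms claim underlying the exhaustivity of the structural induction on $t$. Everything above it reduces mechanically to the already-proved reflexivity theorem once the case analysis is legitimate, so the real work is in dispatching paths that are not presented in obvious syntactic form on $\mathbb{N}$, which is exactly what the $encode/decode$ pair is designed to handle.
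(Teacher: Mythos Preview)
Your core approach is exactly the paper's: invoke the reflexivity theorem, given that the path space of $\mathbb N$ is generated from $0 =_\rho 0$ by the $\lambda\beta\eta$-equality inference rules. The paper's proof is literally one sentence citing that theorem; your explicit structural induction merely unpacks what that theorem already establishes, so it is redundant but harmless.

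Your attempt to ``make the induction honest'' via $encode$/$decode$, however, is circular. The equivalence $(m=n)\simeq code(m,n)$ is the \emph{next} theorem in the paper, and its proof explicitly uses the present result (after computing $encode$ it says ``by theorem \textbf{5.11}, $p = \rho$''). So you cannot appeal to the round trip $decode(encode(t))$ recovering $t$ here without begging the question. In the paper's framework, the claim that every path in $\mathbb N$ is syntactically built from $0 =_\rho 0$ by the inference rules is not something to be derived; it is the \emph{stipulated} characterization of the path space of $\mathbb N$, stated in the paragraph immediately preceding the theorem (``We start from the reflexive path $0 =_{\rho} 0$. All subsequent paths are constructed by applications of the inference rules of $\lambda\beta\eta$-equality''). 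Drop the $encode$/$decode$ paragraph and your proof coincides with the paper's.
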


\begin{proof}
	Since all paths are constructed from the reflexive path $0 =_{\rho} 0$, this is a direct application of theorem \textbf{5.10}.
\end{proof}

\begin{theorem}
	For any $m,n : \mathbb{N}$, we have $(m = n) \simeq code(m,n)$
\end{theorem}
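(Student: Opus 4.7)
The plan is to follow the standard encode–decode strategy, which matches the pattern used earlier in the coproduct theorem. First I would define the two functions
\begin{center}
$encode : \Pi_{(m,n : \mathbb{N})}(m = n \to code(m,n))$ \quad and \quad $decode : \Pi_{(m,n : \mathbb{N})}(code(m,n) \to m = n)$.
\end{center}
For $encode$ I would set $encode(m,n,p) \equiv transport^{code(m,-)}(p, r(m))$, noting that $r(m) : code(m,m)$ is well-typed by the recursive definition of $r$. For $decode$ I would proceed by double induction on $m$ and $n$: in the $(0,0)$-case return $\rho_{0}$; in the mixed cases $(0, succ(n'))$ and $(succ(m'), 0)$ the hypothesis inhabits $0$, so decode is defined by the elimination of the empty type; in the $(succ(m'), succ(n'))$-case take the recursive $q : m' = n'$ obtained from $c : code(m',n') \equiv code(succ(m'), succ(n'))$, and return $\mu_{succ}(q)$.

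Next I would show that $decode$ is a quasi-inverse of $encode$, establishing the two homotopies required by Definitions 8 and 9. For $decode \circ encode \sim id$: given $p : m = n$, the preceding Theorem 5.25 asserts $p \rhd \rho$, so it suffices to handle $m \equiv n$ with $p \equiv \rho_{m}$. In that case $encode(m,m,\rho_{m})$ reduces, by Lemma 8 applied to the constant transport, to $r(m)$, and then a short induction on $m$ shows $decode(m,m,r(m)) \rhd \rho_{m}$ (the base case is immediate, and the inductive step uses $decode(succ(m'),succ(m'),r(m')) = \mu_{succ}(decode(m',m',r(m'))) = \mu_{succ}(\rho_{m'}) =_{mxp} \rho_{succ(m')}$ by rule $45$).

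For $encode \circ decode \sim id$ I would induct on $m$ and $n$ and on the shape of $c : code(m,n)$. The $(0,0)$-case reduces to $code(0,0) \equiv 1$, and uses Theorem 5.14 to identify any two inhabitants of $1$. The empty-type cases are vacuous. In the successor case, I would compute
\begin{center}
$encode(succ(m'),succ(n'), \mu_{succ}(decode(m',n',c))) = transport^{code(succ(m'), -)}(\mu_{succ}(q), r(succ(m')))$,
\end{center}
and then use Lemma 11 (transport along $\mu_{f}$) together with the definitional equation $code(succ(m'),succ(n')) \equiv code(m',n')$ and $r(succ(m')) \equiv r(m')$ to reduce this to $encode(m',n',q)$, whence the induction hypothesis closes the case.

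The main obstacle will be the successor step of the second homotopy: making the reduction $transport^{code(succ(m'),-)}(\mu_{succ}(q), r(m')) \rhd transport^{code(m',-)}(q, r(m'))$ precise requires carefully unfolding the definitional equalities of $code$ and $r$ on successors and then invoking Lemma 11 in the path-based formulation. Once this reduction is in hand, the induction hypothesis applied to $q : m' = n'$ closes the argument, yielding the desired equivalence $(m = n) \simeq code(m,n)$.
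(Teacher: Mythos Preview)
Your proposal is correct and follows essentially the same encode--decode route as the paper, with the same definitions of $encode$ and $decode$ and the same key ingredients (the preceding theorem that every $\mathbb{N}$-path reduces to $\rho$, rule~45/$mxp$ for the inductive step of $decode(m,m,r(m)) = \rho_m$, and Lemma~11 for the successor step of the second homotopy). One small quibble: your appeal to Lemma~8 to simplify $encode(m,m,\rho_m)$ is misplaced since $code(m,-)$ is not a constant family---what you actually need is that transport along $\rho$ is the identity, which again is just rule~$mxp$.
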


\begin{proof}
	We need to define $encode$ and $decode$ and prove that they are quasi-inverses. We define $encode : \Pi_{(m,n : \mathbb N)}(m = n) \rightarrow code(m,n)$ as:
	
	\begin{center}
		$encode(m,n,p) \equiv transport^{code(m,-)} (p, r(m))$
	\end{center}
	
	We define $decode : \Pi_{(m,n : \mathbb N)}code(m,n) \rightarrow (m = n)$ recursively:
	
	\begin{center}
		$decode(0,0, c) \equiv 0 =_{\rho} 0$\\
		$decode(succ(m), 0, c) \equiv 0$\\
		$decode(0, succ(m), c) \equiv 0)$\\
		$decode(succ(m), succ(n), c) \equiv \mu_{succ}(decode(m,n,c))$
	\end{center}
	
	We now prove that if $m =_{p} n$, then $decode(code(m,n)) = \rho$. We prove by induction. The base is trivial, since $decode(0,0,c) \equiv \rho$. Now, consider $decode(succ(m),succ(n),c)$. We have that $decode(succ(m),succ(n),c) \equiv \mu_{succ}(decode(m,n,c))$. By the inductive hypothesis, $decode(m,n,c) \equiv \rho$. Thus, we need to prove that $\mu_{succ} = \rho$. This last step is a straightforward application of \textbf{rule 47}. Therefore,  $\mu_{succ} =_{mxp} \rho$. With this information, we can start the proof of the equivalence.
	
	For any $m =_{p} n$, we have:
	
	\begin{center}
		$encode(m,n,p) \equiv transport^{code(m,-)}(p,r(m))$
	\end{center}
	
	Thus:
	
	\bigskip
	
	\begin{prooftree}
		\AxiomC{$m =_{p} n$}
		\AxiomC{$r(m) : code(m,m)$}
		\RightLabel{$=_{\mu(p)} \quad (r(n) : code(m,n))$}
		\BinaryInfC{$p(m,n) \circ r(m) : code(m,n)$}
	\end{prooftree}
	
	\bigskip
	
	Now, we know that $decode(r(n) : code(m,n)) = \rho$ and,by theorem \textbf{5.11}, $p = \rho$.
	
	The proof starting from a $c : code(m,n)$ is equal to the one presented in \cite{hott}. We prove by induction. If $m$ and $n$ are $0$, we have the trivial path $0 =_{\rho} 0$, thus $decode(0,0,c) = \rho_{0}$, whereas $encode(0,0,\rho_{0}) \equiv r(0) \equiv *$. We conclude this part recalling that every $x : 1$ is equal to $*$, since we have $ x =_{\sigma(\eta)} * : 1$. In the case of $decode(succ(m),0,c)$ or $decode(0,succ(n),c)$, $c : 0$. The only case left is for $decode(succ(m), succ(n),c)$. Similar to \cite{hott}, we prove by induction:
	
	\bigskip
	
	$encode(succ(m),succ(n), decode(succ(m),succ(n),c))$
	
	\quad \quad \quad \quad \quad \quad \quad \quad \quad \quad $= encode(succ(m),succ(n), \mu_{succ}(decode(m,n,c))$
	
	\quad \quad \quad \quad \quad \quad \quad \quad \quad \quad $= transport^{code(succ(m),-)}(\mu_{succ}(decode(m,n,c)),r(succ(m))$
	
	\quad \quad \quad \quad \quad \quad \quad \quad \quad \quad $= transport^{code(succ(m),succ(-)}(decode(m,n,c),r(succ(m)))$
	
	\quad \quad \quad \quad \quad \quad \quad \quad \quad \quad $= transport^{code(m,-)}(decode(m,n,c),r(m))$
	
	\quad \quad \quad \quad \quad \quad \quad \quad \quad \quad $= encode(m,n,decode(m,n,c))$
	
	\quad \quad \quad \quad \quad \quad \quad \quad \quad \quad $= c$
\end{proof}

\subsection{Sets and Axiom K}

In this subsection, our objective is to prove, using our computational path approach, important results related to sets. First, We define the concept of set as done traditionally in Homotopy Type Theory. Then, we show that the connection between the axiom K and sets is also valid in our approach. We use these results to show that the naturals numbers are a set. We also prove Hedberg's theorem is valid in our theory, since only a few steps of its proof differs from method developed in \cite{hott}.

We start with the definition of set \cite{hott}:

\begin{definition}
	A type $A$ is a set if for all $x,y : A$ and all $p,q : x = y$, we have $p = q$.
\end{definition}

In type theory, if a type is a set, we also say that it has the uniqueness of identity proof (UIP) property, since all proofs of the equality of two terms $x = y$ are equal.

We now introduce the following axiom, known as \textbf{axiom K}\cite{Streicher2}:

\begin{center}
	For all $x : X$ and $p : (x =_{X} x)$, we have $p = refl_{x}$.
\end{center}

Of course, this previous formulation is one familiar to classic type theory. In our approach, axiom K can be understood as the following formulation:

\begin{axiom}
	For all $x : X$ and $x =_{t} x : X$, we have $t = \rho_{x}$.
\end{axiom}

Our objective is to establish a connection between sets and axiom K. The following lemma has proved to be useful:

\begin{lemma}
	For every path $t$, there is a path $t^{-1}$ such that $t \circ t^{-1} = \rho$ and $t^{-1} \circ t = \rho$. Furthermore, $t^{-1}$ is unique up to propositional identity.
\end{lemma}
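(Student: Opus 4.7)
The plan is to take $t^{-1}$ to be the symmetric path $\sigma(t)$ and verify the two equations by direct appeal to the rewrite rules of $LND_{EQ}\text{-}TRS$ already recalled in Lemma~4. Specifically, rule \textbf{3} gives $\tau(t,\sigma(t)) \rhd_{tr} \rho$ and rule \textbf{4} gives $\tau(\sigma(t),t) \rhd_{tsr} \rho$, so from $Id$-introduction we obtain inhabitants $(tr)(\tau(t,\sigma(t)),\rho)$ and $(tsr)(\tau(\sigma(t),t),\rho)$ of the corresponding identity types. This is precisely the path-based version of the invertibility clauses of Lemma~4(2), and no further machinery is needed for existence.

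For uniqueness, I would suppose that $t'$ is another path inhabiting $\tau(t,t') =_{rw} \rho$ and $\tau(t',t) =_{rw} \rho$, and then run the standard groupoid calculation inside our syntax, using the identity, associativity and invertibility clauses of Lemma~4. Schematically,
\begin{equation*}
t' \;=\; \tau(\rho,t') \;=\; \tau(\tau(\sigma(t),t),t') \;=\; \tau(\sigma(t),\tau(t,t')) \;=\; \tau(\sigma(t),\rho) \;=\; \sigma(t),
\end{equation*}
where each equality is witnessed by one of the rewrites $tlr$, $tsr$, $tt$ (associativity, rule \textbf{37}), and $trr$ respectively, together with the hypothesis $\tau(t,t') =_{rw} \rho$. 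Chaining these witnesses by applications of $\tau$ at the outer level gives a computational path $t' =_{s} \sigma(t)$, and then $Id$-introduction yields the required term $s(t',\sigma(t)) : Id(t',\sigma(t))$, which is exactly uniqueness up to propositional identity.

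The main obstacle, and the step I expect to require the most care, is the associativity rewriting in the middle of the chain: we need to apply rule \textbf{37} ($tt$) in the correct direction so that the reassociation $\tau(\tau(\sigma(t),t),t')\; =\; \tau(\sigma(t),\tau(t,t'))$ is a legitimate $rw$-equality rather than only a contraction in one direction. This is handled by Definition~6 of $rw$-equality, which allows reversed $rw$-contractions, so the proposition follows by transitivity of $=_{rw}$ (Proposition~\ref{proposition3.7}) and a final application of $Id$-introduction to the combined computational path.
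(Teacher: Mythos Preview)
Your proposal is correct. The existence part is identical to the paper's: set $t^{-1}\equiv\sigma(t)$ and invoke rules \textbf{3} ($tr$) and \textbf{4} ($tsr$).

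For uniqueness, your route and the paper's differ slightly. The paper assumes another candidate $s$ with $\tau(t,s)=_{s'}\rho$, notes that also $\tau(t,\sigma(t))=_{tr}\rho$, concatenates to get $\tau(t,\sigma(t))=\tau(t,s)$, and then concludes $\sigma(t)=s$ by cancelling $t$ on the left; this cancellation is stated without further justification. You instead spell out the standard group\-oid chain $t'=\tau(\rho,t')=\tau(\tau(\sigma(t),t),t')=\tau(\sigma(t),\tau(t,t'))=\tau(\sigma(t),\rho)=\sigma(t)$, citing $tlr$, $tsr$, $tt$, the hypothesis, and $trr$. Your argument is in fact what one needs to justify the paper's implicit cancellation step, so it is the more explicit of the two. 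Incidentally, your concern about the direction of rule \textbf{37} is unnecessary here: the reassociation $\tau(\tau(\sigma(t),t),t')\rhd_{tt}\tau(\sigma(t),\tau(t,t'))$ is already the forward direction of $tt$, so no appeal to reversed contractions is needed at that step.
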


\begin{proof}
	We claim that $t^{-1} = \sigma(t)$. The identities are straightforward. First, we have that $t \circ t^{-1} \equiv \tau(\sigma(t), t) =_{tsr} \rho$. We also have $t^{-1} \circ t \equiv \tau(t,\sigma(t)) =_{tr} \rho$. Now, suppose we have $s$ such that $\tau(t,s) =_{s'} \rho$. Thus, we have that $\tau(t,\sigma(t)) =_{\tau(tr,\sigma(s'))} \tau(t,s)$ and thus, $\sigma(t) = s$.
\end{proof}

\begin{theorem}
	A type $X$ is a set iff it satisfies axiom K.
\end{theorem}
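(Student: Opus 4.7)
The statement is a standard equivalence between two properties of types, so the plan is to prove each direction in turn, making essential use of the groupoid laws (Lemma 4) and Lemma 5.25 on uniqueness of inverses, both of which are already at hand.

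For the forward direction, assume $X$ is a set in the sense of Definition 5.1. Given any $x : X$ and any computational path $x =_{t} x : X$, observe that we also have the reflexive path $x =_{\rho_x} x : X$. Both yield terms of $Id_X(x,x)$ via the introduction rule $Id\text{-}I$, and since $X$ is a set, any two such terms are propositionally equal, whence $t = \rho_x$. This is exactly the path-based formulation of Axiom K given in Axiom 2.

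For the reverse direction, assume Axiom K holds and fix $x,y : X$ together with paths $p, q : x =_A y$. The key move is to reduce the question to a loop at $x$: form the composition $\tau(p,\sigma(q)) : x = x$. By Axiom K, $\tau(p,\sigma(q)) = \rho_x$. From here, I would chain the groupoid identities of Lemma 4 to transport this information into the desired equation $p = q$. Explicitly, the chain is
\begin{align*}
p \;&=\; \tau(p, \rho_y) && \text{(rule \textbf{trr}, reversed)} \\
&=\; \tau\bigl(p, \tau(\sigma(q), q)\bigr) && \text{(rule \textbf{tsr}, reversed)} \\
&=\; \tau\bigl(\tau(p, \sigma(q)), q\bigr) && \text{(rule \textbf{tt}, associativity)} \\
&=\; \tau(\rho_x, q) && \text{(Axiom K applied above)} \\
&=\; q && \text{(rule \textbf{tlr}).}
\end{align*}
Each equality here is either a direct $rw$-rule of $LND_{EQ}\text{-}TRS$ or an instance of Axiom K, so composing them (via transitivity of $=_{rw}$) produces a path witnessing $p = q$. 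Since $x, y, p, q$ were arbitrary, $X$ is a set.

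The only mildly delicate point is the direction of the composition: one must pick $\tau(p,\sigma(q))$ rather than $\tau(\sigma(q), p)$, since the former fits the shape $\tau(\text{loop}, q)$ needed to cancel with $q$ at the end; the symmetric choice would cancel on the wrong side. Beyond that, the argument is purely routine manipulation inside the groupoid structure already established, and does not require any new reduction rules beyond those listed in the excerpt.
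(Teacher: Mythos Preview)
Your argument is correct. The forward direction is identical to the paper's. For the converse you take a different, slightly more direct route than the paper.

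The paper, after forming $\tau(p,\sigma(q))$ and invoking Axiom~K to get $\tau(p,\sigma(q))=\rho_x$, also invokes Axiom~K a second time on the loop $\tau(\sigma(q),p):y=y$ to get $\tau(\sigma(q),p)=\rho_y$. These two equations say that $\sigma(q)$ is a two-sided inverse of $p$, and the paper then appeals to Lemma~5.25 (uniqueness of inverses) to conclude $\sigma(q)=\sigma(p)$, whence $q=p$. Your proof bypasses Lemma~5.25 entirely: a single use of Axiom~K on $\tau(p,\sigma(q))$ followed by the explicit chain of $rw$-rules $trr$, $tsr$, $tt$, $tlr$ already yields $p=q$. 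This is more economical (one application of~K instead of two, and no auxiliary lemma), while the paper's route has the mild conceptual advantage of packaging the cancellation step into a reusable statement about inverses. One small presentational point: your opening sentence announces ``essential use of \ldots\ Lemma~5.25,'' but your actual argument never invokes it; you may want to drop that reference.
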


\begin{proof}
	First, If $X$ is a set, we want to show that it satisfies axiom K. Suppose we have a path $x =_{t} x$. From the axioms of $\lambda\beta\eta$-equality, we also have that $x =_{\rho} x$. Since $X$ is a path, it is always the case that $t = \rho$ and thus, $X$ satisfies axiom K.
	
	If $X$ satisfies axiom K, we want to show that $X$ is a set. To do this, we want to show that given paths $p,q : x = y$, then we have $p = q$. We show this in the following manner. From a path $ x =_{q} y$, we apply $\sigma$ to obtain the inverse path $y =_{\sigma(q) x} x$. Then, we can concatenate $p$ and $\sigma(q)$, obtaining a path $x =_{\tau(p,\sigma(q))} x$. By axiom K, we have $\tau(p,\sigma(q)) = \rho$. Analogously, $\tau(\sigma(q),p) = \rho$. Thus, by the previous lemma, $\sigma(q) = p^{-1} = \sigma(p)$ and thus, $q = p$. Thus, by an application of $\sigma$, $p = q$. 
\end{proof}

\begin{theorem}
	$\mathbb N$ is a set.
\end{theorem}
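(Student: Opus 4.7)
The plan is to derive this as an immediate corollary of the two preceding results: Theorem 5.24 (any path $m =_{t} n : \mathbb N$ reduces to $\rho$) and Theorem 5.26 (a type is a set iff it satisfies axiom K). So the strategy is to first establish that $\mathbb N$ satisfies axiom K, and then invoke Theorem 5.26 to conclude that $\mathbb N$ is a set.

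First I would take an arbitrary $n : \mathbb N$ together with a path $n =_{t} n : \mathbb N$. By Theorem 5.24, since $t$ is a path between two natural numbers (here both equal to $n$), we have $t \rhd \rho$, meaning $t$ $rw$-reduces to $\rho_{n}$. By the definition of $rw$-equality (and its closure into propositional identity via $Id$-$I$), this yields a witness of $Id_{Id_{\mathbb N}(n,n)}(t, \rho_{n})$. Abstracting over $n$ and $t$, this is exactly the statement of axiom K for $\mathbb N$.

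Next, I would apply the right-to-left direction of Theorem 5.26: since $\mathbb N$ satisfies axiom K, it is a set. Concretely, given any $m, n : \mathbb N$ and any two paths $p, q : m = n$, we form $\tau(p, \sigma(q))$, a path $m =_{\tau(p,\sigma(q))} m$. Axiom K gives $\tau(p, \sigma(q)) = \rho_{m}$, and the uniqueness-of-inverses argument from Lemma 5.25 then forces $p = q$ up to propositional equality, which is precisely the condition for being a set.

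The only place that requires a little care is the passage from the reduction $t \rhd \rho$ obtained in Theorem 5.24 to a propositional identity $t = \rho$ in the sense of our path-based identity type; but this is exactly the content of $rw$-equality being transitive, symmetric and reflexive (Proposition 3.7) together with the $Id$-$I$ rule, so no genuine obstacle arises. I do not expect any new rewrite rules to be needed, since both Theorems 5.24 and 5.26 have already been established.
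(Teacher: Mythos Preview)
Your proposal is correct and follows essentially the same route as the paper: the paper's proof is a two-line corollary, invoking the result that every path $m =_{t} n : \mathbb N$ reduces to $\rho$ to obtain axiom~K for $\mathbb N$, and then applying the ``set iff axiom~K'' theorem. Your added unpacking of the right-to-left direction and the remark about passing from $t \rhd \rho$ to a propositional identity are more explicit than the paper, but the underlying argument is identical.
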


\begin{proof}
	That $\mathbb N$ satisfies axiom K is a direct consequence of theorem \textbf{5.11}. Thus, from theorem \textbf{5.13}, we conclude that $\mathbb N$ is a set.
\end{proof}

In classic homotopy type theory, one can achieve the previous result by following a different path. Firstly, one should be aware of the following concept \cite{hott}:

\begin{definition}
	A type $X$ has \textbf{decidable equality} if for all $x, y : X$, the following type is inhabited:
	
	\begin{center}
		$(x = y : X) + \neg( x = y : X)$ 
	\end{center}
\end{definition}

\begin{theorem}
	If $X$ has decidable equality, then $X$ is a set.
\end{theorem}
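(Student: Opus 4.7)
The plan is to closely follow Hedberg's classical proof, with minor adaptations for the path-based framework. By Theorem~5.13 (already established), a type is a set iff it satisfies axiom K, so it suffices to show that decidable equality implies axiom K: for every $x : X$ and every path $x =_{t} x$, one has $t = \rho_{x}$.

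First, using decidable equality, I would construct, for each pair $x, y : X$, a collapsing function $g_{x,y} : Id_{X}(x,y) \rightarrow Id_{X}(x,y)$ by case analysis on $d(x,y) : (x = y : X) + \neg (x = y : X)$. When $d(x,y) = inl(q)$, set $g_{x,y}(p) := q$, ignoring the argument $p$. When $d(x,y) = inr(n)$, the input $p$ together with $n$ yields $n(p) : 0$, and the empty-type eliminator extracts any path we need. In either case, $g_{x,y}$ is constant: for all $p, p' : Id_{X}(x,y)$ we have $g_{x,y}(p) = g_{x,y}(p')$, trivially in the first case and vacuously in the second.

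The main step is to establish the key identity $p = \tau(\sigma(g_{x,x}(\rho_{x})), g_{x,y}(p))$ for every $p : Id_{X}(x,y)$. I would attempt this by applying $Id$-$E$ ($REWR$) to $p$, introducing a bound computational path $x =_{r} y$; the $\beta$-reduction of $Id$-$E$ then reduces the goal to checking the identity with $p$ replaced by the generating path, whose verification ultimately boils down to the inverse law $\tau(\sigma(q),q) = \rho$ (rule $tsr$, rule 4 of $LND_{EQ}$-$TRS$) together with the groupoid laws of Lemma~4.

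Finally, given any $t : x =_{X} x$, instantiating the key identity at $y = x$ with $p$ taken to be $t$ and $\rho_{x}$ respectively yields $t = \tau(\sigma(g_{x,x}(\rho_{x})), g_{x,x}(t))$ and $\rho_{x} = \tau(\sigma(g_{x,x}(\rho_{x})), g_{x,x}(\rho_{x}))$; constancy of $g_{x,x}$ gives $g_{x,x}(t) = g_{x,x}(\rho_{x})$, hence $t = \rho_{x}$, which is axiom K. By Theorem~5.13, $X$ is a set. The main obstacle is the key identity: standard path induction in HoTT dispatches it immediately via its base case, but $REWR$ in the path-based setting is a weaker eliminator (closer to existential elimination than to Martin-L\"of $J$), so care is required to carry out the analogue of the base-case argument entirely within the rewrite system, relying on the $\beta$-reduction of $Id$-$E$ to expose the underlying computational path and on rules $tsr$, $tlr$ and $tt$ to collapse the resulting expression.
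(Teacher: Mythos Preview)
Your proposal is essentially the paper's approach: the paper does not spell out a full proof but simply instructs the reader to follow the HoTT book's argument for Hedberg's theorem, noting that $apd$ corresponds to the dependent form of axiom $\mu$ and that the HoTT book's Lemma~2.9.6 has already been established here as Lemma~5.17. Your outline---reduce to axiom~K via Theorem~5.13, build a constant endomap $g_{x,y}$ on each path type from the decidability witness, derive the key identity, and cancel---is exactly Hedberg's argument, so the strategies coincide.

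The one place your plan drifts from the paper is the key identity $p = \tau(\sigma(g_{x,x}(\rho_x)),\, g_{x,y}(p))$. You propose to obtain it by applying $REWR$ to $p$ and then working in the rewrite system, and you rightly flag this as the delicate step, since $REWR$ does not give you a genuine base case $p \equiv \rho$. The paper's hint is aimed at precisely this point: instead of forcing $REWR$ to act like $J$, apply the dependent $\mu$ (i.e., $apd$) to the family $y \mapsto g_{x,y}$ along $p$, then use Lemma~5.17 (transport in function types) together with the transport-in-identity-types lemma (Lemma~5.20) to unwind the resulting equation into the key identity. That route stays entirely within lemmas already proved in the paper and sidesteps the weakness of $REWR$ you identified.
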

This theorem is known as Hedberg's theorem . We will not show a full proof of it, since one can follow exactly the steps established in \cite{hott}. One should only be careful to notice that the path $apd$ of classic homotopy type theory is just our application of axiom $\mu$ on a dependent function $f$ and that \textbf{lemma 2.9.6} of \cite{hott} has already been proved in this work, in the form of lemma \textbf{5.17}.

We can also use Hedberg's theorem to give an alternative proof of theorem \textbf{5.14}, one similar to the one given in classic type theory:

\begin{theorem}
	$\mathbb N$ has decidable equality and thus, is a set.
\end{theorem}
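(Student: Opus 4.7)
The plan is to prove decidable equality of $\mathbb{N}$ by double induction on $m,n : \mathbb{N}$, and then invoke Hedberg's theorem (Theorem 5.15) to conclude that $\mathbb{N}$ is a set. The four cases of the induction correspond to the four cases defining $code$ for the naturals, so the characterization from Theorem 5.12 will do most of the work.

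First, in the base case $m \equiv 0, n \equiv 0$, I inhabit the left disjunct of $(0 = 0) + \neg(0 = 0)$ directly by $inl(\rho_0)$. For the mixed cases $m \equiv 0, n \equiv succ(n')$ (and symmetrically $m \equiv succ(m'), n \equiv 0$), I must produce a function $(0 = succ(n')) \to 0$. This is immediate from the $encode$ half of Theorem 5.12: assuming a hypothetical path $p : 0 = succ(n')$, we obtain $encode(0, succ(n'), p) : code(0, succ(n'))$, but by definition $code(0, succ(n')) \equiv 0$, so we have directly the absurd element required, giving $inr$ of the disjunction.

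For the inductive step $m \equiv succ(m'), n \equiv succ(n')$, the induction hypothesis supplies a term in $(m' = n') + \neg(m' = n')$. In the $inl(q)$ subcase with $q : m' = n'$, I apply $\mu_{succ}$ to $q$ to obtain $succ(m') =_{\mu_{succ}(q)} succ(n')$, and choose $inl$. In the $inr(f)$ subcase with $f : (m' = n') \to 0$, I need to produce a function $(succ(m') = succ(n')) \to 0$: given a hypothesized $p : succ(m') = succ(n')$, I apply $encode$ to obtain a term of $code(succ(m'), succ(n')) \equiv code(m', n')$, then apply $decode(m', n', -)$ to recover a path $m' = n'$, and finally apply $f$ to obtain the element of $0$. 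This completes the construction of decidable equality.

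Finally, decidable equality implies that $\mathbb{N}$ is a set by Theorem 5.15 (Hedberg), whose adaptation to the path-based approach was justified immediately before the present theorem. The main obstacle I anticipate is not any single case in isolation but rather the bookkeeping of packaging the four cases into a term inhabiting the $\Pi_{m,n : \mathbb{N}}\bigl((m = n) + \neg(m = n)\bigr)$ type via the recursor for $\mathbb{N}$; each individual case is short, and the $encode$/$decode$ machinery from Theorem 5.12 removes the need to reason about non-existent paths directly.
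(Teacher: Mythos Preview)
Your proposal is correct and follows essentially the same approach as the paper: both arguments perform the four-way case split on $m,n$, use $encode$ to refute the mixed $0$/$succ$ cases, use $\mu_{succ}$ in the positive successor case, use the $encode$-then-$decode$ composite (injectivity of $succ$) in the negative successor case, and then invoke Hedberg's theorem. The only cosmetic difference is that the paper phrases the induction as induction on the first argument with an inner case split on the second, whereas you call it a double induction; the resulting case analysis and the use of Theorem~5.12's machinery are identical.
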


\begin{proof}
	For any $x,y : \mathbb N$, we want to show that $(x = y) + \neg(x = y)$ is inhabited. We proceed by induction in $x$. For the base case, we have $x = 0$. If $y = 0$, then we have $0 =_{\rho} 0$. If $y = succ(n)$, we use the $encode$ type for $\mathbb N$. We have that $encode(0, succ(n)) :(0 = succ(n)) \rightarrow \textbf{0}$ and thus, $encode(0,succ(n)) : \neg(0 =succ(n))$.
	
	For the inductive step, we consider $ x = succ(m)$. If $y = 0$, then we can use $encode$ again to obtain $\neg(succ(m) = 0)$. If $y = succ(n)$, by the inductive hypothesis, we have two more cases to consider. If $m = n$, then we apply axiom $\mu$, and thus $succ(m) =_{\mu_{succ}} succ(n)$. If $\neg(m = n)$, then we just need to show that $succ$ is injective to obtain $\neg(succ(m) = succ(n))$. But that $succ$ is injective is a direct consequence of applying encode and then decode to $succ(m) = succ(n)$, since from that we conclude $m = n$. Therefore, from  $\neg(succ(m) = succ(n))$ we conclude $\neg(m = n)$.
\end{proof}

\subsection{Fundamental Group of a Circle}

The objective of this section is to show that it is possible to use computational paths to obtain one of the main results of homotopy theory, the fact that the fundamental group of a circle is isomorphic to the integers group. First, we define a circle as follows:

\begin{definition}[The circle $S^1$]
	A circle is the type generated by:
	
	\begin{itemize}
		\item A point $base : S^1$
		\item A computational path $base =_{loop} base : S^1$.
	\end{itemize}
\end{definition}

The first thing one should notice is that this definition doest not use only the points of the type $S^1$, but also a computational path $loop$ between those points. That is way it is called a higher inductive type \cite{hott}. Our approach differs from the classic one on the fact that we do not need to simulate the path-space between those points, since computational paths exist in the syntax of the theory. Thus, if one starts with a path  $base =_{loop} base : S^1$., one can naturally obtain additional paths applying the path-axioms $\rho$, $\tau$ and $\sigma$.  Thus, one has a path $\sigma(loop) = loop^{-1}$, $\tau(loop, loop)$, etc. In classic type theory, the existence of those additional paths comes from establishing that the paths should be freely generated by the constructors \cite{hott}. In our approach, we do not have to appeal for this kind of argument, since all paths comes naturally from direct applications of the axioms.

With that in mind, one can define the fundamental group of a circle. In homotopy theory, the fundamental group is the one formed by all equivalence classes up to homotopy of paths (loops) starting from a point $a$ and also ending at $a$. Since the we use computational paths as the syntax counterpart in type theory of homotopic paths, we use it to propose the following definition:

\begin{definition}[$\Pi_{1}(A,a)$ structure]
	$\Pi_{1}(A,a)$ is a structure defined as follows:
	
	\begin{center}
		$\Pi_{1}(A, a) = \{[loop]_{rw} \mid a =_{loop} a: A\}$
	\end{center}
\end{definition}

We use this structure to define the fundamental group of a circle. We also need to show that it is indeed a group.

\begin{proposition}
	$(\Pi_{1}(S,a), \circ)$ is a group.
\end{proposition}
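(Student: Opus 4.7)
The plan is to verify the four group axioms (well-defined binary operation, associativity, identity, inverses) on the set of $rw$-equivalence classes of loops at $base$, with $\circ$ realized by the transitivity constructor $\tau$. All the needed rewrites are already in $LND_{EQ}$-TRS (and have been invoked in Lemma 4), so most of the verification is a direct appeal to previously stated rules.

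First I would check that $\circ$ is a well-defined operation on equivalence classes. Closure is immediate: if $base =_{p} base$ and $base =_{q} base$, then $base =_{\tau(p,q)} base$, so $[\tau(p,q)]_{rw}$ is again a loop at $base$. To see that the operation descends to classes, I must show the congruence property: if $p =_{rw} p'$ and $q =_{rw} q'$, then $\tau(p,q) =_{rw} \tau(p',q')$. This follows because any $1rw$-contraction inside a subterm lifts to a $1rw$-contraction of the whole $\tau$-term, so the reflexive--symmetric--transitive closure $=_{rw}$ is compatible with the $\tau$ constructor in each argument.

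Next, I would dispatch the remaining three axioms by citing the rules already packaged into Lemma 4. Associativity, $[(r \circ q) \circ p] = [r \circ (q \circ p)]$, is rule \textbf{37} ($tt$) used in Lemma 4(4): $\tau(\tau(p,q),r) \rhd_{tt} \tau(p,\tau(q,r))$. The identity element is $[\rho_{base}]_{rw}$, with $[p] \circ [\rho_{base}] = [p] = [\rho_{base}] \circ [p]$ arising from rules \textbf{5} ($trr$) and \textbf{6} ($tlr$), as in Lemma 4(1). Inverses are given by $[p]^{-1} := [\sigma(p)]_{rw}$, and the two inverse laws $[p] \circ [\sigma(p)] = [\rho_{base}] = [\sigma(p)] \circ [p]$ come from rules \textbf{3} ($tr$) and \textbf{4} ($tsr$), used in Lemma 4(2). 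No new reduction rule is needed.

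The only genuinely subtle point — and the main obstacle — is the well-definedness paragraph above, namely the congruence of $=_{rw}$ with respect to $\tau$. Everything else reduces to plugging $rw$-contractions from $LND_{EQ}$-TRS into a schematic verification of the group laws on representatives. Once congruence is established, the group structure on $\Pi_{1}(S^{1},base)$ is essentially the loop-at-$base$ specialization of the groupoid structure already shown in Lemma 4.
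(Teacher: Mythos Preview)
Your proposal is correct and follows essentially the same route as the paper: define $\circ$ via $\tau$, then discharge closure, identity, inverses, and associativity by invoking rules $trr$/$tlr$, $tr$/$tsr$, and $tt$ respectively (exactly the rules packaged in Lemma~4). The one place you go beyond the paper is the explicit congruence check that $\tau$ respects $=_{rw}$ so that $\circ$ is well-defined on classes; the paper simply asserts ``up to $rw$-equality, the equalities hold strictly'' without isolating this step, so your version is slightly more careful but not a different argument.
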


\begin{proof}
	The first thing to define is the group operation $\circ$. Given any $a =_{r} a : S^1$ and $a =_{t} a : S^1$, we define $r \circ s$ as $\tau(s,r)$. Thus, we now need to check the group conditions:
	
	\begin{itemize}
		
		\item Closure: Given $a =_{r} a : S^1$ and $a =_{t} a : S^1$, $r \circ s$ must be a member of the group. Indeed, $r \circ s = \tau(s,r)$ is a computational path $a =_{\tau(s,r)} a : S^1$.
		\bigskip
		\item Inverse: Every member of the group must have an inverse. Indeed, if we have a path $r$, we can apply $\sigma(r)$. We claim that $\sigma(r)$ is the inverse of $r$, since we have:
		
		\begin{center}
			$\sigma(r) \circ r = \tau(r, \sigma(r)) =_{tr} \rho$
			
			$r \circ \sigma(r) = \tau(\sigma(r), r) =_{tsr} \rho$
		\end{center}
		
		Since we are working up to $rw$-equality, the equalities hold strictly.
		
		\item Identity: We use the path $a =_{\rho} a : S^1$ as the identity. Indeed, we have:
		
		\begin{center}
			$r \circ \rho = \tau(\rho,r) =_{tlr} r$
			
			$\rho \circ r = \tau(r,\rho) =_{trr} r$.
		\end{center}
		
		\item Associativity: Given any members of the group $a =_{r} a : S^1$, $a =_{t} a$ and $a =_{s} a$, we want that $r \circ (s \circ t) = (r \circ s) \circ t$:
		
		\begin{center}
			$r \circ (s \circ t) = \tau(\tau(t,s), r) =_{tt} \tau(t,\tau(s,r)) = (r \circ s) \circ t$
		\end{center}
		
	\end{itemize}
	
	All conditions have been satisfied. $(\Pi_{1}(S,a), \circ)$ is a group.
\end{proof}

Thus, 	$(\Pi_{1}(S,a), \circ)$ is indeed a group. We call this group the fundamental group of $S^1$. Therefore, the objective of this section is to show that $\Pi_{1}(S,a) \simeq \mathbb{Z}$.

Before we start developing this proof, the following lemma will prove to be useful:

\begin{lemma}
	All paths generated by a path $a =_{loop} a$ are $rw$-equal to a path $loop^{n}$, for a $n \in \mathbb Z$.
\end{lemma}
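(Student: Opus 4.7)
The plan is structural induction on how the path is generated, together with two normalisation lemmas about powers of $loop$. Define $loop^n$ for $n \in \mathbb Z$ by $loop^0 \equiv \rho_a$, $loop^{n+1} \equiv \tau(loop^n, loop)$ for $n \geq 0$, and $loop^{-(n+1)} \equiv \tau(loop^{-n}, \sigma(loop))$ for $n \geq 0$. The base cases of the induction are $\rho_a$ itself (which is $loop^0$) and $loop$ (which is $loop^1$). For the inductive step there are only two path-building operations that genuinely produce new paths in $Id_{S^1}(base,base)$: applications of $\sigma$ and of $\tau$. The remaining axioms $\mu$, $\nu$, $\xi$ play no role here because $S^1$ carries no function structure on which to apply them — the only generating data is the point $base$ and the path $loop$, so no $\lambda$-abstraction or application into $S^1$ is available.

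To handle the $\sigma$ step I would prove the auxiliary claim $\sigma(loop^n) =_{rw} loop^{-n}$ by induction on $|n|$, using the rule of $LND_{EQ}$-TRS that distributes symmetry over transitivity (i.e.\ $\sigma(\tau(p,q)) \rhd_{rw} \tau(\sigma(q),\sigma(p))$), together with the rule $ss$ for $\sigma(\sigma(p)) \rhd_{rw} p$ and the reduction $\sigma(\rho) \rhd_{rw} \rho$. To handle the $\tau$ step I would prove $\tau(loop^n, loop^m) =_{rw} loop^{n+m}$. The argument is first to flatten with associativity (rule $tt$), then to eliminate adjacent occurrences of the form $\tau(loop,\sigma(loop))$ or $\tau(\sigma(loop),loop)$ via the cancellation rules $tr$ and $tsr$, absorbing any $\rho$-neighbours produced in the process via $trr$ and $tlr$. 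Once these two lemmas are in hand, the induction closes at once: if $s =_{rw} loop^n$ and $t =_{rw} loop^m$, then $\sigma(s) =_{rw} \sigma(loop^n) =_{rw} loop^{-n}$ and $\tau(s,t) =_{rw} \tau(loop^n, loop^m) =_{rw} loop^{n+m}$.

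The main obstacle is the bookkeeping in the second auxiliary lemma: I must show that repeatedly reassociating a concatenation of $loop$'s and $\sigma(loop)$'s and cancelling adjacent inverse pairs really terminates in the canonical form $loop^{n+m}$, and not in some other reduced word. I would handle this by induction on the total number of occurrences of $loop$ and $\sigma(loop)$, maintaining as invariant that the algebraic sum of the signs of the remaining factors is $n+m$; confluence and termination of $LND_{EQ}$-TRS (cited at the end of Section 4) guarantee that the normal form is unique, so at termination the word must literally be $loop^{n+m}$. A small secondary obstacle is justifying cleanly that the axioms $\mu$, $\nu$, $\xi$ contribute nothing beyond what $\rho$, $\sigma$, $\tau$ already produce; in the present setting this is immediate, since any such constructor would require a function expression whose sole argument structure comes from $base$ and $loop$, and any resulting path is either reflexive (and hence covered by Theorem 5.10) or lies outside $Id_{S^1}(base,base)$ altogether.
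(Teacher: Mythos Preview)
Your proposal is correct and is in fact more careful than the paper's own argument. The paper does not carry out a full structural induction over the path constructors $\sigma$ and $\tau$; it tacitly assumes every generated path is already a flat word $loop^{\pm 1}\circ\cdots\circ loop^{\pm 1}$ and then inducts on word length, showing that appending one more factor $loop$ or $\sigma(loop)$ to something already of the form $loop^{n}$ yields $loop^{n\pm 1}$ (using only $tt$, $tr$, $tsr$, $tlr$, $trr$). Your two auxiliary lemmas, $\sigma(loop^{n}) =_{rw} loop^{-n}$ and $\tau(loop^{n},loop^{m}) =_{rw} loop^{n+m}$, are precisely what is needed to close the cases the paper glosses over, namely applying $\sigma$ to a compound path and concatenating two compound paths. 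What the paper's route buys is brevity: six one-line cases instead of two sublemmas; what your route buys is an honest match between the induction and the actual generation scheme.

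One small remark on your second auxiliary lemma: the appeal to confluence and termination of $LND_{EQ}$-$TRS$ is legitimate, but note that $loop^{k}$ as you define it (left-associated via $\tau(loop^{k-1},loop)$) is not itself a normal form under rule $tt$, so ``the normal form must literally be $loop^{n+m}$'' is not quite right. Your signed-sum invariant plus the observation that a fully reduced word has all factors of the same sign already gives $rw$-equality with $loop^{n+m}$, which is all the statement requires; just phrase the conclusion as equality of $rw$-classes rather than identity of normal forms.
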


We have said that from a $loop$, one freely generate different paths applying the composition $\tau$ and the symmetry. Thus, one can, for example, obtain something such as $loop \circ loop \circ loop^{-1} \circ loop...$. Our objective with this lemma is to show that, in fact, this path can be reduced to a path of the form $loop^{n}$, for $n \in \mathbb Z$.

\begin{proof}
	The idea is to proceed by induction on the number $n$ of loops, i.e., $loop^{n}$. We start from a base $\rho$. For the base case, it is trivially true, since we define it to be equal to $loop^{0}$. From $\rho$, one can construct more complex paths by composing with $loop$ or $\sigma(loop)$ on each step.
	We have the following induction steps:
	
	\begin{itemize}
		\item A path of the form $\rho$ concatenated with $loop$: We have $\rho \circ loop = \tau(loop,\rho) =_{trr} loop = loop^{1}$;
		\bigskip
		\item A path of the form $\rho$ concatenated with $\sigma(loop)$: We have $\rho \circ \sigma(loop) = \tau(\sigma(loop),\rho) =_{trr} = \sigma(loop) = loop^{-1}$
		\bigskip
		\item A path of the form $loop^{n}$ concatenated with $loop$: We have $loop^{n} \circ loop = loop^{n+1}$.
		\bigskip
		\item A path of the form $loop^{n}$ concatenated with $\sigma(loop)$: We have $loop^{n} \circ \sigma(loop)$ $= (loop^{n-1} \circ loop) \circ \sigma(loop) =_{tt} loop^{n-1} \circ (loop \circ \sigma(loop)) =$ $loop^{n-1} \circ (\tau(\sigma(loop), loop)) =_{tsr} = loop^{n-1} \circ \rho = \tau(\rho, loop^{n-1}) =_{tlr} loop^{n-1}$
		\bigskip
		\item A path of the form $loop^{-n}$ concatenated with $loop$: We have $loop^{-n}$ = $loop^{-(n - 1)} \circ loop^{-1} = loop^{-(n - 1)} \circ \sigma(loop)$. Thus, we have $(loop^{-(n - 1)} \circ \sigma(loop)) \circ loop$ $=_{tt}$ $loop^{-(n - 1)} \circ (\sigma(loop) \circ loop)$ $=$ $loop^{-(n-1)} \circ \tau(loop,\sigma(loop)) =_{tr}$ $=$ $loop^{-(n-1)} \circ \rho = \tau(\rho,loop^{-(n-1)}) =_{tlr} loop^{-(n-1)}$.
		\bigskip
		\item a path of the form $loop^{-n}$ concatenated with $\sigma(loop)$: We have $loop^{-n} \circ loop^{-1} = loop^{-(n + 1)}$ 
	\end{itemize}
	
	Thus, every path is of the form $loop^{n}$, with $n \in \mathbb Z$.
\end{proof}

This lemma shows that every path of the fundamental group can be represented by a path of the form $loop^{n}$, with $n \in \mathbb Z$.

\begin{theorem}
	$\Pi_{1}(S,a) \simeq \mathbb{Z}$
\end{theorem}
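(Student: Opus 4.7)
The plan is to exhibit an explicit group isomorphism $\psi : \mathbb{Z} \to \Pi_{1}(S^1, a)$ defined by $\psi(n) = [loop^n]_{rw}$, where $loop^0 := \rho$, $loop^{n} := loop \circ loop^{n-1}$ for $n>0$, and $loop^{-n} := \sigma(loop)^{n}$. The preceding lemma immediately gives surjectivity of $\psi$, and the group-homomorphism property reduces to the identities $loop^{m} \circ loop^{n} =_{rw} loop^{m+n}$, which follow from the associativity rule $tt$, the reflexive-absorption rules $tlr$ and $trr$, and the inverse-cancellation rules $tr$ and $tsr$ of $LND_{EQ}$-$TRS$ already invoked in the proof that $(\Pi_{1}(S^1, a), \circ)$ is a group.

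The real content is injectivity: showing that $[loop^{n}]_{rw} \neq [loop^{m}]_{rw}$ whenever $n \neq m$. I would follow the encode--decode strategy used earlier for the natural numbers, but now with $\mathbb{Z}$ as the fibre over the basepoint. Concretely, define a type family $code : S^{1} \to U$ by $code(base) := \mathbb{Z}$ and specify that transport along $loop$ acts as the successor auto-equivalence $succ : \mathbb{Z} \simeq \mathbb{Z}$; the passage from the equivalence to an actual identification $\mathbb{Z} = \mathbb{Z}$ invokes the univalence axiom together with the earlier lemma linking $transport$ with $idtoeqv$. The proposed left inverse is then $\phi([r]) := transport^{code}(r, 0)$.

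To finish, I would verify $\phi \circ \psi = \mathrm{id}_{\mathbb{Z}}$ by induction on $n$: the base case is $transport^{code}(\rho, 0) = 0$; the positive step unfolds $\phi([loop^{n+1}]) = transport^{code}(loop, transport^{code}(loop^{n}, 0)) = succ(n) = n+1$ using functoriality of transport and the defining action of $code$ on $loop$; and the negative step is symmetric, using $\sigma(loop)$ together with the fact that transport along an inverse inverts the associated equivalence. Combined with the surjectivity supplied by the preceding lemma this forces $\psi \circ \phi = \mathrm{id}$ as well, and the group-homomorphism property promotes the bijection to an isomorphism.

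The main obstacle is the construction of $code$ itself: turning the successor self-equivalence of $\mathbb{Z}$ into an identification $\mathbb{Z} = \mathbb{Z}$ so that $transport$ can act non-trivially along $loop$. This step is where univalence is indispensable in the path-based setting just as in classical homotopy type theory, and it is the only essential use of Axiom 1 in the argument. Once $code$ is in place, every remaining calculation is a routine application of the transport and functoriality lemmas established earlier in Section 5, together with reductions from $LND_{EQ}$-$TRS$.
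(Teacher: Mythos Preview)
Your proposal is correct but follows a genuinely different route from the paper. The paper explicitly rejects the encode--decode strategy, remarking that because computational paths are syntactic one ``does not need to rely on this kind of approach to simulate a path-space''. Instead it defines two mutually inverse maps $toPath:\mathbb Z\to\Pi_1(S,a)$ and $toInteger:\Pi_1(S,a)\to\mathbb Z$ directly: the preceding lemma is used not merely for surjectivity but to justify that every element of $\Pi_1(S,a)$ already \emph{is} some $[loop^n]_{rw}$, so $toInteger$ can be specified by recursion on that normal form (sending $loop^n$ to $n$ via $succ$/$pred$), and the inverse check is declared routine. No univalence, no $code$ fibration, no transport computation.

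The trade-off is exactly at the point you call ``the real content''. Your argument earns injectivity honestly: you build the universal cover $code$ via univalence and compute the winding number by transport, so $loop^n =_{rw} loop^m \Rightarrow n=m$ is a theorem. The paper's argument, by contrast, leans on the informal reading of ``freely generated'' to treat the exponent $n$ in the normal form $loop^n$ as well-defined, and thus never isolates injectivity as a separate obligation. Your approach is heavier (it needs Axiom~1 and an unstated recursion principle for $S^1$) but more rigorous on precisely the step the paper glosses over; the paper's approach is lighter and showcases the point that the $rw$-rewriting system already does the normalisation work that encode--decode would otherwise perform.
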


To prove this theorem, one could use the approach proposed in \cite{hott}, defining an encode and decode functions. Nevertheless, since our computational paths are part of the syntax, one does not need to rely on this kind of approach to simulate a path-space, we can work directly with the concept of path.

\begin{proof}
	The proof is done by establishing a function from $\Pi_{1}(S,a)$ to $\mathbb{Z}$ and then an inverse from $\mathbb{Z}$ to $\Pi_{1}(S,a)$. Since we have access to the previous lemma, this task is not too difficult. The main idea is that the $n$ on $loop^{n}$ means the amount of times one goes around the circle, while the sign gives the direction (clockwise or anti-clockwise). In other words, it is the $winding$ number. Since we have shown that every path of the fundamental group is of the form $loop^{n}$, with $n \in \mathbb Z$, then we just need to translate $loop^{n}$ to an integer $n$ and an integer $n$ to a path $loop^{n}$. We define two functions, $toInteger: \Pi_{1}(S,a) \rightarrow \mathbb Z$ and $toPath: \mathbb Z \rightarrow \Pi_{1}(S,a)$:
	
	\begin{itemize}
		\item $toInteger$: To define this function, we use the help of two functions defined in $\mathbb Z$: the successor function $succ$ and the predecessor function $pred$. We define $toInteger$ as follows. Of course, we use directly the fact that every path of $\Pi_{1}(S,a)$ is of the form $loop^{n}$ with $n \in \mathbb Z$:
		
		\begin{equation*}
		toInteger: \begin{cases}
		toInteger(loop^n \equiv \rho) = 0 \quad \quad \quad \quad \quad \quad \quad \quad \quad \quad \enskip n = 0          \\
		toInteger(loop^{n}) = succ(toInteger(loop^{n-1})) \quad \quad n > 0  \\
		toInteger(loop^{n}) = pred(toInteger(loop^{n+1})) \quad \quad n < 0 \\
		\end{cases}
		\end{equation*}
		
		\item $toPath$: We just need to transform an integer $n$ into a path $loop^{n}$:
		
		\begin{equation*}
		toPath: \begin{cases}
		toPath(n) = \rho \quad \quad \quad \quad \quad \quad \quad \quad \quad\quad \enskip n = 0 \\
		toPath(n) = toPath(n - 1) \circ loop \quad \quad n > 0 \\
		toPath(n) = toPath(n + 1) \circ \sigma(loop) \quad n < 0 \\
		\end{cases}
		\end{equation*}
	\end{itemize}
	
	That they are inverses is a straightforward check. Therefore, we have $\Pi_{1}(S,a) \simeq \mathbb{Z}$.
\end{proof}

\subsection{Rules Added to $LND_{EQ}-TRS$}

In this section, we have introduced  $7$ new rules to the $LND_{EQ}-TRS$ system. It is the following list of rules:
\bigskip

\noindent 40. $\tau(\mu(r),\mu(s)) =_{tf} \mu(\tau(r,s))$\\
41. $\mu_{g}(\mu_{f}(p)) =_{cf} \mu_{g \circ f}(p)$\\
42. $\mu_{Id_{A}}(p) =_{ci} p$\\
43. $\tau(H_{f,g}(x), \mu_{g}(p)) =_{hp} \tau(\mu_{f}(p),H_{f,g}(y))$\\
44. $\mu_{f}(\epsilon_{\land}(p,q)) =_{mxc} \epsilon_{\land} (\mu_{g}(p), \mu_{h}(q))$\\	
45. $\mu_{f}(\rho_{x}) =_{mxp} \rho_{f(x)}$\\
46. $\nu(\rho_{x}) =_{nxp} \rho_{f(x)}$\\
47. $\xi(\rho) =_{xxp} \rho$\\

\bigskip

Moreover, if one adds extensionality to the theory, one winds up with three additional rules:
\bigskip

\noindent $\nu(ext(t)) =_{extl} t$\\
$\mu_{f}(\rho_{x}) =_{mxp} \rho_{f(x)}$\\
$ext(\rho) =_{exp} \rho$.

\section{Conclusion}
In this work, we connected our computational path approach to homotopy type theory. Using the algebra of computational paths, we have established important results of Homotopy Type Theory. That way, we have shown that our approach yields the main building blocks of Homotopy Type Theory, on par with the classic approach. We have also improved the rewrite system, adding new reduction rules. Indeed, we have ended this work with one of the most classic results of algebraic topology, the fact that the fundamental group of the circle is isomorphic to the group of the integers. 

In view of all results achieved in this work, we have developed a valid alternative approach to the identity type and homotopy type theory, based on this algebra of paths. We also believe that we have opened the way, in future works, for possible expansions of this results, formulating and proving even more intricate concepts and theorems of homotopy type theory using computational paths. 

\bibliographystyle{plain}
\bibliography{ref1}	

\newpage
\appendix

\section{Subterm Substitution}

In Equational Logic, the sub-term substitution is given by the following inference rule \cite{Ruy2}:
\begin{center}
	\begin{bprooftree}
		\AxiomC{$s = t$ }
		\UnaryInfC{$s\theta = t\theta$}
	\end{bprooftree}
\end{center}

One problem is that such rule does not respect the sub-formula property. To deal with that, \cite{chenadec} proposes two inference rules:

\begin{center}
	\begin{bprooftree}
		\AxiomC{$M = N$}
		\AxiomC{$C[N] = O$}
		\RightLabel{$IL$ \quad}
		\BinaryInfC{$C[M] = O$}
	\end{bprooftree}
	\begin{bprooftree}
		\AxiomC{$M = C[N]$}
		\AxiomC{$N = O$}
		\RightLabel{$IR$ \quad}
		\BinaryInfC{$M = C[O]$}
	\end{bprooftree}
\end{center}

\noindent where M, N and O are terms.

As proposed in \cite{Ruy1}, we can define similar rules using computational paths, as follows:

\begin{center}
	\begin{bprooftree}
		\AxiomC{$x =_r {\cal C}[y]: A$}
		\AxiomC{$y =_s u : A'$}
		\BinaryInfC{$x =_{{\tt sub}_{\tt L}(r,s)} {\cal C}[u]: A$}
	\end{bprooftree}
	\begin{bprooftree}
		\AxiomC{$x =_r w : A'$}
		\AxiomC{${\cal C}[w]=_s u : A$}
		\BinaryInfC{${\cal C}[x]=_{{\tt sub}_{\tt R}(r,s)} u : A$}
	\end{bprooftree}
\end{center}

\noindent where $C$ is the context in which the sub-term detached by '[ ]' appears and $A'$ could be a sub-domain of $A$, equal to $A$ or disjoint to $A$.

In the rule above, ${\cal C}[u]$ should be understood as the result of replacing every occurrence of $y$ by $u$ in $C$.

\section{List of Rewrite Rules}

We present the rewrite rules of $LND_{EQ}-TRS$. They are as follows (We show only the original 39 rules as proposed by \cite{Anjo1} and \cite{Ruy1}. The new rules added to the system appears in the end of section 5):
\\

\noindent 1. $\sigma(\rho) \triangleright_{sr} \rho$ \\ 
2. $\sigma(\sigma(r)) \triangleright_{ss} r$\\ 
3. $\tau({\cal C}[r] , {\cal C}[\sigma(r)]) \triangleright_{tr}  {\cal C }[\rho]$\\ 
4. $\tau({\cal C}[\sigma(r)], {\cal C}[r]) \triangleright_{tsr} {\cal C}[\rho]$\\ 
5. $\tau({\cal C}[r], {\cal C}[\rho]) \triangleright_{trr} {\cal C}[r]$\\ 
6. $\tau({\cal C}[\rho], {\cal C}[r]) \triangleright_{tlr} {\cal C}[r]$ \\ 
7. ${\tt sub_L}({\cal C}[r], {\cal C}[\rho]) \triangleright_{slr} {\cal C}[r]$\\ 
8. ${\tt sub_R}({\cal C}[\rho], {\cal C}[r]) \triangleright_{srr} {\cal C}[r]$ \\
9. ${\tt sub_L} ({\tt sub_L} (s, {\cal C}[r]), {\cal C}[\sigma(r)]) \triangleright_{sls} s$\\
10. ${\tt sub_L} ( {\tt sub_L} (s , {\cal C}[\sigma(r)]) , {\cal C}[r]) \triangleright_{slss} s$\\ 
11. ${\tt sub_R} ({\cal C}[s], {\tt sub_R} ({\cal C}[\sigma(s)],r)) \triangleright_{srs} r$\\ 
12. ${\tt sub_R} ({\cal C}[\sigma(s)], {\tt sub_R} ({\cal C}[s] ,  r )) \triangleright_{srrr} r$\\ 
13. 
$\mu_1 ( \xi_1 ( r))\triangleright_{mx2l1} r$\\
14. $\mu_1 ( \xi_\land ( r,s))\triangleright_{mx2l2} r$\\
15.
$\mu_2 ( \xi_\land ( r,s))\triangleright_{mx2r1} s$\\
16.
$\mu_2 ( \xi_2 ( s))\triangleright_{mx2r2} s$\\
17. 
$\mu ( \xi_1 (r) , s , u) \triangleright_{mx3l} s$\\ 
18. 
$\mu (\xi_2 (r) , s , u) \triangleright_{mx3r} u$\\ 
19.
$\nu (\xi (r)) \triangleright_{mxl} r$\\ 
20.
$\mu (\xi_2 (r) , s) \triangleright_{mxr} s$\\ 
21.
$\xi ( \mu_1 (r),\mu_2(r) ) \triangleright_{mx} r$ \\ 
22.
$\mu ( t, \xi_1 (r), \xi_2 (s)) \triangleright_{mxx} t$ \\ 
23. 
$\xi ( \nu (r) ) \triangleright_{xmr} r$ \\ 
24. 
$\mu (s,\xi_2 (r)) \triangleright_{mx1r} s$\\ 
25. $\sigma(\tau(r,s)) \triangleright_{stss} \tau(\sigma(s),  \sigma(r))$\\ 
26. $\sigma({\tt sub_L}(r,s)) \triangleright_{ssbl} {\tt sub_R}(\sigma(s), \sigma(r))$\\ 
27. $\sigma ({\tt sub_R} (r,s)) \triangleright_{ssbr} {\tt sub_L} (\sigma
(s),  \sigma (r))$\\ 
28. $\sigma(\xi (r)) \triangleright_{sx} \xi ( \sigma(r))$\\ 
29. $\sigma(\xi (s, r)) \triangleright_{sxss} \xi ( \sigma(s),  \sigma(r))$\\ 
30. $\sigma(\mu (r)) \triangleright_{sm} \mu ( \sigma(r))$\\ 
31. $\sigma(\mu (s, r)) \triangleright_{smss} \mu (\sigma(s),  \sigma(r))$\\ 
32. $\sigma(\mu (r,u,v)) \triangleright_{smsss} \mu ( \sigma(r),\sigma(u),\sigma(v))$\\
33. $\tau (r, {\tt sub_L} (\rho , s)) \triangleright_{tsbll} {\tt sub_L}  (r,s)$\\ 
34. $\tau (r, {\tt sub_R} (s, \rho)) \triangleright_{tsbrl}  {\tt 
	sub_L} (r,s)$\\ 
35. $\tau({\tt sub_L}(r,s),t) \triangleright_{tsblr} \tau (r, {\tt 
	sub_R} (s,t))$\\ 
36. $\tau ({\tt sub_R} (s,t),u) \triangleright_{tsbrr} {\tt sub_R} (s, \tau  (t,u))$\\ 
37. $\tau(\tau(t,r),s) \triangleright_{tt} \tau(t,\tau (r,s)) $\\
38. $\tau ({\cal C}[u], \tau ({\cal C}[\sigma(u)] , v)) \triangleright_{tts} v$\\
39. $\tau ({\cal C}[\sigma(u)] , \tau ({\cal C}[u] , v)) \triangleright_{tst} u$.

\bigskip

\end{document}